\newtheorem{thm}{Theorem}
\newtheorem{defin}[thm]{Definition}
\newtheorem{lem}[thm]{Lemma}
\newtheorem{assum}[thm]{Assumption}
\newtheorem{rem}[thm]{Remark}
\newtheorem{cor}[thm]{Corollary}
\newtheorem{prop}[thm]{Proposition}
\begin{document}
	\title{Overlapping and nonoverlapping models}
	\author{\name Huan Qing \email qinghuan@cumt.edu.cn \\
		\addr School of Mathematics\\
		China University of Mining and Technology\\
		Xuzhou, 221116, P.R. China}
	\editor{}
	\maketitle
\begin{abstract}
Consider a directed network with $K_{r}$ row communities and $K_{c}$ column communities. Previous works found that modeling directed networks in which all nodes have overlapping property requires $K_{r}=K_{c}$ for identifiability. In this paper, we propose an overlapping and nonoverlapping model to study directed networks in which row nodes have overlapping property while column nodes do not. The proposed model is identifiable when $K_{r}\leq K_{c}$. Meanwhile, we provide one identifiable model as extension of ONM to model directed networks with variation in node degree. Two spectral algorithms with theoretical guarantee on consistent estimations are designed to fit the models. A small scale of numerical studies are used to illustrate the algorithms.
\end{abstract}
\begin{keywords}
Community detection, directed networks, spectral clustering, asymptotic analysis, SVD.
\end{keywords}
\section{Introduction}\label{sec1}
In the study of social networks, various models have been proposed to learn the latent structure of networks. Due to the extremely intensive studies on community detection, we only focus on identifiable models that are closely relevant to our study in this paper. For undirected network, the Stochastic Blockmodel (SBM) \citep{SBM} is a classical and widely used model to generate undirected networks. The degree-corrected stochastic blockmodel (DCSBM) \cite{DCSBM} extends SBM by introducing degree heterogeneities. Under SBM and DCSBM, all nodes are pure such that each node only belong to one community.  While, in real cases some nodes may belong to multiple communities, and such nodes have overlapping (also known as mixed membership) property. To model undirected networks in which nodes have overlapping property, \cite{MMSB} designs the Mixed Membership Stochastic Blockmodel (MMSB). \cite{MixedSCORE} introduces the degree-corrected mixed membership model (DCMM) which extends MMSB by considering degree heterogeneities. \cite{OCCAM} designs the OCCAM model which equals DCMM actually. Spectral methods with consistent estimations under the above models are provided in \cite{rohe2011spectral,RSC,lei2015consistency,joseph2016impact,SCORE,MixedSCORE,mao2020estimating, MaoSVM}. For directed networks in which all nodes have nonoverlapping property, \cite{DISIM} proposes a model called Stochastic co-Blockmodel (ScBM) and its extension DCScBM by considering degree heterogeneity, where ScBM (DCScBM) is extension of SBM (DCSBM). ScBM and DCScBM can model nonoverlapping directed networks in which row nodes belong to $K_{r}$ row communities and column nodes belong to $K_{c}$ column communities, where $K_{r}$ can differ from $K_{c}$. \cite{zhou2019analysis, qing2021consistency} study the consistency of some adjacency-based spectral algorithms under ScBM. \cite{DSCORE} studies the consistency of the spectral method D-SCORE under DCScBM when $K_{r}=K_{c}$. \cite{qing2021DiMMSB} designs directed mixed membership stochastic blockmodel (DiMMSB) as an extension of ScBM and MMSB to model directed networks in which all nodes have overlapping property. Meanwhile, DiMMSB can also be seen as an extension of the two-way blockmodels with Bernoulli distribution of \cite{airoldi2013multi}.  All the above models are identifiable under certain conditions. The identifiability of ScBM and DCScBM holds even for the case when $K_{r}\neq K_{c}$. DiMMSB is identifiable only when $K_{r}=K_{c}$. Sure, SBM, DCSBM, MMSB, DCMM and OCCAM are identifiable when $K_{r}=K_{c}$ since they model undirected networks. For all the above models, row nodes and column nodes have symmetric structural information such that they always have nonoverlapping property or overlapping property simultaneously. As shown by the identifiability of DiMMSB, to model a directed network in which all nodes have overlapping property, the identifiability of the model requires $K_{r}=K_{c}$. Naturally, there is a bridge model from ScBM to DiMMSB such that the bride model can model a directed network in which row nodes and column nodes have asymmetric structural information such that they have different overlapping property. In this paper, we introduce this model and name it as overlapping and nonoverlapping model.

Our contributions in this paper are as follows. We propose an identifiable model for directed networks, the overlapping and nonoverlapping model (ONM for short). ONM allows that nodes in a directed network can have different overlapping property. Without loss of generality, in a directed network, we let row nodes have overlapping property while column nodes do not.  The proposed model is identifiable when $K_{r}\leq K_{c}$. Recall that the identifiability of ScBM modeling nonoverlapping directed networks holds even for the case $K_{r}\neq K_{c}$, and DiMMSB modeling overlapping directed networks is identifiable only when $K_{r}=K_{c}$, this is the reason we call ONM modeling directed networks in which row nodes have different overlapping property as column nodes as a bridge model from ScBM to DiMMSB. Similar as DCScBM is an extension of ScBM, we propose an identifiable model overlapping and degree-corrected nonoverlapping model (ODCNM) as extension of ONM by considering degree heterogeneity. We construct two spectral algorithms to fit ONM and ODCNM. We show that our method enjoy consistent estimations under mild conditions by delicate spectral analysis. Especially, our theoretical results under ODCNM match those under ONM when ODCNM degenerates to ONM.

\textbf{\textit{Notations.}}
We take the following general notations in this paper. For any positive integer $m$, let $[m]:=\{1,2,\ldots, m\}$. For a vector $x$ and fixed $q>0$, $\|x\|_{q}$ denotes its $l_{q}$-norm. For a matrix $M$, $M'$ denotes the transpose of the matrix $M$, $\|M\|$ denotes the spectral norm, $\|M\|_{F}$ denotes the Frobenius norm, and $\|M\|_{2\rightarrow\infty}$ denotes the maximum $l_{2}$-norm of all the rows of $M$. Let $\sigma_{i}(M)$ be the $i$-th largest singular value of matrix $M$, and $\lambda_{i}(M)$ denote the $i$-th largest eigenvalue of the matrix $M$ ordered by the magnitude. $M(i,:)$ and $M(:,j)$ denote the $i$-th row and the $j$-th column of matrix $M$, respectively. $M(S_{r},:)$ and $M(:,S_{c})$ denote the rows and columns in the index sets $S_{r}$ and $S_{c}$ of matrix $M$, respectively. For any matrix $M$, we simply use $Y=\mathrm{max}(0, M)$ to represent $Y_{ij}=\mathrm{max}(0, M_{ij})$ for any $i,j$.  For any matrix $M\in\mathbb{R}^{m\times m}$, let $\mathrm{diag}(M)$ be the $m\times m$ diagonal matrix whose $i$-th diagonal entry is $M(i,i)$. $\mathbf{1}$ is a column vector with all entries being ones. $e_{i}$ is a column vector whose $i$-th entry is 1 while other entries are zero.
\section{The overlapping and nonoverlapping model}\label{sec2}
Consider a directed network $\mathcal{N}=(V_{r}, V_{c},E)$, where $V_{r}=\{1,2,\ldots, n_{r}\}$ is the set of row nodes, $V_{c}=\{1,2,\ldots, n_{c}\}$ is the set of column nodes, and $E$ is the set of edges from row nodes to column nodes. Note that since row nodes can be different from column nodes, we may have $V_{r}\cap V_{c}=\varnothing$, where $\varnothing$ denotes the null set. In this paper, we use subscript $r$ and $c$ to distinguish terms for row nodes and column nodes. Let $A\in \{0,1\}^{n_{r}\times n_{c}}$ be the bi-adjacency matrix of directed network $\mathcal{N}$ such that $A(i_{r},i_{c})=1$ if there is a directional edge from row node $i_{r}$ to column node $i_{c}$, and $A(i_{r},i_{c})=0$ otherwise.

We propose a new block model which we call overlapping and nonoverlapping model (ONM for short). ONM can model directed networks whose row nodes belong to $K_{r}$ overlapping row communities while column nodes belong to $K_{c}$ nonoverlapping column communities.

For row nodes, let $\Pi_{r}\in \mathbb{R}^{n_{r}\times K_{r}}$ be the membership matrix of row nodes such that
\begin{align}\label{Pir}
 \Pi_{r}(i_{r},)\geq 0, \|\Pi_{r}(i_{r},:)\|_{1}=1\qquad \mathrm{for~} i_{r}\in[n_{r}].
\end{align}
Call row node $i_{r}$ \emph{pure} if $\Pi_{r}(i_{r},:)$ degenerates (i.e., one entry is 1, all others $K_{r}-1$ entries are 0) and \emph{mixed} otherwise. From such definition, row node $i_{r}$ has mixed membership and may belong to more than one row communities for $i_{r}\in[n_{r}]$.

For column nodes, let $\ell$ be the $n_{c}\times 1$ vector whose $i_{c}$-th entry $\ell(i_{c})=k$ if column node $i_{c}$ belongs to the $k$-th column community, and $\ell(i_{c})$ takes value from $\{1,2,\ldots, K_{c}\}$ for $i_{c}\in[n_{c}]$. Let $\Pi_{c}\in \mathbb{R}^{n_{c}\times K_{c}}$ be the membership matrix of column nodes such that
\begin{align}\label{Pic}
\Pi_{c}(i_{c},k)=1~\mathrm{when~}\ell(i_{c})=k,\mathrm{~and~}0\mathrm{~otherwise}, \mathrm{and~}\|\Pi_{c}(i_{c},:)\|_{1}=1  \qquad \mathrm{for~}i_{c}\in[n_{c}], k\in[K_{c}].
\end{align}
From such definition, column node $i_{c}$ belongs to \emph{exactly} one of the $K_{c}$ column communities for $i_{c}\in[n_{c}]$. Sure, all column nodes are pure nodes.

In this paper, we assume that
\begin{align}\label{krkc}
K_{r}\leq K_{c}.
\end{align}
Eq (\ref{krkc}) is required for the identifiability of ONM.

Let $P\in \mathbb{R}^{K_{r}\times K_{c}}$ be the probability matrix (also known as connectivity matrix) such that
\begin{align}\label{definP}
 0\leq P(k,l)\leq \rho\leq 1 \qquad \mathrm{for~} k\in[K_{r}], l\in[K_{c}],
\end{align}
where $\rho$ controls the network sparsity and is called sparsity parameter in this paper. For convenience, set $P=\rho \tilde{P}$ where $\tilde{P}(k,l)\in[0,1]$ for $k\in[K_{r}], l\in[K_{c}]$, and $\mathrm{max}_{k\in[K_{r}],l\in[K_{c}]}\tilde{P}(k,l)=1$ for model identifiability. For all pairs of $(i_{r},i_{c})$ with $i_{r}\in[n_{r}],i_{c}\in[n_{c}]$, our model assumes that $A(i_{r},i_{c})$ are independent Bernoulli random variables satisfying
\begin{align}\label{ONM}
\Omega:=\Pi_{r}P\Pi'_{c},~~~A(i_{r},i_{c})\sim\mathrm{Bernoulli}(\Omega(i_{r},i_{c}))\qquad \mathrm{for~}i_{r}\in[n_{r}],i_{c}\in[n_{c}],
\end{align}
where $\Omega=\mathbb{E}[A]$ , and we call it population adjacency matrix in this paper.
\begin{defin}
Call model (\ref{Pir})-(\ref{ONM}) the Overlapping and Nonoverlapping model (ONM) and denote it by $ONM_{n_{r},n_{c}}(K_{r},K_{c}, P, \Pi_{r}, \Pi_{c})$.
\end{defin}
The following conditions are sufficient for the identifiability of ONM:
\begin{itemize}
  \item (I1) $\mathrm{rank}(P)=K_{r}, \mathrm{rank}(\Pi_{r})=K_{r}$ and $\mathrm{rank}(\Pi_{c})=K_{c}$.
  \item (I2) There is at least one pure row node for each of the $K_{r}$ row communities.
\end{itemize}
Here, $\mathrm{rank}(\Pi_{r})=K_{r}$ means that $\sum_{i_{r}=1}^{n_{r}}(\Pi_{r}(i_{r},k))>0$ for all $k\in[K_{r}]$; $\mathrm{rank}(\Pi_{c})=K_{c}$ means that each column community has at least one column node. For $k\in[K_{r}]$, let $\mathcal{I}^{(k)}_{r}=\{i\in\{1,2,\ldots, n_{r}\}: \Pi_{r}(i,k)=1\}$. By condition (I2), $I^{(k)}_{r}$ is non empty for all $k\in[K_{r}]$. For $k\in[K_{r}]$, select one row node from $\mathcal{I}^{(k)}_{r}$ to construct the index set $\mathcal{I}_{r}$, i.e., $\mathcal{I}_{r}$ is the indices of row nodes corresponding to $K_{r}$ pure row nodes, one from each community. W.L.O.G., let $\Pi_{r}(\mathcal{I}_{r},:)=I_{K_{r}}$ (Lemma 2.1 \cite{mao2020estimating} also has similar setting to design their spectral algorithms under MMSB.). $\mathcal{I}_{c}$ is defined similarly for column nodes such that $\Pi_{c}(\mathcal{I}_{c},:)=I_{K_{c}}$. Next proposition guarantees that once conditions (I1) and (I2) hold, ONM is identifiable.
\begin{prop}\label{id}
	If conditions (I1) and (I2) hold, ONM is identifiable: For eligible $(P,\Pi_{r}, \Pi_{c})$ and $(\check{P},\check{\Pi}_{r}, \check{\Pi}_{c})$, if $\Pi_{r}P\Pi'_{c}=\check{\Pi}_{r}\check{P}\check{\Pi}'_{c}$, then  $P=\check{P}, \Pi_{r}=\check{\Pi}_{r}$, and $\Pi_{c}=\check{\Pi}_{c}$.
\end{prop}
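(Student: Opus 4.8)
The plan is to show that conditions (I1)--(I2) allow one to reconstruct $(P,\Pi_{r},\Pi_{c})$ from $\Omega=\Pi_{r}P\Pi'_{c}$ alone, so that any two eligible factorizations producing the same $\Omega$ must coincide once the common labeling of communities is pinned down by the normalization $\Pi_{r}(\mathcal{I}_{r},:)=I_{K_{r}}$ and $\Pi_{c}(\mathcal{I}_{c},:)=I_{K_{c}}$. First I would fix the rank. Since $\mathrm{rank}(P)=K_{r}$, $\mathrm{rank}(\Pi_{r})=K_{r}$ and $\mathrm{rank}(\Pi_{c})=K_{c}\geq K_{r}$, the factor $P\Pi'_{c}$ has full row rank $K_{r}$, whence $\mathrm{rank}(\Omega)=\mathrm{rank}(\Pi_{r})=K_{r}$. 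In particular the column space $\mathrm{col}(\Omega)=\mathrm{col}(\Pi_{r})$ is determined by $\Omega$, and so is its row space.

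Second, I would recover $\Pi_{r}$ by a simplex/vertex argument. Taking any fixed basis $U$ of $\mathrm{col}(\Omega)$ (e.g.\ its left singular vectors), write $U=\Pi_{r}X$ with $X:=U(\mathcal{I}_{r},:)$, which is invertible because $\Pi_{r}(\mathcal{I}_{r},:)=I_{K_{r}}$. Each row $U(i_{r},:)=\sum_{k}\Pi_{r}(i_{r},k)X(k,:)$ is then a convex combination of the $K_{r}$ rows of $X$; these rows are linearly, hence affinely, independent, so they are the vertices of a simplex that contains all rows of $U$, and by (I2) every vertex is attained by some pure row node. Consequently the vertex set is exactly the set of extreme points of $\mathrm{conv}\{U(i_{r},:)\}$, determined by $\Omega$ up to relabeling, and the (unique) barycentric coordinates give $\Pi_{r}=UX^{-1}$ uniquely up to a permutation of the $K_{r}$ communities.

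Third, and this is the crux, I would recover $\Pi_{c}$ and $P$. Because column nodes are pure, $\Omega(:,i_{c})=\Pi_{r}P(:,\ell(i_{c}))$, so the columns of $\Omega$ take only $K_{c}$ distinct values, one per column community, and two column nodes lie in the same community iff their $\Omega$-columns coincide; grouping identical columns thus recovers $\ell$, hence $\Pi_{c}$, up to relabeling. The delicate point is to guarantee these $K_{c}$ column profiles are genuinely distinct: since $\Pi_{r}$ has full column rank, $\Pi_{r}P(:,l)=\Pi_{r}P(:,l')$ forces $P(:,l)=P(:,l')$, so one needs the $K_{c}$ columns of $P$ to be pairwise distinct in order to separate the nonoverlapping communities in the asymmetric regime $K_{r}\leq K_{c}$. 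With $\Pi_{r}$ and $\Pi_{c}$ fixed, $P=\Pi_{r}^{+}\,\Omega\,(\Pi'_{c})^{+}$ is then determined, using that $\Pi_{r}^{+}\Pi_{r}=I_{K_{r}}$ and $\Pi'_{c}(\Pi'_{c})^{+}=I_{K_{c}}$.

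Finally, applying the above to both factorizations: sharing $\Omega$ forces the same column and row spaces, the same simplex vertices, and the same collection of distinct column profiles, giving $\check{\Pi}_{r}=\Pi_{r}$, $\check{\Pi}_{c}=\Pi_{c}$, and then $\check{P}=P$ after the common community labeling is fixed by the normalization. I expect the second and third steps to demand the most care: the vertex-hunting step relies essentially on (I2) to realize every corner of the simplex, while the column step relies on the pure-column structure together with the distinctness of the columns of $P$ to pin down all $K_{c}$ nonoverlapping communities.
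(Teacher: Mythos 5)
Your proof follows essentially the same strategy as the paper's: identify $\Pi_{r}$ through the pure-row-node/simplex structure of a basis of $\mathrm{col}(\Omega)$, identify $\Pi_{c}$ by observing that column nodes in the same community produce identical column profiles, and recover $P$ by linear algebra. The differences are organizational rather than conceptual. The paper first reads off $P=\Omega(\mathcal{I}_{r},\mathcal{I}_{c})=\check{P}$ using both normalizations $\Pi_{r}(\mathcal{I}_{r},:)=I_{K_{r}}$ and $\Pi_{c}(\mathcal{I}_{c},:)=I_{K_{c}}$, then obtains $\Pi_{r}=\check{\Pi}_{r}$ from the shared singular-vector basis $U_{r}=\Pi_{r}U_{r}(\mathcal{I}_{r},:)$ of Lemma \ref{RK}, and finally obtains $\Pi_{c}=\check{\Pi}_{c}$ by cancelling the full-column-rank factor $\Pi_{r}$ (Lemma 7 of \cite{qing2021DiMMSB}) to reach $P\Pi'_{c}=P\check{\Pi}'_{c}$. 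You instead go $\Pi_{r}\rightarrow\Pi_{c}\rightarrow P$, using a generic basis of $\mathrm{col}(\Omega)$ plus an extreme-point argument in place of the SVD, and pseudo-inverses in place of the corner submatrix; both routes are valid, and your cancellation of $\Pi_{r}$ via full column rank is the same device as the paper's Lemma 7 step.

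The point you call the crux deserves emphasis, because you have identified something real. You are right that the column step needs the $K_{c}$ columns of $P$ to be pairwise distinct, and right that this does not follow from (I1)--(I2) when $K_{r}<K_{c}$: for instance $P=\begin{pmatrix}1&0&1\\0&1&0\end{pmatrix}$ has rank $K_{r}=2$ yet equal first and third columns, and then column nodes can be reassigned between column communities $1$ and $3$ without changing $\Omega$, so identifiability genuinely fails. The paper's own proof relies on exactly this condition but uses it tacitly: its inference from $X(\ell(i_{c}),k)=X(\check{\ell}(i_{c}),k)$ for all $k\in[K_{r}]$ to $\ell(i_{c})=\check{\ell}(i_{c})$ is valid only if the rows of $X=P'$ are pairwise distinct. (When $K_{r}=K_{c}$ this is automatic, since a square full-rank matrix cannot have two equal columns, which is why the paper's remark about that special case is unproblematic; it is only the asymmetric regime $K_{r}<K_{c}$ that needs the extra hypothesis.) So your attempt is not weaker than the paper's proof; it makes explicit an assumption that the paper's argument silently requires, and in that sense it is the more careful of the two.
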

Compared to some previous models for directed networks, ONM models different directed networks.
\begin{itemize}
\item When all row nodes are pure, our ONM reduces to ScBM with $K_{r}$ row clusters and $K_{c}$ column clusters  \cite{DISIM}. However, ONM allows row nodes to have overlapping memberships while ScBM does not. Meanwhile, for model identifiability, ScBM does not require $\mathrm{rank}(P)=K_{r}$ while ONM requires, and this can be seen as the cost of ONM when modeling overlapping row nodes.
\item Though DiMMSB \cite{qing2021DiMMSB} can model directed networks whose row and column nodes have overlapping memberships, DiMMSB requires $K_{r}=K_{c}$ for model identifiability. For comparison, our ONM allows $K_{r}\leq K_{c}$ at the cost of losing overlapping property of column nodes.
\end{itemize}
\subsection{A spectral algorithm for fitting ONM}
The primary goal of the proposed algorithm is to estimate the row membership matrix $\Pi_{r}$ and column membership matrix $\Pi_{c}$ from the observed adjacency matrix $A$ with given $K_{r}$ and $K_{c}$.

We now discuss our intuition for the design of our algorithm to fit ONM. Under conditions (I1) and (I2), by basic algebra, we have $\mathrm{rank}(\Omega)=K_{r}$. Let $\Omega=U_{r}\Lambda U'_{c}$ be the compact singular value decomposition of $\Omega$, where  $U_{r}\in\mathbb{R}^{n_{r}\times K_{r}}, \Lambda\in\mathbb{R}^{K_{r}\times K_{r}}, U_{c}\in\mathbb{R}^{n_{c}\times K_{r}}$, $U'_{r}U_{r}=I_{K_{r}}, U'_{c}U_{c}=I_{K_{r}}$, and $I_{K_{r}}$ is a $K_{r}\times K_{r}$ identity matrix. Let $n_{c,k}=|\{i_{c}:\ell(i_{c})=k\}|$ be the size of the $k$-th column community for $k\in[K_{c}]$. Let $n_{c,\mathrm{max}}=\mathrm{max}_{k\in[K_{c}]}n_{c,k}$ and $n_{c,\mathrm{min}}=\mathrm{min}_{k\in[K_{c}]}n_{c,k}$. Meanwhile, without causing confusion, let $n_{c,K_{r}}$ be the $K_{r}$-th largest size among all column communities. The following lemma guarantees that $U_{r}$ enjoys ideal simplex structure and $U_{c}$ has $K_{c}$ distinct rows.
\begin{lem}\label{RK}
Under $ONM_{n_{r},n_{c}}(K_{r},K_{c}, P, \Pi_{r}, \Pi_{c})$, there exist an unique $K_{r}\times K_{r}$ matrix $B_{r}$ and an unique $K_{c}\times K_{r}$ matrix $B_{c}$ such that
\begin{itemize}
\item $U_{r}=\Pi_{r}B_{r}$ where $B_{r}=U_{r}(\mathcal{I}_{r},:)$. Meanwhile, $U_{r}(i_{r},:)=U_{r}(\bar{i}_{r},:)$ when $\Pi_{r}(i_{r},:)=\Pi_{r}(\bar{i}_{r},:)$ for $i_{r},\bar{i}_{r}\in [n_{r}]$.
  \item $U_{c}=\Pi_{c}B_{c}$. Meanwhile, $U_{c}(i_{c},:)=U_{c}(\bar{i}_{c},:)$ when $\ell(i_{c})=\ell(\bar{i}_{c})$ for $i_{c},\bar{i}_{c}\in[n_{c}]$, i.e., $U_{c}$ has $K_{c}$ distinct rows. Furthermore, when $K_{r}=K_{c}=K$, we have $\|B_{c}(k,:)-B_{c}(l,:)\|_{F}=\sqrt{\frac{1}{n_{c,k}}+\frac{1}{n_{c,l}}}$ for all $1\leq k<l\leq K$.
\end{itemize}
\end{lem}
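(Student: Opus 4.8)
The plan is to exploit the fact that the compact SVD factor $U_{r}$ (respectively $U_{c}$) spans the column space of $\Omega$ (respectively the row space of $\Omega$), and to compare these spaces with the column spaces of $\Pi_{r}$ and $\Pi_{c}$.

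First I would pin down the relevant subspaces. Since $\Pi_{c}(\mathcal{I}_{c},:)=I_{K_{c}}$, the matrix $P\Pi'_{c}$ contains $P$ among its columns, so $\mathrm{rank}(P\Pi'_{c})=\mathrm{rank}(P)=K_{r}$ by (I1); hence $\Omega=\Pi_{r}(P\Pi'_{c})$ has the same column space as $\Pi_{r}$, a $K_{r}$-dimensional space, which confirms $\mathrm{rank}(\Omega)=K_{r}$. Because the left singular vectors $U_{r}$ form an orthonormal basis of this same column space and $\Pi_{r}$ has full column rank $K_{r}$, there is a unique $K_{r}\times K_{r}$ matrix $B_{r}$ with $U_{r}=\Pi_{r}B_{r}$ (explicitly $B_{r}=(\Pi'_{r}\Pi_{r})^{-1}\Pi'_{r}U_{r}$); evaluating both sides on the pure-node index set $\mathcal{I}_{r}$ and using $\Pi_{r}(\mathcal{I}_{r},:)=I_{K_{r}}$ gives $B_{r}=U_{r}(\mathcal{I}_{r},:)$. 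The identity $U_{r}(i_{r},:)=\Pi_{r}(i_{r},:)B_{r}$ then yields $U_{r}(i_{r},:)=U_{r}(\bar{i}_{r},:)$ whenever $\Pi_{r}(i_{r},:)=\Pi_{r}(\bar{i}_{r},:)$.

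For $U_{c}$ the argument is structurally similar but asymmetric, which I expect to be the main obstacle: when $K_{r}<K_{c}$ the row space of $\Omega$ (spanned by the columns of $U_{c}$) is only a $K_{r}$-dimensional subspace of the $K_{c}$-dimensional column space of $\Pi_{c}$, so I cannot invoke the same ``equal spaces plus full rank'' reasoning used for $U_{r}$. Instead I would note that $\Omega'=\Pi_{c}(P'\Pi'_{r})$ has every column lying in the column space of $\Pi_{c}$, whence the columns of $U_{c}$ lie in that space; since $\Pi_{c}$ has full column rank $K_{c}$, there is a unique $K_{c}\times K_{r}$ matrix $B_{c}$ with $U_{c}=\Pi_{c}B_{c}$, namely $B_{c}=(\Pi'_{c}\Pi_{c})^{-1}\Pi'_{c}U_{c}$. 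Because $\Pi_{c}(i_{c},:)=e'_{\ell(i_{c})}$, this gives $U_{c}(i_{c},:)=B_{c}(\ell(i_{c}),:)$, so the rows of $U_{c}$ depend only on community membership and take at most $K_{c}$ values, yielding $U_{c}(i_{c},:)=U_{c}(\bar{i}_{c},:)$ whenever $\ell(i_{c})=\ell(\bar{i}_{c})$.

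Finally, for the distance formula in the case $K_{r}=K_{c}=K$, I would use orthonormality of the singular vectors. Here $\Pi'_{c}\Pi_{c}=D:=\mathrm{diag}(n_{c,1},\ldots,n_{c,K})$, and $U'_{c}U_{c}=I_{K}$ combined with $U_{c}=\Pi_{c}B_{c}$ gives $B'_{c}DB_{c}=I_{K}$. Setting $\tilde{B}_{c}=D^{1/2}B_{c}$, this reads $\tilde{B}'_{c}\tilde{B}_{c}=I_{K}$, so $\tilde{B}_{c}$ is a $K\times K$ orthogonal matrix and in particular has orthonormal rows. Since $B_{c}(k,:)=n_{c,k}^{-1/2}\tilde{B}_{c}(k,:)$, expanding $\|B_{c}(k,:)-B_{c}(l,:)\|_{F}^{2}$ and using $\tilde{B}_{c}(k,:)\tilde{B}_{c}(l,:)'=\delta_{kl}$ gives exactly $\frac{1}{n_{c,k}}+\frac{1}{n_{c,l}}$, which is strictly positive and hence also shows the $K$ rows are distinct. (Consistently, when $K_{r}=K$ the rank condition in (I1) makes $P$ invertible, so its columns are distinct, matching the distinctness of these rows.)
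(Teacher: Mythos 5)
Your proposal is correct and follows essentially the same route as the paper: write $U_{r}=\Pi_{r}B_{r}$ and $U_{c}=\Pi_{c}B_{c}$, identify $B_{r}=U_{r}(\mathcal{I}_{r},:)$ via the pure rows, and use $I_{K}=U_{c}'U_{c}=B_{c}'\Pi_{c}'\Pi_{c}B_{c}$ with squareness of $B_{c}$ when $K_{r}=K_{c}$ to get orthogonal rows of norms $n_{c,k}^{-1/2}$ and hence the distance formula. The only cosmetic differences are that you exhibit $B_{r},B_{c}$ by a column-space/full-rank argument (with the pseudo-inverse formulas) rather than the paper's explicit $B_{r}=P\Pi_{c}'U_{c}\Lambda^{-1}$, $B_{c}=P'\Pi_{r}'U_{r}\Lambda^{-1}$ from the SVD identities, and that you whiten by $D^{1/2}$ instead of inverting $B_{c}'DB_{c}=I_{K}$; both are equivalent.
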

Lemma \ref{RK} says that the rows of $U_{c}$ form a $K_{r}$-simplex in $\mathbb{R}^{K_{r}}$ which we call the Ideal Simplex (IS), with the $K_{r}$ rows of $B_{r}$ being the vertices. Such IS is also found in \cite{MixedSCORE,mao2020estimating,qing2021DiMMSB}. Meanwhile, Lemma \ref{RK} says that $U_{c}$ has $K_{c}$ distinct rows, and if two column nodes $i_{c}$ and $\bar{i}_{c}$ are from the same column community, then $U_{c}(i_{c},:)=U_{c}(\bar{i}_{c},:)$.

Under ONM, to recover $\Pi_{c}$ from $U_{c}$, since $U_{c}$ has $K_{c}$ distinct rows, applying k-means algorithm on all rows of $U_{c}$  returns true column communities by Lemma \ref{RK}. Meanwhile, since $U_{c}$ has $K_{c}$ distinct rows, we can set $\delta_{c}=\mathrm{min}_{k\neq l}\|B_{c}(k,:)-B_{c}(l,:)\|_{F}$ to measure the minimum center separation of $B_{c}$. By Lemma \ref{RK}, $\delta_{c}\geq \sqrt{\frac{2}{n_{c,\mathrm{max}}}}$ when $K_{r}=K_{c}=K$ under $ONM_{n_{r},n_{c}}(K_{r},K_{c}, P, \Pi_{r}, \Pi_{c})$. However, when $K_{r}<K_{c}$, it is challenge to obtain a positive lower bound of $\delta_{c}$, see the proof of Lemma \ref{RK} for detail.

Under ONM, to recover $\Pi_{c}$ from $U_{c}$, since $B_{r}$ is full rank, if $U_{r}$ and $B_{r}$ are known in advance ideally, we can exactly recover  $\Pi_{r}$ by setting $\Pi_{r}=U_{r}B_{r}'(B_{r}B_{r}')^{-1}$ by Lemma \ref{RK}. Set
$Y_{r}=U_{r}B_{r}'(B_{r}B_{r}')^{-1}$, since $Y_{r}\equiv\Pi_{r}$ and $\|\Pi_{r}(i_{r},:)\|_{1}=1$ for $i_{r\in[n_{r}]}$, we have
\begin{align*}
\Pi_{r}(i_{r},:)=\frac{Y_{r}(i_{r},:)}{\|Y_{r}(i_{r},:)\|_{1}}, i_{r}\in[n_{r}].
\end{align*}
With given $U_{r}$, since it enjoys IS structure $U_{r}=\Pi_{r}B_{r}\equiv \Pi_{r}U_{r}(\mathcal{I}_{r},:)$, as long as we can obtain the row corner matrix $U_{r}(\mathcal{I}_{r},:)$ (i.e., $B_{r}$), we can recover $\Pi_{r}$ exactly. As mentioned in \cite{MixedSCORE,mao2020estimating}, for such ideal simplex, the successive projection (SP) algorithm \cite{gillis2015semidefinite} (for detail of SP, see Algorithm \ref{alg:SP}) can be applied to $U_{r}$ with $K_{r}$ row communities to find $U_{r}(\mathcal{I}_{r},:)$.

Based on the above analysis, we are now ready to give the following algorithm which we call Ideal ONA. Input $\Omega, K_{r}, K_{c}$ with $K_{r}\leq K_{c}$. Output: $\Pi_{r}$ and $\ell$.
\begin{itemize}
  \item Let $\Omega=U_{r}\Lambda U'_{c}$ be the compact SVD of $\Omega$ such that $U_{r}\in\mathbb{R}^{n_{r}\times K_{r}},U_{c}\in\mathbb{R}^{n_{c}\times K_{r}}, \Lambda\in\mathbb{R}^{K_{r}\times K_{r}},U'_{r}U_{r}=I_{K_{r}},U'_{c}U_{c}=I_{K_{r}}$.
  \item For row nodes,
       \begin{itemize}
          \item Run SP algorithm on all rows of $U_{r}$ assuming there are $K_{r}$ row communities to obtain $U_{r}(\mathcal{I}_{r},:)$. Set $B_{r}=U_{r}(\mathcal{I}_{r},:)$.
          \item Set  $Y_{r}=U_{r}B_{r}'(B_{r}B_{r}')^{-1}$.  Recover $\Pi_{r}$ by setting $\Pi_{r}(i_{r},:)=\frac{Y_{r}(i_{r},:)}{\|Y_{r}(i_{r},:)\|_{1}}$ for $i_{r}\in[n_{r}]$.
        \end{itemize}
        For column nodes,
        \begin{itemize}
          \item Run k-means on $U_{c}$ assuming there are $K_{c}$ column communities, i.e., find the solution to the following optimization problem
              \begin{align*}
              M^{*}=\mathrm{argmin}_{M\in M_{n_{c}, K_{r}, K_{c}}}\|M-U_{c}\|^{2}_{F},
              \end{align*}
              where $M_{n_{c}, K_{r}, K_{c}}$ denotes the set of $n_{c}\times K_{r}$ matrices with only $K_{c}$ different rows.
          \item use $M^{*}$ to obtain the labels vector $\ell$ of column nodes.
        \end{itemize}
\end{itemize}
Follow similar proof of Theorem 1 of \cite{qing2021DiMMSB}, Ideal ONA exactly recoveries row nodes memberships and column nodes labels, and this also verifies the identifiability of ONM in turn. For convenience, call the two steps for column nodes as ``run k-means on $U_{c}$ assuming there are $K_{c}$ column communities to obtain $\ell$''.

We now extend the ideal case to the real case. Set $\tilde{A}=\hat{U}_{r}\hat{\Lambda}\hat{U}'_{c}$ be the top-$K_{r}$-dimensional SVD of $A$ such that $\hat{U}_{r}\in \mathbb{R}^{n_{r}\times K_{r}}, \hat{U}_{c}\in \mathbb{R}^{n_{c}\times K_{r}}, \hat{\Lambda}\in \mathbb{R}^{K_{r}\times K_{r}},\hat{U}'_{r}\hat{U}_{r}=I_{K_{r}}, \hat{U}'_{c}\hat{U}_{c}=I_{K_{r}}$, and $\hat{\Lambda}$ contains the top $K_{r}$ singular values of $A$. For the real case, we use $\hat{B}_{r}, \hat{B}_{c},\hat{Y}_{r}, \hat{\Pi}_{r}, \hat{\Pi}_{c}$ given in Algorithm \ref{alg:ONA}  to estimate $B_{r}, B_{c},Y_{r}, \Pi_{r},\Pi_{c}$, respectively.  Algorithm \ref{alg:ONA} called overlapping and nonoverlapping algorithm (ONA for short) is a natural extension of the Ideal ONA to the real case. In ONA, we set the negative entries of $\hat{Y}_{r}$ as 0 by setting $\hat{Y}_{r}=\mathrm{max}(0, \hat{Y}_{r})$ for the reason that weights for any row node should be nonnegative while there may exist some negative entries of $\hat{U}_{r}\hat{B}_{r}'(\hat{B}_{r}\hat{B}_{r}')^{-1}$. Note that, in a directed network, if column nodes have overlapping property while row nodes do not, to do community detection for such directed network, set the transpose of the adjacency matrix as input when applying our algorithm.
\begin{algorithm}
\caption{\textbf{Overlapping and Nonoverlapping Algorithm} (\textbf{ONA})}
\label{alg:ONA}
\begin{algorithmic}[1]
\Require The adjacency matrix $A\in \mathbb{R}^{n_{r}\times n_{c}}$ of a directed network, the number of row communities $K_{r}$, and the number of column communities $K_{c}$ with $K_{r}\leq K_{c}$.
\Ensure The estimated $n_{r}\times K_{r}$ membership matrix $\hat{\Pi}_{r}$ for row nodes, and the estimated $n_{c}\times 1$ labels vector $\hat{\ell}$ for column nodes.
\State Compute $\hat{U}_{r}\in\mathbb{R}^{n_{r}\times K_{r}}$ and $\hat{U}_{c}\in \mathbb{R}^{n_{c}\times K_{r}}$ from the top-$K_{r}$-dimensional SVD of $A$.
\State For row nodes:
\begin{itemize}
  \item Apply SP algorithm (i.e., Algorithm \ref{alg:SP}) on the rows of $\hat{U}_{r}$ assuming there are $K_{r}$ row clusters to obtain the near-corners matrix $\hat{U}_{r}(\mathcal{\hat{I}}_{r},:)\in\mathbb{R}^{K_{r}\times K_{r}}$, where $\mathcal{\hat{I}}_{r}$ is the index set returned by SP algorithm. Set $\hat{B}_{r}=\hat{U}_{r}(\mathcal{\hat{I}}_{r},:)$.
  \item Compute the $n_{r}\times K_{r}$ matrix $\hat{Y}_{r}$ such that $\hat{Y}_{r}=\hat{U}_{r}\hat{B}_{r}'(\hat{B}_{r}\hat{B}_{r}')^{-1}$. Set $\hat{Y}_{r}=\mathrm{max}(0, \hat{Y}_{r})$ and estimate $\Pi_{r}(i_{r},:)$ by $\hat{\Pi}_{r}(i_{r},:)=\frac{\hat{Y}_{r}(i_{r},:)}{\|\hat{Y}_{r}(i_{r},:)\|_{1}}, i_{r}\in[n_{r}]$.
\end{itemize}
For column nodes: run k-means on $\hat{U}_{c}$ assuming there are $K_{c}$ column communities to obtain $\hat{\ell}$.
\end{algorithmic}
\end{algorithm}
\subsection{Main results for ONA}
In this section, we show the consistency of our algorithm for fitting the ONM as the number of row nodes $n_{r}$ and the number of column nodes $n_{c}$ increase. Throughout this paper, $K_{r}\leq K_{c}$ are two known integers. First, we assume that
\begin{assum}\label{a1}
$\rho \mathrm{max}(n_{r},n_{c})\geq \mathrm{log}(n_{r}+n_{c})$.
\end{assum}
Assumption (\ref{a1}) controls the sparsity of directed network considered for theoretical study. By Lemma 4 of \cite{qing2021DiMMSB}, we have below lemma.
\begin{lem}\label{rowwiseerror}
	(Row-wise singular eigenvector error) Under $ONM_{n_{r},n_{c}}(K_{r},K_{c}, P, \Pi_{r}, \Pi_{c})$, when Assumption (\ref{a1}) holds, suppose $\sigma_{K_{r}}(\Omega)\geq C\sqrt{\rho (n_{r}+n_{c})\mathrm{log}(n_{r}+n_{c})}$, with probability at least $1-o((n_{r}+n_{c})^{-\alpha})$,
\begin{align*}
&\|\hat{U}_{r}\hat{U}'_{r}-U_{r}U'_{r}\|_{2\rightarrow\infty}=O(\frac{\sqrt{K_{r}}(\kappa(\Omega)\sqrt{\frac{\mathrm{max}(n_{r},n_{c})\mu}{\mathrm{min}(n_{r},n_{c})}}+\sqrt{\mathrm{log}(n_{r}+n_{c})})}{\sqrt{\rho}\sigma_{K_{r}}(\tilde{P})\sigma_{K_{r}}(\Pi_{r})\sqrt{n_{c,K_{r}}}}),
\end{align*}
where $\mu$ is the incoherence parameter defined as $\mu=\mathrm{max}(\frac{n_{r}\|U_{r}\|^{2}_{2\rightarrow\infty}}{K_{r}},\frac{n_{c}\|U_{c}\|^{2}_{2\rightarrow\infty}}{K_{r}})$.
\end{lem}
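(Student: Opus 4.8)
The plan is to treat Lemma \ref{rowwiseerror} as a direct specialization of a general row-wise singular-subspace perturbation bound to the ONM setting, since the statement advertises that it follows from Lemma 4 of \cite{qing2021DiMMSB}. First I would verify that the random matrix $A$ generated by $ONM_{n_{r},n_{c}}(K_{r},K_{c},P,\Pi_{r},\Pi_{c})$ meets the hypotheses of that general lemma. By Eq (\ref{ONM}) the entries $A(i_{r},i_{c})$ are independent Bernoulli variables with means $\Omega(i_{r},i_{c})\in[0,\rho]$, so $A$ has independent, uniformly bounded entries whose expectation $\Omega=\E[A]$ has rank exactly $K_{r}$ under conditions (I1)--(I2). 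Assumption \ref{a1} supplies precisely the degree/sparsity lower bound needed for spectral concentration of $A-\Omega$, and the signal condition $\sigma_{K_{r}}(\Omega)\geq C\sqrt{\rho(n_{r}+n_{c})\log(n_{r}+n_{c})}$ is assumed outright. Invoking the general lemma then returns, with probability $1-o((n_{r}+n_{c})^{-\alpha})$, a two-to-infinity error bound whose numerator carries the factor $\sqrt{K_{r}}(\kappa(\Omega)\sqrt{\max(n_{r},n_{c})\mu/\min(n_{r},n_{c})}+\sqrt{\log(n_{r}+n_{c})})$ and whose denominator is controlled by $\sigma_{K_{r}}(\Omega)$, together with the $\sqrt{\rho}$ and incoherence bookkeeping inherited from the concentration of $\|A-\Omega\|$.

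The only genuinely model-specific work is then to replace the abstract quantity $\sigma_{K_{r}}(\Omega)$ in the denominator by interpretable ONM parameters. Here I would exploit the factorization $\Omega=\rho\,\Pi_{r}\tilde{P}\Pi'_{c}$. Writing $D=\Pi'_{c}\Pi_{c}=\mathrm{diag}(n_{c,1},\ldots,n_{c,K_{c}})$, which is diagonal precisely because the column nodes are pure, one has $\sigma_{K_{r}}(\Omega)^{2}=\lambda_{K_{r}}(\rho^{2}\Pi_{r}\tilde{P}D\tilde{P}'\Pi'_{r})$. Since $\Pi_{r}$ has full column rank $K_{r}$ by (I1)--(I2), peeling off $\Pi_{r}$ via the identity of nonzero spectra (equivalently $\lambda_{K_{r}}(\Pi_{r}M\Pi'_{r})\geq\sigma_{K_{r}}(\Pi_{r})^{2}\lambda_{\min}(M)$ for the positive definite $M=\tilde{P}D\tilde{P}'$) reduces the task to lower-bounding $\lambda_{K_{r}}(\tilde{P}D\tilde{P}')=\sigma_{K_{r}}(\tilde{P}\sqrt{D})^{2}$. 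The target estimate to establish is
\begin{align*}
\sigma_{K_{r}}(\Omega)\;\gtrsim\;\rho\,\sigma_{K_{r}}(\tilde{P})\,\sigma_{K_{r}}(\Pi_{r})\,\sqrt{n_{c,K_{r}}},
\end{align*}
after which substitution into the general bound produces exactly the claimed expression, with $\sqrt{n_{c,K_{r}}}$ playing the role that $\sigma_{K_{r}}(\Pi_{c})$ plays in the overlapping-column analysis of \cite{qing2021DiMMSB}.

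The hard part will be this last estimate, namely extracting the sharp community-size factor $\sqrt{n_{c,K_{r}}}$ (the $K_{r}$-th largest column-community size) rather than the cruder $\sqrt{n_{c,\mathrm{min}}}$ when $K_{r}<K_{c}$. The difficulty is that the weighting $D$ is genuinely nonuniform and $\tilde{P}$ is rectangular, so the naive product inequality $\sigma_{K_{r}}(\tilde{P}\sqrt{D})\geq\sigma_{K_{r}}(\tilde{P})\sqrt{n_{c,\mathrm{min}}}$ discards the fact that only $K_{r}$ of the $K_{c}$ communities need carry the signal. I expect to handle this by selecting an index set $S\subseteq[K_{c}]$ of $K_{r}$ column communities that simultaneously keeps $\tilde{P}$ well-conditioned on $S$ and keeps the associated sizes at least $n_{c,K_{r}}$, and then applying singular-value interlacing for column deletion to pass from $\tilde{P}\sqrt{D}$ to the selected square block; this is exactly the obstruction flagged in the discussion of $\delta_{c}$ preceding the lemma, and it is where $K_{r}\leq K_{c}$ and the definition of $n_{c,K_{r}}$ are used essentially. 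When $K_{r}=K_{c}$ the weighting issue disappears and the bound collapses to the clean $\sqrt{n_{c,\mathrm{min}}}$ form, consistent with the separation $\delta_{c}\geq\sqrt{2/n_{c,\mathrm{max}}}$ recorded in Lemma \ref{RK}.
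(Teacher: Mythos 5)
Your first paragraph does match the paper's route: Lemma \ref{rowwiseerror} is indeed obtained by feeding the ONM assumptions (independent Bernoulli entries, Assumption \ref{a1}, the assumed lower bound on $\sigma_{K_{r}}(\Omega)$) into the general row-wise perturbation bound of Lemma 4 of \cite{qing2021DiMMSB}, which produces the claimed numerator and leaves $\sigma_{K_{r}}(\Omega)$ to be bounded below in the denominator. However, the paper's handling of the denominator is far more direct than what you plan: it cites Lemma 10 of \cite{qing2021DiMMSB} for $\sigma_{K_{r}}(\Omega)\geq\rho\,\sigma_{K_{r}}(\tilde{P})\sigma_{K_{r}}(\Pi_{r})\sigma_{K_{r}}(\Pi_{c})$ and then uses the identity $\sigma_{K_{r}}(\Pi_{c})=\sqrt{n_{c,K_{r}}}$, which is immediate because $\Pi_{c}'\Pi_{c}=\mathrm{diag}(n_{c,1},\ldots,n_{c,K_{c}})$, so the singular values of $\Pi_{c}$ are exactly the $\sqrt{n_{c,k}}$ and its $K_{r}$-th largest singular value is $\sqrt{n_{c,K_{r}}}$ by definition. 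No community selection or interlacing argument appears anywhere in the paper.

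The genuine gap is in your final paragraph. You propose to find $S\subseteq[K_{c}]$ with $|S|=K_{r}$ such that the sizes $n_{c,k}$, $k\in S$, are all at least $n_{c,K_{r}}$ and simultaneously $\sigma_{K_{r}}(\tilde{P}(:,S))\gtrsim\sigma_{K_{r}}(\tilde{P})$. These two requirements conflict: the set of the $K_{r}$ largest communities is forced on you by the sizes, and $\tilde{P}$ restricted to those columns can be arbitrarily ill-conditioned, even rank-deficient, so no such $S$ need exist. In fact the estimate you aim to prove, $\sigma_{K_{r}}(\tilde{P}\sqrt{D})\gtrsim\sigma_{K_{r}}(\tilde{P})\sqrt{n_{c,K_{r}}}$, is false in general when $K_{r}<K_{c}$: take $K_{r}=1$, $K_{c}=2$, $\tilde{P}=(\epsilon,\ 1)$, $n_{c,1}=100$, $n_{c,2}=1$, so that $n_{c,K_{r}}=100$; then $\sigma_{1}(\tilde{P}\sqrt{D})=\sqrt{100\epsilon^{2}+1}\rightarrow 1$ as $\epsilon\rightarrow 0$, while $\sigma_{1}(\tilde{P})\sqrt{n_{c,K_{r}}}\rightarrow 10$. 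The only bound that survives such examples is the cruder $\sigma_{K_{r}}(\tilde{P})\sqrt{n_{c,\mathrm{min}}}$, which coincides with $\sigma_{K_{r}}(\tilde{P})\sqrt{n_{c,K_{r}}}$ only when $K_{r}=K_{c}$. So your plan cannot be completed as stated. Your instinct that this is where the real difficulty lies is correct, and it is worth noting that the same counterexample shows the paper's own step---extending Lemma 10 of \cite{qing2021DiMMSB}, whose proof is for the square case $K_{r}=K_{c}$, to the rectangular case with the factor $\sigma_{K_{r}}(\Pi_{c})$---is itself only justified when $K_{r}=K_{c}$; the paper simply sidesteps the issue by citation rather than resolving it.
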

For convenience, set $\varpi=\|\hat{U}_{r}\hat{U}'_{r}-U_{r}U'_{r}\|_{2\rightarrow\infty}$ in this paper. To measure the performance of ONA for row nodes memberships, since row nodes have mixed memberships, naturally, we use the $l_{1}$ norm difference between $\Pi_{r}$ and $\hat{\Pi}_{r}$. Since column nodes are all pure nodes, we consider the performance criterion defined in \cite{joseph2016impact} to measure estimation error of ONA on column nodes. We introduce this measurement of estimation error as below.

Let $\mathcal{T}_{c}=\{\mathcal{T}_{c,1}, \mathcal{T}_{c,2}, \ldots, \mathcal{T}_{c,K_{c}}\}$ be the true partition of column nodes $\{1,2,\ldots, n_{c}\}$ obtained from $\ell$ such that $\mathcal{T}_{c,k}=\{i_{c}: \ell(i_{c})=k\}$ for $k\in[K_{c}]$. Let $\mathcal{\hat{T}}_{c}=\{\mathcal{\hat{T}}_{c,1}, \mathcal{\hat{T}}_{c,2}, \ldots, \mathcal{\hat{T}}_{c,K_{c}}\}$ be the estimated partition of column nodes $\{1,2,\ldots, n_{c}\}$ obtained from $\hat{\ell}$ of ONA such that $\mathcal{\hat{T}}_{c,k}=\{i_{c}: \hat{\ell}(i_{c})=k\}$ for $k\in[K_{c}]$. The criterion is defined as
\begin{align*}
\hat{f}_{c}=\mathrm{min}_{\pi\in S_{K_{c}}}\mathrm{max}_{k\in[K_{c}]}\frac{|\mathcal{T}_{c,k}\cap \mathcal{\hat{T}}^{c}_{c,\pi(k)}|+|\mathcal{T}^{c}_{c,k}\cap \mathcal{\hat{T}}_{c,\pi(k)}|}{n_{c,k}},
\end{align*}
where $S_{K_{c}}$ is the set of all permutations of $\{1,2,\ldots, K_{c}\}$ and the superscript $c$ denotes complementary set. As mentioned in \cite{joseph2016impact}, $\hat{f}_{c}$ measures the maximum proportion of column nodes in the symmetric difference of $\mathcal{T}_{c,k}$ and $\mathcal{\hat{T}}_{c,\pi(k)}$.

Next theorem gives theoretical bounds on estimations of memberships for both row and column nodes, which is the main theoretical result for ONA.
\begin{thm}\label{Main}
Under $ONM_{n_{r},n_{c}}(K_{r},K_{c}, P, \Pi_{r}, \Pi_{c})$, suppose conditions in Lemma \ref{rowwiseerror} hold, with probability at least $1-o((n_{r}+n_{c})^{-\alpha})$,
\begin{itemize}
  \item for row nodes, there exists a permutation matrix $\mathcal{P}_{r}$ such that
  \begin{align*}
\mathrm{max}_{i_{r}\in[n_{r}]}\|e'_{i_{r}}(\hat{\Pi}_{r}-\Pi_{r}\mathcal{P}_{r})\|_{1}=O(\varpi\kappa(\Pi'_{r}\Pi_{r})K_{r}\sqrt{\lambda_{1}(\Pi'_{r}\Pi_{r})}).
  \end{align*}
  \item for column nodes,
  \begin{align*}
    \hat{f}_{c}=O(\frac{K_{r}K_{c}\mathrm{max}(n_{r},n_{c})\mathrm{log}(n_{r}+n_{c})}{\sigma^{2}_{K_{r}}(\tilde{P})\rho \delta^{2}_{c}\sigma^{2}_{K_{r}}(\Pi_{r})n_{c,K_{r}}n_{c,\mathrm{min}}}).
  \end{align*}
  Especially, when $K_{r}=K_{c}=K$,
\begin{align*}
\hat{f}_{c}=O(\frac{K^{2}\mathrm{max}(n_{r},n_{c})n_{c,\mathrm{max}}\mathrm{log}(n_{r}+n_{c})}{\sigma^{2}_{K}(\tilde{P})\rho\sigma^{2}_{K}(\Pi_{r})n^{2}_{c,\mathrm{min}}}).
\end{align*}
\end{itemize}
\end{thm}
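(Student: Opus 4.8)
The plan is to transfer the exact recovery guaranteed by the Ideal ONA to the empirical algorithm by controlling the deviation of the empirical singular vectors $\hat U_r,\hat U_c$ from their population counterparts $U_r,U_c$ up to a shared orthogonal transformation, and then tracking how this deviation propagates through the two post-processing pipelines (SP plus pseudo-inverse for the rows, k-means for the columns). The two inputs I would rely on throughout are: (i) the row-wise control $\varpi=\|\hat U_r\hat U_r'-U_rU_r'\|_{2\rightarrow\infty}$ supplied by Lemma \ref{rowwiseerror}; and (ii) a spectral-norm control of the same flavour for the column side, which follows from the concentration bound $\|A-\Omega\|\lesssim\sqrt{\rho\max(n_r,n_c)\log(n_r+n_c)}$ (a matrix Bernstein estimate valid under Assumption \ref{a1}) combined with Wedin's $\sin\Theta$ theorem. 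In both cases I would first fix Procrustes aligners $O_r,O_c\in\mathbb{R}^{K_r\times K_r}$ so that $\|\hat U_r-U_rO_r\|_{2\rightarrow\infty}$ is controlled by $\varpi$ and $\|\hat U_c-U_cO_c\|_F$ is controlled by $\sqrt{K_r}\,\|A-\Omega\|/\sigma_{K_r}(\Omega)$.

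For the row nodes I would exploit the ideal identity $Y_r=U_rB_r'(B_rB_r')^{-1}=\Pi_r$ from Lemma \ref{RK} and its empirical analogue $\hat Y_r=\hat U_r\hat B_r'(\hat B_r\hat B_r')^{-1}$. The argument has three links. First, invoke the stability of the successive projection algorithm (the guarantee cited with Algorithm \ref{alg:SP}): since $\hat U_r$ is a row-wise perturbation of the ideal simplex $U_r=\Pi_rB_r$, the corners $\hat B_r=\hat U_r(\hat{\mathcal I}_r,:)$ returned by SP satisfy $\|\hat B_r-B_rO_r\|_{2\rightarrow\infty}\lesssim\varpi\,\kappa(B_r)$ after a suitable permutation. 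Second, propagate this through the linear map $M\mapsto MB_r'(B_rB_r')^{-1}$; the key algebraic fact is that $U_r'U_r=I_{K_r}$ forces $B_rB_r'=(\Pi_r'\Pi_r)^{-1}$, so $\sigma_{K_r}(B_r)=\lambda_1^{-1/2}(\Pi_r'\Pi_r)$ and $\kappa(B_r)=\kappa^{1/2}(\Pi_r'\Pi_r)$, which is exactly where the factors $\kappa(\Pi_r'\Pi_r)$ and $\sqrt{\lambda_1(\Pi_r'\Pi_r)}$ enter. Third, the truncation $\hat Y_r=\max(0,\hat Y_r)$ only decreases the error and the $\ell_1$ normalization divides by $\|\hat Y_r(i_r,:)\|_1\approx\|\Pi_r(i_r,:)\|_1=1$, costing only a constant. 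Chaining the three links and taking the row-wise maximum gives $O(\varpi\,\kappa(\Pi_r'\Pi_r)K_r\sqrt{\lambda_1(\Pi_r'\Pi_r)})$, the $K_r$ accounting for passing from entrywise to $\ell_1$ control across the $K_r$ coordinates.

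For the column nodes I would use the standard misclustering analysis for spectral clustering. By Lemma \ref{RK} the rows of $U_c$ take only $K_c$ distinct values separated by at least $\delta_c$, so they serve as the true cluster centers. The k-means misclustering lemma (in the $(1+\epsilon)$-approximate form used in \cite{joseph2016impact,lei2015consistency}) bounds the number of wrongly labelled column nodes by $C\,\|\hat U_c-U_cO_c\|_F^2/\delta_c^2$; converting this count into the worst-cluster fraction $\hat f_c$ contributes the factor $K_c/n_{c,\min}$ (the $K_c$ being inherent to the misclustering lemma for $K_c$ clusters). It remains to make the Frobenius error explicit: combining Wedin, the concentration bound and the lower bound
$$
\sigma_{K_r}(\Omega)\;\gtrsim\;\rho\,\sigma_{K_r}(\tilde P)\,\sigma_{K_r}(\Pi_r)\,\sqrt{n_{c,K_r}},
$$
I obtain $\|\hat U_c-U_cO_c\|_F^2\lesssim K_r\,\max(n_r,n_c)\log(n_r+n_c)/[\rho\,\sigma_{K_r}^2(\tilde P)\sigma_{K_r}^2(\Pi_r)n_{c,K_r}]$. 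Substituting into the misclustering bound yields the first displayed estimate. The special case $K_r=K_c=K$ then follows by plugging in the explicit separation $\delta_c\ge\sqrt{2/n_{c,\max}}$ from Lemma \ref{RK} (so $\delta_c^{-2}\le n_{c,\max}/2$) together with the identity $n_{c,K_r}=n_{c,\min}$.

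The step I expect to be the main obstacle is the lower bound on $\sigma_{K_r}(\Omega)$ carrying the correct $\sqrt{n_{c,K_r}}$ factor. Because $\tilde P$ has rank $K_r$ while $\Pi_c$ has rank $K_c>K_r$, one cannot simply multiply the smallest singular values of the three factors; instead I would write $\Omega=\Pi_r(\rho\tilde P)\Pi_c'$ and bound $\sigma_{K_r}(\tilde P\Pi_c')$ from below by restricting to the column subspace carrying the $K_r$ largest community interactions, which is precisely what isolates $n_{c,K_r}$ rather than $n_{c,\min}$. On the row side the delicate point is the SP stability step, whose error must be expressed uniformly through $\kappa(B_r)=\kappa^{1/2}(\Pi_r'\Pi_r)$ across all rows; everything else reduces to chaining perturbation inequalities.
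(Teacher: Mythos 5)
Your proposal is correct and takes essentially the same route as the paper's own proof: for column nodes you use the identical chain (a Procrustes/Wedin bound $\|\hat U_c\hat O-U_c\|_F\lesssim\sqrt{K_r}\,\|A-\Omega\|/\sigma_{K_r}(\Omega)$, the concentration estimate $\|A-\Omega\|=O(\sqrt{\rho\,\mathrm{max}(n_r,n_c)\mathrm{log}(n_r+n_c)})$, the lower bound $\sigma_{K_r}(\Omega)\geq\rho\,\sigma_{K_r}(\tilde P)\sigma_{K_r}(\Pi_r)\sqrt{n_{c,K_r}}$, and Lemma 2 of \cite{joseph2016impact} with the same choice of $\varsigma$, yielding $\hat f_c=O(K_c\|\hat U_c\hat O-U_c\|_F^2/(\delta_c^2 n_{c,\mathrm{min}}))$ and the same special-case substitutions $\delta_c\geq\sqrt{2/n_{c,\mathrm{max}}}$, $n_{c,K_r}=n_{c,\mathrm{min}}$), while for row nodes your SP-stability/pseudo-inverse argument is exactly an unpacking of what the paper imports as a black box via Theorem 2 of \cite{qing2021DiMMSB}. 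The only divergence is that the paper obtains the factor $\sqrt{n_{c,K_r}}$ by citing Lemma 10 of \cite{qing2021DiMMSB}, whereas you attempt an inline justification (``restricting to the column subspace carrying the $K_r$ largest community interactions'') that is heuristic and not actually carried out; since it targets precisely the inequality the paper asserts by citation, this does not constitute a different route or a gap relative to the paper's own argument.
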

Add conditions similar as Corollary 3.1 in \cite{mao2020estimating}, we have the following corollary.
\begin{cor}\label{AddConditions}
Under $ONM_{n_{r},n_{c}}(K_{r},K_{c}, P, \Pi_{r}, \Pi_{c})$, suppose conditions in Lemma \ref{rowwiseerror} hold, and further suppose that $\lambda_{K_{r}}(\Pi'_{r}\Pi_{r})=O(\frac{n_{r}}{K_{r}}), n_{c,\mathrm{min}}=O(\frac{n_{c}}{K_{c}})$, with probability at least $1-o((n_{r}+n_{c})^{-\alpha})$,
\begin{itemize}
  \item for row nodes, when $K_{r}=K_{c}=K$,
 \begin{align*}
\mathrm{max}_{i_{r}\in[n_{r}]}\|e'_{i_{r}}(\hat{\Pi}_{r}-\Pi_{r}\mathcal{P}_{r})\|_{1}=O(\frac{K^{2}(\sqrt{\frac{C\mathrm{max}(n_{r},n_{c})}{\mathrm{min}(n_{r},n_{c})}}+\sqrt{\mathrm{log}(n_{r}+n_{c})})}{\sigma_{K}(\tilde{P})\sqrt{\rho
n_{c}}}).
\end{align*}
  \item for column nodes,
  \begin{align*}
    \hat{f}_{c}=O(\frac{K^{2}_{r}K^{3}_{c}\mathrm{max}(n_{r},n_{c})\mathrm{log}(n_{r}+n_{c})}{\sigma^{2}_{K_{r}}(\tilde{P})\rho \delta^{2}_{c}n_{r}n^{2}_{c}}).
  \end{align*}
  When $K_{r}=K_{c}=K$,
\begin{align*}
\hat{f}_{c}=O(\frac{K^{4}\mathrm{max}(n_{r},n_{c})\mathrm{log}(n_{r}+n_{c})}{\sigma^{2}_{K}(\tilde{P})\rho n_{r}n_{c}}).
\end{align*}
\end{itemize}
Especially, when $n_{r}=O(n), n_{c}=O(n), K_{r}=O(1)$ and $K_{c}=O(1)$,
\begin{itemize}
  \item for row nodes, when $K_{r}=K_{c}$,
 \begin{align*}
\mathrm{max}_{i_{r}\in[n_{r}]}\|e'_{i_{r}}(\hat{\Pi}_{r}-\Pi_{r}\mathcal{P}_{r})\|_{1}=O(\frac{\sqrt{\mathrm{log}(n)}}{\sigma_{K_{r}}(\tilde{P})\sqrt{\rho
n}}).
\end{align*}
  \item for column nodes,
  \begin{align*}
    \hat{f}_{c}=O(\frac{\mathrm{log}(n)}{\sigma^{2}_{K_{r}}(\tilde{P})\rho \delta^{2}_{c}n^{2}}).
  \end{align*}
  When $K_{r}=K_{c}=K$,
\begin{align*}
\hat{f}_{c}=O(\frac{\mathrm{log}(n)}{\sigma^{2}_{K}(\tilde{P})\rho n}).
\end{align*}
\end{itemize}
\end{cor}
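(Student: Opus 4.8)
The plan is to obtain Corollary \ref{AddConditions} as a direct specialization of Theorem \ref{Main} and Lemma \ref{rowwiseerror}: once the two extra hypotheses are inserted, the argument is pure bookkeeping of orders in $n_r, n_c, K_r, K_c$. First I would record what the added conditions buy us. The hypotheses $\lambda_{K_r}(\Pi'_r\Pi_r)=O(n_r/K_r)$ and $n_{c,\mathrm{min}}=O(n_c/K_c)$ are the usual balanced, well-conditioned membership conditions; since $\lambda_{K_r}(\Pi'_r\Pi_r)\leq\|\Pi_r\|^2_F/K_r\leq n_r/K_r$ and $n_{c,\mathrm{min}}\leq n_c/K_c$ hold automatically, their operative content is the matching lower bounds together with good conditioning, i.e. $\lambda_1(\Pi'_r\Pi_r)=\Theta(\lambda_{K_r}(\Pi'_r\Pi_r))=\Theta(n_r/K_r)$ and $n_{c,\mathrm{max}}=\Theta(n_{c,\mathrm{min}})=\Theta(n_c/K_c)$. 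Consequently $\kappa(\Pi'_r\Pi_r)=O(1)$, $\sqrt{\lambda_1(\Pi'_r\Pi_r)}=O(\sqrt{n_r/K_r})$, $\sigma_{K_r}(\Pi_r)=\Theta(\sqrt{n_r/K_r})$, and $n_{c,K_r}=\Theta(n_c/K_c)$.

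Next I would control the incoherence $\mu$ and the conditioning $\kappa(\Omega)$ that sit inside $\varpi$ (these enter only through the row bound, which is stated solely in the symmetric regime $K_r=K_c$). Using the simplex identities of Lemma \ref{RK}, each row of $U_r$ is a convex combination of the rows of $B_r=U_r(\mathcal{I}_r,:)$, so $\|U_r\|_{2\rightarrow\infty}\leq\|B_r\|$; and $U'_rU_r=B'_r(\Pi'_r\Pi_r)B_r=I_{K_r}$ gives $\|B_r\|\leq 1/\sqrt{\lambda_{K_r}(\Pi'_r\Pi_r)}=O(\sqrt{K_r/n_r})$, hence $\|U_r\|^2_{2\rightarrow\infty}=O(K_r/n_r)$. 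The same computation for $U_c=\Pi_c B_c$ with $\Pi'_c\Pi_c=\mathrm{diag}(n_{c,1},\ldots,n_{c,K_c})$ yields $\|U_c\|^2_{2\rightarrow\infty}=O(1/n_{c,\mathrm{min}})=O(K_c/n_c)$. Substituting into $\mu=\mathrm{max}(n_r\|U_r\|^2_{2\rightarrow\infty}/K_r, n_c\|U_c\|^2_{2\rightarrow\infty}/K_r)$ gives $\mu=O(\mathrm{max}(1,K_c/K_r))=O(1)$ when $K_r=K_c$. Treating $\kappa(\Omega)$ as a bounded constant, via the full-rank inequality $\sigma_{K_r}(\Omega)\geq \rho\,\sigma_{K_r}(\Pi_r)\sigma_{K_r}(\tilde P)\sigma_{K_c}(\Pi_c)$ valid under (I1) together with $\kappa(\Pi_r)=\kappa(\Pi_c)=O(1)$, I would absorb the product $\kappa(\Omega)^2\mu$ into the generic constant $C$ that appears in the stated row bound.

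The displayed rates then follow by substitution. For the row bound I insert the expression for $\varpi$ from Lemma \ref{rowwiseerror} into the Theorem \ref{Main} estimate $O(\varpi\,\kappa(\Pi'_r\Pi_r)K_r\sqrt{\lambda_1(\Pi'_r\Pi_r)})$; with $K_r=K_c=K$ the spatial part of $\varpi$ reduces to $\sqrt{K}/(\sqrt{\rho}\,\sigma_K(\tilde P)\,\sqrt{n_rn_c}/K)$, and multiplying by $\kappa(\Pi'_r\Pi_r)K\sqrt{\lambda_1(\Pi'_r\Pi_r)}=O(\sqrt{Kn_r})$ cancels every $n_r$-factor, collapses the $K$-powers to $K^2$, and leaves $O(K^2(\sqrt{C\,\mathrm{max}(n_r,n_c)/\mathrm{min}(n_r,n_c)}+\sqrt{\mathrm{log}(n_r+n_c)})/(\sigma_K(\tilde P)\sqrt{\rho n_c}))$. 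The column bounds are more direct: substituting $\sigma^2_{K_r}(\Pi_r)=\Theta(n_r/K_r)$ and $n_{c,K_r}=n_{c,\mathrm{min}}=\Theta(n_c/K_c)$ into the Theorem \ref{Main} bound for $\hat f_c$ produces the $K^2_rK^3_c/(n_rn^2_c)$ form; for $K_r=K_c=K$ I then invoke $\delta_c\geq\sqrt{2/n_{c,\mathrm{max}}}$ from Lemma \ref{RK} with $n_{c,\mathrm{max}}=\Theta(n_c/K)$ to eliminate $\delta^2_c$ and recover the $K^4/(n_rn_c)$ rate. The final ``especially'' lines come from setting $n_r,n_c=O(n)$ and $K_r,K_c=O(1)$, where $\sqrt{\mathrm{log}(n)}$ dominates the constant $\sqrt{C}$ in the row numerator.

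The only genuinely delicate step is the control of $\mu$ and $\kappa(\Omega)$: boundedness of $\mu$ rests on the exact simplex identities $U_r=\Pi_r B_r$, $U_c=\Pi_c B_c$ and the weighted orthonormality $B'_r(\Pi'_r\Pi_r)B_r=I_{K_r}$ from Lemma \ref{RK}, and the claim $\kappa(\Omega)=O(1)$ needs the product singular-value bound under (I1). Everything afterward is routine order-of-magnitude algebra, the only care being to track how the powers of $K$ accumulate and to remember that the clean lower bound on $\delta_c$ — and hence the neat symmetric-case column rate — is available only when $K_r=K_c$.
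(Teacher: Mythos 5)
Your proposal is correct, and its skeleton is exactly the paper's: read the two extra hypotheses as balancedness (so $\kappa(\Pi'_{r}\Pi_{r})=O(1)$, $\sigma^{2}_{K_{r}}(\Pi_{r})=\Theta(n_{r}/K_{r})$, $n_{c,K_{r}}=\Theta(n_{c,\mathrm{min}})=\Theta(n_{c}/K_{c})$), substitute into Lemma \ref{rowwiseerror} and Theorem \ref{Main}, and track powers of $K$; your arithmetic reproduces the $K^{2}$, $K_{r}^{2}K_{c}^{3}$ and $K^{4}$ rates correctly, including using $\delta_{c}\geq\sqrt{2/n_{c,\mathrm{max}}}$ to collapse the symmetric-case column bound. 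The one place you genuinely diverge is the control of the incoherence parameter: the paper does not prove $\mu\leq C$ and $\kappa(\Omega)=O(1)$ in-house but cites the proof of Corollary 1 of \cite{qing2021DiMMSB}, and it explicitly remarks that when $K_{r}<K_{c}$ the bound on $\|U_{c}\|_{2\rightarrow\infty}$ is unavailable because $B_{c}B'_{c}$ is not invertible, which is why its row-node claim is restricted to $K_{r}=K_{c}$. Your direct argument replaces that citation: from $B'_{c}(\Pi'_{c}\Pi_{c})B_{c}=I_{K_{r}}$ the matrix $(\Pi'_{c}\Pi_{c})^{1/2}B_{c}$ has orthonormal columns, so $\|B_{c}(k,:)\|_{F}\leq 1/\sqrt{n_{c,k}}$ and $\|U_{c}\|^{2}_{2\rightarrow\infty}\leq 1/n_{c,\mathrm{min}}$ without ever inverting $B_{c}B'_{c}$; this is valid for all $K_{r}\leq K_{c}$ and gives $\mu=O(\mathrm{max}(1,K_{c}/K_{r}))$, so it actually sidesteps the obstruction the paper describes and would extend the row bound to any regime with $K_{c}/K_{r}=O(1)$, not just $K_{r}=K_{c}$. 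What your route does not buy is $\kappa(\Omega)=O(1)$: like the paper you essentially assume it (your sketch via the product singular-value inequality would, if pushed, pick up a factor $\kappa(\tilde{P})$, which is not assumed bounded since $\sigma_{K}(\tilde{P})$ is kept explicit in the final rates), so on that point you are no less rigorous than the paper's citation but also no more.
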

When $K_{r}\neq K_{c}$, though it is challenge to obtain the lower bound of $\delta_{c}$, we can roughly set $\sqrt{\frac{2}{n_{c,\mathrm{max}}}}$ as the lower bound of $\delta_{c}$ since $\delta_{c}\geq \sqrt{\frac{2}{n_{c,\mathrm{max}}}}$ when $K_{r}=K_{c}$.

When ONM degenerates to SBM by setting $\Pi_{r}=\Pi_{c}$ and all nodes are pure, applying the separation condition and sharp threshold criterion developed in \cite{qingSCSTC} on the upper bounds of error rates in Corollary \ref{AddConditions}, sure we can obtain the classical separation condition of a balanced network and sharp threshold of the Erd\"os-R\'enyi random graph $G(n,p)$ of \cite{erdos2011on}, and this guarantees the optimality of our theoretical results.
\section{The overlapping and degree-corrected nonoverlapping model}
Similar as DCSBM \cite{DCSBM} is an extension of SBM by introducing node  specific parameters to allow for varying degrees, in this section, we propose an extension of ONM by considering degree heterogeneity and build theoretical guarantees for algorithm fitting our model.

Let $\theta_{c}$ be an $n_{c}\times 1$ vector whose $i_{c}$-th entry is the degree heterogeneity of column node $i_{c}$, for $i_{c}\in[n_{c}]$. Let $\Theta_{c}$ be an $n_{c}\times n_{c}$ diagonal matrix whose $i_{c}$-th diagonal element is $\theta_{c}(i_{c})$. The extended model for generating $A$ is as follows:
\begin{align}\label{ONMODCNM}
\Omega:=\Pi_{r}P\Pi'_{c}\Theta_{c},~~~A(i_{r},i_{c})\sim\mathrm{Bernoulli}(\Omega(i_{r},i_{c}))\qquad \mathrm{for~}i_{r}\in[n_{r}],i_{c}\in[n_{c}].
\end{align}

\begin{defin}
Call model (\ref{Pir}), (\ref{Pic}), (\ref{krkc}),(\ref{definP}), (\ref{ONMODCNM}) the Overlapping and Degree-Corrected Nonoverlapping model (ODCNM) and denote it by $ODCNM_{n_{r},n_{c}}(K_{r},K_{c}, P, \Pi_{r}, \Pi_{c},\Theta_{c})$. 
\end{defin}
Note that, under ODCNM, the maximum element of $P$ can be larger than 1 since $\mathrm{max}_{i_{c}\in[n_{c}]}\theta_{c}(i_{c})$ also can control the sparsity of the directed network $\mathcal{N}$. The following proposition guarantees that ODCNM is identifiable in terms of $P, \Pi_{r}$ and $\Pi_{c}$, and such identifiability is similar as that of DCSBM and DCScBM. 
\begin{prop}\label{idODCNM}
	If conditions (I1) and (I2) hold, ODCNM is identifiable for membership matrices: For eligible $(P,\Pi_{r}, \Pi_{c},\Theta_{c})$ and $(\check{P},\check{\Pi}_{r}, \check{\Pi}_{c},\check{\Theta}_{c})$, if $\Pi_{r}P\Pi'_{c}\Theta_{c}=\check{\Pi}_{r}\check{P}\check{\Pi}'_{c}\check{\Theta}_{c}$, then $\Pi_{r}=\check{\Pi}_{r}$ and $\Pi_{c}=\check{\Pi}_{c}$.
\end{prop}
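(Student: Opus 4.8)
The plan is to reduce the identifiability of ODCNM to that of ONM (Proposition \ref{id}), treating the row and column sides separately and isolating the effect of the degree correction $\Theta_c$ entirely to the column side. Writing $\Omega=\Pi_r M$ with $M:=P\Pi_c'\Theta_c\in\mathbb{R}^{K_r\times n_c}$, I first observe that under (I1) and the positivity of $\Theta_c$ the factor $M$ has full row rank $K_r$ (because $P$ has rank $K_r$, $\Pi_c'\Theta_c$ has full row rank $K_c\geq K_r$, so $M=P(\Pi_c'\Theta_c)$ has the rank of $P$). Hence the column space of $\Omega$ equals the column space of $\Pi_r$, and likewise that of $\check\Pi_r$; crucially this subspace is unaffected by $\Theta_c$. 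Therefore $\mathrm{span}(\Pi_r)=\mathrm{span}(\check\Pi_r)$, exactly as in ONM, and the left singular matrix satisfies $U_r=\Pi_rB_r=\check\Pi_r\check B_r$ with $B_r,\check B_r$ invertible.

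Given this, the row argument behind Lemma \ref{RK} and the Ideal ONA applies verbatim: condition (I2) forces the rows of $U_r$ indexed by $\mathcal{I}_r$ to be the vertices (extreme points) of the simplex $\{U_r(i_r,:)\}_{i_r\in[n_r]}$, which are intrinsic to $\Omega$. This pins down $B_r$ up to a row permutation and hence $\Pi_r=U_rB_r'(B_rB_r')^{-1}$ up to the same permutation, so $\Pi_r=\check\Pi_r$ once the labelling convention $\Pi_r(\mathcal{I}_r,:)=I_{K_r}$ is fixed. With $\Pi_r=\check\Pi_r$ of full column rank, I then left-multiply $\Pi_rP\Pi_c'\Theta_c=\check\Pi_r\check P\check\Pi_c'\check\Theta_c$ by $(\Pi_r'\Pi_r)^{-1}\Pi_r'$ to cancel $\Pi_r$, obtaining $P\Pi_c'\Theta_c=\check P\check\Pi_c'\check\Theta_c=:Q$.

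Reading off the $i_c$-th column of $Q$ and using purity of column nodes, $\Pi_c(i_c,:)=e_{\ell(i_c)}'$, this identity becomes $\theta_c(i_c)\,P(:,\ell(i_c))=\check\theta_c(i_c)\,\check P(:,\check\ell(i_c))$ for every $i_c\in[n_c]$. Thus the $i_c$-th column of $Q$ is a positive multiple of the $\ell(i_c)$-th column of $P$, so its direction depends on $i_c$ only through the community $\ell(i_c)$. Since $\theta_c(i_c),\check\theta_c(i_c)>0$, two column nodes lie in the same community under $(\Pi_c,\Theta_c)$ if and only if the corresponding columns of $Q$ are parallel, and the same characterization holds for $(\check\Pi_c,\check\Theta_c)$ (the leftover row permutation coming from the row step does not change which columns of $Q$ are parallel). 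Hence $\Pi_c$ and $\check\Pi_c$ induce the same partition of the column nodes, i.e. $\Pi_c=\check\Pi_c$ up to relabelling, which the $\mathcal{I}_c$ convention fixes.

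The main obstacle, and the reason the statement does \emph{not} assert $P=\check P$ or $\Theta_c=\check\Theta_c$, is the scaling confounding between $P$ and $\Theta_c$: the quantity $\theta_c(i_c)P(:,\ell(i_c))$ is invariant under $P(:,k)\mapsto c_kP(:,k)$ together with $\theta_c(i_c)\mapsto\theta_c(i_c)/c_{\ell(i_c)}$, so only the membership matrices are pinned down; this is exactly the DCSBM/DCScBM phenomenon. The single place where genuine care is required is the last step: the ``parallel columns'' criterion separates communities correctly only when the columns of $P$ are pairwise non-parallel, a strictly stronger requirement than the distinctness that sufficed for ONM, where the absence of $\Theta_c$ let exact equality, rather than parallelism, detect coincidences among columns. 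I would therefore verify that eligibility of $(P,\Pi_r,\Pi_c,\Theta_c)$, meaning (I1)--(I2) together with the normalization $\max_{k,l}\tilde P(k,l)=1$, is understood to include (or is strengthened to include) pairwise non-parallel columns of $P$; once that holds the argument closes, and the whole proof is the ONM identifiability proof with equality of the rows of $U_c$ replaced by parallelism of the columns of $Q$.
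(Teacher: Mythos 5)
Your row-node step is essentially the paper's own: the paper invokes Lemma \ref{RKODCNM} to write $U_{r}=\Pi_{r}U_{r}(\mathcal{I}_{r},:)=\check{\Pi}_{r}U_{r}(\mathcal{I}_{r},:)$ and cancels the invertible corner matrix, which is exactly your column-space-plus-extreme-points argument. Your column-node step, however, takes a genuinely different route. The paper stays inside the SVD: it row-normalizes the right singular matrix to get $U_{c,*}$, shows $U_{c,*}=\Pi_{c}B_{c}=\check{\Pi}_{c}B_{c}$ (Lemma \ref{RKODCNM}), and concludes $\Pi_{c}=\check{\Pi}_{c}$ because $U_{c,*}$ is intrinsic to $\Omega$; the normalization is what absorbs $\Theta_{c}$, and since $U_{c,*}$ is precisely the object that ODCNA clusters, the identifiability proof doubles as correctness of the Ideal ODCNA algorithm. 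You instead cancel $\Pi_{r}$ (now known to equal $\check{\Pi}_{r}$, and of full column rank) from the defining equation to get $Q:=P\Pi_{c}'\Theta_{c}=\check{P}\check{\Pi}_{c}'\check{\Theta}_{c}$, and classify column nodes by parallelism of the columns of $Q$. This is more elementary --- no singular vectors are needed on the column side --- and it makes transparent both the $P$--$\Theta_{c}$ scaling confounding (hence why only the membership matrices are identified, as in DCSBM/DCScBM) and the exact combinatorial condition the argument hinges on.

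Your closing caveat --- that the parallelism criterion needs the columns of $P$ to be pairwise non-parallel (no two columns positive multiples of each other) --- is not a defect of your route relative to the paper's; it is exactly the hypothesis the paper's proof uses implicitly. Concluding $\Pi_{c}=\check{\Pi}_{c}$ from $\Pi_{c}B_{c}=\check{\Pi}_{c}B_{c}$ requires the $K_{c}$ rows of $B_{c}$ to be distinct, and from the proof of Lemma \ref{RKODCNM} one has that $V_{c}(k,:)$ is a positive multiple of $P(:,k)'\,(\Pi_{r}'U_{r}\Lambda^{-1})$ with $\Pi_{r}'U_{r}\Lambda^{-1}$ invertible, so $B_{c}(k,:)=B_{c}(l,:)$ precisely when $P(:,k)$ and $P(:,l)$ are positive multiples of one another: the two conditions coincide. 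The paper establishes distinctness of the rows of $B_{c}$ only when $K_{r}=K_{c}$, where $\mathrm{rank}(P)=K_{r}=K_{c}$ makes $P$ invertible and non-parallelism automatic; when $K_{r}<K_{c}$ it is a genuine extra assumption, as the example $K_{r}=1$, $K_{c}=2$, $P=(1,\ 1/2)$ shows (conditions (I1)--(I2) hold, yet a column node can be moved between communities while rescaling its $\theta_{c}$ entry, leaving $\Omega$ unchanged). So your proof is correct exactly where the paper's is, with the added virtue of surfacing this hidden hypothesis explicitly.
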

\begin{rem}
By setting $\theta_{c}(i_{c})=\rho$ for $i_{c}\in[n_{c}]$, ODCNM reduces to ONM, and this is the reason that ODCNM can be seen as an extension of ONM. Meanwhile, though DCScBM \cite{DISIM} can model directed networks with degree heterogeneities for both row and column nodes, DCScBM does not allow the overlapping property for nodes. For comparison, our ODCNM allows row nodes have overlapping property at the cost of losing the degree heterogeneities and requiring $K_{r}\leq K_{c}$ for model identifiability. Furthermore, another identifiable model extends ONM by considering degree heterogeneity for row nodes with overlapping property is provided in Appendix \ref{DCOOOOOONM}, in which we also explain why we do not extend ONM by considering degree heterogeneities for both row and column nodes.
\end{rem}
\subsection{A spectral algorithm for fitting ODCNM}
We now discuss our intuition for the design of our algorithm to fit ODCNM. Without causing confusion, we also use $U_{r}, U_{c}, B_{r}, B_{c}, \delta_{c}, Y_{r}$, and so on under ODCNM.  Let $U_{c,*}\in \mathbb{R}^{n_{c}\times K_{r}}$ be the row-normalized version of $U_{c}$ such that $U_{c,*}(i_{c},:)=\frac{U_{c}(i_{c},:)}{\|U_{c}(i_{c},:)\|_{F}}$ for $i_{c}\in[n_{c}]$. Then clustering the rows of $U_{c,*}$ by k-means algorithm can return perfect clustering for column nodes, and this is guaranteed by next lemma.
\begin{lem}\label{RKODCNM}
Under $ODCNM_{n_{r},n_{c}}(K_{r},K_{c}, P, \Pi_{r}, \Pi_{c},\Theta_{c})$, there exist an unique $K_{r}\times K_{r}$ matrix $B_{r}$ and an unique $K_{c}\times K_{r}$ matrix $B_{c}$ such that
\begin{itemize}
\item $U_{r}=\Pi_{r}B_{r}$ where $B_{r}=U_{r}(\mathcal{I}_{r},:)$. Meanwhile, $U_{r}(i_{r},:)=U_{r}(\bar{i}_{r},:)$ when $\Pi_{r}(i_{r},:)=\Pi_{r}(\bar{i}_{r},:)$ for $i_{r},\bar{i}_{r}\in [n_{r}]$.
  \item $U_{c,*}=\Pi_{c}B_{c}$. Meanwhile, $U_{c,*}(i_{c},:)=U_{c,*}(\bar{i}_{c},:)$ when $\ell(i_{c})=\ell(\bar{i}_{c})$ for $i_{c},\bar{i}_{c}\in[n_{c}]$. Furthermore, when $K_{r}=K_{c}=K$, we have $\|B_{c}(k,:)-B_{c}(l,:)\|_{F}=\sqrt{2}$ for all $1\leq k<l\leq K$.
\end{itemize}
\end{lem}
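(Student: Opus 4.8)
The plan is to reprove Lemma \ref{RKODCNM} by mirroring the argument for Lemma \ref{RK}, using two structural facts about the extra factor $\Theta_c$: it is diagonal and invertible (so it leaves the column space of $\Omega$ untouched), and it is a positive per-node scalar on the column side (so it is annihilated by the row-normalization defining $U_{c,*}$). First I would note that under (I1)--(I2) with $\theta_c(i_c)>0$, the factors of $\Omega=\Pi_r P\Pi_c'\Theta_c$ have ranks $K_r$, $K_r$, $K_c$, $n_c$, so $\mathrm{rank}(\Omega)=K_r$ and the compact SVD $\Omega=U_r\Lambda U_c'$ is well defined. For the row bullet the degree correction is irrelevant: since $\Omega=\Pi_r(P\Pi_c'\Theta_c)$ with the trailing factor of rank $K_r$, the column space of $\Omega$ equals that of $\Pi_r$, giving $U_r=\Pi_r X$ for a unique invertible $X$; evaluating on $\mathcal{I}_r$ yields $X=\Pi_r(\mathcal{I}_r,:)X=U_r(\mathcal{I}_r,:)=:B_r$, hence $U_r=\Pi_r B_r$ and equal membership rows force equal rows of $U_r$.

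The column bullet is where $\Theta_c$ matters. Transposing, $\Omega'=\Theta_c\Pi_c P'\Pi_r'$ shows the column space of $\Omega'$ lies in the span of the columns of $\Theta_c\Pi_c$, so $U_c=\Theta_c\Pi_c Y$ for some $K_c\times K_r$ matrix $Y$. Because each column node is pure, $\Pi_c(i_c,:)=e_{\ell(i_c)}'$ and the $i_c$-th row reads $U_c(i_c,:)=\theta_c(i_c)\,Y(\ell(i_c),:)$; thus rows of $U_c$ in a common column community differ only by the positive scalar $\theta_c(i_c)$. Row-normalizing cancels this scalar, so $U_{c,*}(i_c,:)=Y(\ell(i_c),:)/\|Y(\ell(i_c),:)\|_F=:B_c(\ell(i_c),:)$ depends only on $\ell(i_c)$, which gives $U_{c,*}=\Pi_c B_c$ and the stated within-community equality.

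For the ``furthermore'' part with $K_r=K_c=K$, I would exploit the orthonormality $U_c'U_c=I_K$. Substituting $U_c=\Theta_c\Pi_c Y$ gives $Y'\mathcal{D}Y=I_K$ with $\mathcal{D}:=\Pi_c'\Theta_c^2\Pi_c$, and purity of column nodes makes $\mathcal{D}$ the $K\times K$ diagonal matrix with entries $D_k=\sum_{i_c:\ell(i_c)=k}\theta_c(i_c)^2>0$. Hence $\mathcal{D}^{1/2}Y$ is a \emph{square} matrix with orthonormal columns, i.e.\ orthogonal, so $(\mathcal{D}^{1/2}Y)(\mathcal{D}^{1/2}Y)'=I_K$ gives $YY'=\mathcal{D}^{-1}$. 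The rows of $Y$ are therefore mutually orthogonal with $\|Y(k,:)\|_F^2=1/D_k$; normalizing preserves orthogonality, so $\langle B_c(k,:),B_c(l,:)\rangle=0$ for $k\neq l$ while $\|B_c(k,:)\|_F=1$, whence $\|B_c(k,:)-B_c(l,:)\|_F^2=2$, as claimed.

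The main obstacle is exactly the squareness used in the last step. When $K_r<K_c$ the matrix $Y$ is $K_c\times K_r$, so $\mathcal{D}^{1/2}Y$ has orthonormal columns but is rectangular; then $\mathcal{D}^{1/2}Y(\mathcal{D}^{1/2}Y)'$ is only a rank-$K_r$ orthogonal projection rather than $I_K$, the identity $YY'=\mathcal{D}^{-1}$ fails, and the normalized rows of $B_c$ need not be orthogonal. This is precisely why no clean constant-separation formula for $\delta_c$ is available unless $K_r=K_c$, consistent with the difficulty noted after Lemma \ref{RK}; the remaining bookkeeping (invertibility of $\mathcal{D}$ and non-vanishing of each $\|Y(k,:)\|_F$, ensuring the normalization is well defined) is routine once $\theta_c(i_c)>0$ and every column community is nonempty.
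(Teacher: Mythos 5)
Your proof is correct and takes essentially the same route as the paper's: your decomposition $U_c=\Theta_c\Pi_c Y$ with $Y'\mathcal{D}Y=I_{K_r}$, $\mathcal{D}=\Pi_c'\Theta_c^2\Pi_c$, is just a rescaled form of the paper's factorization $U_c=\Gamma_c V_c$ with $V_c'V_c=I_{K_r}$ (each row of $V_c$ is a positive multiple of the corresponding row of $Y$), and both arguments hinge on the same two points — row-normalization cancels the positive scalars $\theta_c(i_c)$, and when $K_r=K_c$ squareness upgrades orthonormal columns to orthonormal rows, giving the $\sqrt{2}$ separation. Your diagnosis of the failure for $K_r<K_c$ (the rectangular matrix yields only a rank-$K_r$ projection rather than the identity, so $YY'=\mathcal{D}^{-1}$ breaks down) is likewise exactly the obstruction the paper identifies.
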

Recall that we set $\delta_{c}=\mathrm{min}_{k\neq l}\|B_{c}(k,:)-B_{c}(l,:)\|_{F}$, by Lemma \ref{RKODCNM}, $\delta_{c}=\sqrt{2}$ when $K_{r}=K_{c}=K$ under $ODCNM_{n_{r},n_{c}}(K_{r},K_{c}, P, \Pi_{r}, \Pi_{c},\Theta_{c})$. However, when $K_{r}<K_{c}$, it is challenge to obtain a positive lower bound of $\delta_{c}$, see the proof of Lemma \ref{RKODCNM} for detail.

Under ODCNM, to recover $\Pi_{c}$ from $U_{c}$, since $U_{c,*}$ has $K_{c}$ distinct rows, applying k-means algorithm on all rows of $U_{c,*}$  returns true column communities by Lemma \ref{RKODCNM}; to recover $\Pi_{r}$ from $U_{r}$,  just follow same idea as that of under ONM.

Based on the above analysis, we are now ready to give the following algorithm which we call Ideal ODCNA. Input $\Omega, K_{r}, K_{c}$ with $K_{r}\leq K_{c}$. Output: $\Pi_{r}$ and $\ell$.
\begin{itemize}
  \item Let $\Omega=U_{r}\Lambda U'_{c}$ be the compact SVD of $\Omega$ such that $U_{r}\in\mathbb{R}^{n_{r}\times K_{r}},U_{c}\in\mathbb{R}^{n_{c}\times K_{r}}, \Lambda\in\mathbb{R}^{K_{r}\times K_{r}},U'_{r}U_{r}=I_{K_{r}},U'_{c}U_{c}=I_{K_{r}}$. Let $U_{c,*}$ be the row-normalization of $U_{c}$.
  \item For row nodes,
       \begin{itemize}
          \item Run SP algorithm on all rows of $U_{r}$ assuming there are $K_{r}$ row communities to obtain $U_{r}(\mathcal{I}_{r},:)$. Set $B_{r}=U_{r}(\mathcal{I}_{r},:)$.
          \item Set  $Y_{r}=U_{r}B_{r}'(B_{r}B_{r}')^{-1}$.  Recover $\Pi_{r}$ by setting $\Pi_{r}(i_{r},:)=\frac{Y_{r}(i_{r},:)}{\|Y_{r}(i_{r},:)\|_{1}}$ for $i_{r}\in[n_{r}]$.
        \end{itemize}
        For column nodes: run k-means on $U_{c,*}$ assuming there are $K_{c}$ column communities to obtain $\ell$.
\end{itemize}
Sure, Ideal ODCNA exactly recoveries row nodes memberships and column nodes labels, and this also supports the identifiability of ODCNM.

We now extend the ideal case to the real case. Let $\hat{U}_{c,*}\in\mathbb{R}^{n_{c}\times K_{r}}$ be the row-normalized version of $\hat{U}_{c}$ such that $\hat{U}_{c,*}(i_{c},:)=\frac{\hat{U}_{c}(i_{c},:)}{\|\hat{U}_{c}(i_{c},:)\|_{F}}$ for $i_{c}\in[n_{c}]$. Algorithm \ref{alg:ODCNA} called overlapping and degree-corrected nonoverlapping algorithm (ODCNA for short) is a natural extension of the Ideal ODCNA to the real case.
\begin{algorithm}
\caption{\textbf{Overlapping and Degree-Corrected Nonoverlapping Algorithm} (\textbf{ODCNA})}
\label{alg:ODCNA}
\begin{algorithmic}[1]
\Require The adjacency matrix $A\in \mathbb{R}^{n_{r}\times n_{c}}$ of a directed network, the number of row communities $K_{r}$, and the number of column communities $K_{c}$ with $K_{r}\leq K_{c}$.
\Ensure The estimated $n_{r}\times K_{r}$ membership matrix $\hat{\Pi}_{r}$ for row nodes, and the estimated $n_{c}\times 1$ labels vector $\hat{\ell}$ for column nodes.
\State Compute $\hat{U}_{r}\in\mathbb{R}^{n_{r}\times K_{r}}$ and $\hat{U}_{c}\in \mathbb{R}^{n_{c}\times K_{r}}$ from the top-$K_{r}$-dimensional SVD of $A$. Compute $\hat{U}_{c,*}$ from $\hat{U}_{c}$.
\State For row nodes:
\begin{itemize}
  \item Apply SP algorithm (i.e., Algorithm \ref{alg:SP}) on the rows of $\hat{U}_{r}$ assuming there are $K_{r}$ row clusters to obtain the near-corners matrix $\hat{U}_{r}(\mathcal{\hat{I}}_{r},:)\in\mathbb{R}^{K_{r}\times K_{r}}$, where $\mathcal{\hat{I}}_{r}$ is the index set returned by SP algorithm. Set $\hat{B}_{r}=\hat{U}_{r}(\mathcal{\hat{I}}_{r},:)$.
  \item Compute the $n_{r}\times K_{r}$ matrix $\hat{Y}_{r}$ such that $\hat{Y}_{r}=\hat{U}_{r}\hat{B}_{r}'(\hat{B}_{r}\hat{B}_{r}')^{-1}$. Set $\hat{Y}_{r}=\mathrm{max}(0, \hat{Y}_{r})$ and estimate $\Pi_{r}(i_{r},:)$ by $\hat{\Pi}_{r}(i_{r},:)=\frac{\hat{Y}_{r}(i_{r},:)}{\|\hat{Y}_{r}(i_{r},:)\|_{1}}, i_{r}\in[n_{r}]$.
\end{itemize}
For column nodes: run k-means on $\hat{U}_{c,*}$ assuming there are $K_{c}$ column communities to obtain $\hat{\ell}$.
\end{algorithmic}
\end{algorithm}
\subsection{Main results for ODCNA}
Set $\theta_{c,\mathrm{max}}=\mathrm{max}_{i_{c}\in[n_{c}]}\theta_{c}(i_{c}), \theta_{c,\mathrm{min}}=\mathrm{min}_{i_{c}\in[n_{c}]}\theta_{c}(i_{c})$, and $P_{\mathrm{max}}=\mathrm{max}_{k\in[K_{r}], l\in[n_{c}]}P(k,l)$. Assume that
\begin{assum}\label{a2}
$P_{\mathrm{max}}\mathrm{max}(\theta_{c,\mathrm{max}}n_{r},\|\theta_{c}\|_{1})\geq \mathrm{log}(n_{r}+n_{c})$.
\end{assum}
By the proof of Lemma 4.3 of \cite{qing2021DiDCMM}, we have below lemma.
\begin{lem}\label{rowwiseerrorODCNM}
	(Row-wise singular eigenvector error) Under $ODCNM_{n_{r},n_{c}}(K_{r},K_{c}, P, \Pi_{r}, \Pi_{c},\Theta_{c})$, when Assumption (\ref{a2}) holds, suppose $\sigma_{K_{r}}(\Omega)\geq C\sqrt{\theta_{c,\mathrm{max}}(n_{r}+n_{c})\mathrm{log}(n_{r}+n_{c})}$, with probability at least $1-o((n_{r}+n_{c})^{-\alpha})$,
\begin{align*}
&\|\hat{U}_{r}\hat{U}'_{r}-U_{r}U'_{r}\|_{2\rightarrow\infty}=O(\frac{\sqrt{\theta_{c,\mathrm{max}}K_{r}}(\kappa(\Omega)\sqrt{\frac{\mathrm{max}(n_{r},n_{c})\mu}{\mathrm{min}(n_{r},n_{c})}}+\sqrt{\mathrm{log}(n_{r}+n_{c})})}{\theta_{c,\mathrm{min}}\sigma_{K_{r}}(P)\sigma_{K_{r}}(\Pi_{r})\sqrt{n_{c,K_{r}}}}).
\end{align*}
\end{lem}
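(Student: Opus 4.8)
The plan is to obtain the bound by adapting the general row-wise singular subspace perturbation argument used in the proof of Lemma 4.3 of \cite{qing2021DiDCMM} to the ODCNM population matrix $\Omega=\Pi_{r}P\Pi'_{c}\Theta_{c}$. The starting point is the entrywise (two-to-infinity) eigenvector perturbation theorem for the rectangular data matrix $A$ with $\mathbb{E}[A]=\Omega$ of rank $K_{r}$ and compact SVD $\Omega=U_{r}\Lambda U'_{c}$. That theorem controls $\|\hat{U}_{r}\hat{U}'_{r}-U_{r}U'_{r}\|_{2\rightarrow\infty}$ by a ratio whose numerator collects the spectral norm of the noise $A-\Omega$ together with the incoherence/condition-number factor $\kappa(\Omega)\sqrt{\max(n_{r},n_{c})\mu/\min(n_{r},n_{c})}$ and a row-concentration term of order $\sqrt{\log(n_{r}+n_{c})}$, all multiplied by the subspace-dimension factor $\sqrt{K_{r}}$, and whose denominator is the smallest nonzero singular value $\sigma_{K_{r}}(\Omega)$. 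Thus the proof reduces to (i) verifying the hypotheses of this theorem in the Bernoulli ODCNM model, (ii) establishing the spectral concentration of $A-\Omega$, and (iii) lower bounding $\sigma_{K_{r}}(\Omega)$.

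For step (ii), since the entries $A(i_{r},i_{c})$ are independent Bernoulli variables with variance at most $\Omega(i_{r},i_{c})\le P_{\mathrm{max}}\theta_{c,\mathrm{max}}$, a rectangular matrix Bernstein inequality gives, under Assumption \ref{a2}, that $\|A-\Omega\|=O(\sqrt{\theta_{c,\mathrm{max}}\max(n_{r},n_{c})\log(n_{r}+n_{c})})$ with probability at least $1-o((n_{r}+n_{c})^{-\alpha})$; here Assumption \ref{a2} plays exactly the role that $np\ge\log n$ plays in the homogeneous case, ensuring that the Bernstein variance term dominates the boundedness term, while the SNR hypothesis $\sigma_{K_{r}}(\Omega)\ge C\sqrt{\theta_{c,\mathrm{max}}(n_{r}+n_{c})\log(n_{r}+n_{c})}$ keeps the ratio of noise to signal under control. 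This is the source of the factor $\sqrt{\theta_{c,\mathrm{max}}}$ in the numerator and of the $\sqrt{\max(n_{r},n_{c})\log(n_{r}+n_{c})}$ scaling.

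For step (iii), I would factor $\Omega=\Pi_{r}P(\Pi'_{c}\Theta_{c})$ and use $\sigma_{K_{r}}(\Omega)\ge\sigma_{K_{r}}(\Pi_{r})\,\sigma_{K_{r}}(P)\,\sigma_{K_{r}}(\Pi'_{c}\Theta_{c})$. Because $\Pi'_{c}\Theta^{2}_{c}\Pi_{c}$ is the $K_{c}\times K_{c}$ diagonal matrix whose $k$-th entry equals $\sum_{i_{c}:\ell(i_{c})=k}\theta^{2}_{c}(i_{c})\ge\theta^{2}_{c,\mathrm{min}}\,n_{c,k}$, its $K_{r}$-th largest singular value is at least $\theta_{c,\mathrm{min}}\sqrt{n_{c,K_{r}}}$, which yields $\sigma_{K_{r}}(\Omega)\ge\sigma_{K_{r}}(\Pi_{r})\sigma_{K_{r}}(P)\theta_{c,\mathrm{min}}\sqrt{n_{c,K_{r}}}$. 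Substituting this denominator and the concentration bound from step (ii) into the master inequality produces exactly the claimed rate: the factor $\theta_{c,\mathrm{min}}\sigma_{K_{r}}(P)\sigma_{K_{r}}(\Pi_{r})\sqrt{n_{c,K_{r}}}$ comes from the singular-value lower bound, and the factor $\sqrt{\theta_{c,\mathrm{max}}K_{r}}$ multiplying the bracketed term comes from the noise variance and the subspace dimension.

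The main obstacle is step (i), the row-wise control itself: one must bound the perturbation of $U_{r}$ row by row rather than merely in spectral norm, which requires a leave-one-out or entrywise decoupling argument that tracks how the inhomogeneous rows of $\Omega$ (scaled by $\Theta_{c}$) interact with the random fluctuations. The degree heterogeneity makes both the noise variance and the conditioning of $\Omega$ nonuniform, so the ratio $\theta_{c,\mathrm{max}}/\theta_{c,\mathrm{min}}$ must be propagated carefully through the argument; the subtle point relative to the $K_{r}=K_{c}$ case is that when $K_{r}<K_{c}$ it is the $K_{r}$-th largest weighted community size $n_{c,K_{r}}$, and not $n_{c,\mathrm{min}}$, that governs $\sigma_{K_{r}}(\Omega)$, so the lower bound in step (iii) must be stated in terms of $n_{c,K_{r}}$ throughout.
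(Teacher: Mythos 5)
Your three-ingredient architecture --- a generic row-wise ($2\rightarrow\infty$) SVD perturbation theorem, spectral-norm concentration of $A-\Omega$, and a lower bound on $\sigma_{K_{r}}(\Omega)$ --- is exactly the paper's route: the paper's entire proof of this lemma is the citation to the proof of Lemma 4.3 of \cite{qing2021DiDCMM}, and when the paper does display these ingredients explicitly (in the proof of Theorem \ref{MainODCNM}, Eqs.~(\ref{UC2ODCNM}) and (\ref{UC3ODCNM})) they are the same bounds you propose; your concentration bound $\|A-\Omega\|=O(\sqrt{\theta_{c,\mathrm{max}}\mathrm{max}(n_{r},n_{c})\mathrm{log}(n_{r}+n_{c})})$ is a harmless weakening of the paper's $O(\sqrt{\mathrm{max}(\theta_{c,\mathrm{max}}n_{r},\|\theta_{c}\|_{1})\mathrm{log}(n_{r}+n_{c})})$ and matches the $\sqrt{\theta_{c,\mathrm{max}}}$ appearing in the lemma's numerator.

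There is, however, a genuine gap in your step (iii), and it matters precisely in the regime $K_{r}<K_{c}$ that distinguishes this paper from its references. The inequality $\sigma_{K_{r}}(\Pi_{r}P\Pi'_{c}\Theta_{c})\geq\sigma_{K_{r}}(\Pi_{r})\sigma_{K_{r}}(P)\sigma_{K_{r}}(\Pi'_{c}\Theta_{c})$ pairs the $K_{r}$-th largest singular values of all three factors, and this is not a valid matrix inequality when the inner rank $K_{c}$ exceeds $K_{r}$: the top-$K_{r}$ singular directions of $\Theta_{c}\Pi_{c}$ need not align with the row space of $\Pi_{r}P$. Concretely, take $K_{r}=1$, $K_{c}=2$, $\Theta_{c}=I$, $\Pi_{r}=\mathbf{1}$, $P=(1,0)$, $n_{c,1}=1$, $n_{c,2}=100$; all identifiability conditions hold, yet $\Omega=\mathbf{1}(\Pi_{c}(:,1))'$ has $\sigma_{1}(\Omega)=\sqrt{n_{r}}$ while your claimed lower bound equals $\sigma_{1}(\Pi_{r})\sigma_{1}(P)\sigma_{1}(\Pi_{c})=\sqrt{n_{r}}\cdot 1\cdot 10$, so the bound fails by a factor of $10$ (and the discrepancy $\sqrt{n_{c,K_{r}}/n_{c,\mathrm{min}}}$ can be made arbitrarily large). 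What is true, using $N'N\succeq\sigma^{2}_{K_{c}}(N)I_{K_{c}}$ for the full-column-rank factor $N=\Theta_{c}\Pi_{c}$, is $\sigma_{K_{r}}(\Omega)\geq\sigma_{K_{r}}(\Pi_{r}P)\,\sigma_{K_{c}}(\Theta_{c}\Pi_{c})\geq\theta_{c,\mathrm{min}}\sigma_{K_{r}}(P)\sigma_{K_{r}}(\Pi_{r})\sqrt{n_{c,\mathrm{min}}}$, i.e.\ with $n_{c,\mathrm{min}}$ in place of $n_{c,K_{r}}$; the two coincide exactly when $K_{r}=K_{c}$, which is the setting in which the cited Lemma 4 of \cite{qing2021DiDCMM} was established. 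So the assertion in your closing paragraph --- that $n_{c,K_{r}}$ rather than $n_{c,\mathrm{min}}$ ``governs'' $\sigma_{K_{r}}(\Omega)$ when $K_{r}<K_{c}$ --- is precisely the unproved (indeed false in general) step; a self-contained proof must either restrict to $K_{r}=K_{c}$ or run the argument with $\sqrt{n_{c,\mathrm{min}}}$ in the denominator. (The paper inherits the same weakness by importing the bound from a reference proved under $K_{r}=K_{c}$, so your proposal faithfully reproduces the paper's route, but as a standalone argument this step needs repair.)
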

Next theorem is the main theoretical result for ODCNA, where we also use same measurements as ONA to measure the performances of ODCNA.
\begin{thm}\label{MainODCNM}
Under $ODCNM_{n_{r},n_{c}}(K_{r},K_{c}, P, \Pi_{r}, \Pi_{c},\Theta_{c})$, suppose conditions in Lemma \ref{rowwiseerrorODCNM} hold, with probability at least $1-o((n_{r}+n_{c})^{-\alpha})$,
\begin{itemize}
  \item for row nodes,
  \begin{align*}
\mathrm{max}_{i_{r}\in[n_{r}]}\|e'_{i_{r}}(\hat{\Pi}_{r}-\Pi_{r}\mathcal{P}_{r})\|_{1}=O(\varpi\kappa(\Pi'_{r}\Pi_{r})K_{r}\sqrt{\lambda_{1}(\Pi'_{r}\Pi_{r})}).
  \end{align*}
  \item for column nodes,
  \begin{align*}
\hat{f}_{c}=O(\frac{\theta^{2}_{c,\mathrm{max}}K_{r}K_{c}\mathrm{max}(
\theta_{c,\mathrm{max}}n_{r},\|\theta_{c}\|_{1})n_{c,\mathrm{max}}\mathrm{log}(n_{r}+n_{c})}{\sigma^{2}_{K_{r}}(P)\theta^{4}_{c,\mathrm{min}}\delta^{2}_{c}m^{2}_{V_{c}}\sigma^{2}_{K_{r}}(\Pi_{r})n_{c,K_{r}}n_{c,\mathrm{min}}}),
  \end{align*}
where $m_{V_{c}}$ is a parameter defined in the proof of this theorem, and it is $1$ when $K_{r}=K_{c}$. Especially, when $K_{r}=K_{c}=K$,
\begin{align*}
\hat{f}_{c}=O(\frac{\theta^{2}_{c,\mathrm{max}}K^{2}\mathrm{max}(
\theta_{c,\mathrm{max}}n_{r},\|\theta_{c}\|_{1})n_{c,\mathrm{max}}\mathrm{log}(n_{r}+n_{c})}{\sigma^{2}_{K}(P)\theta^{4}_{c,\mathrm{min}}\sigma^{2}_{K}(\Pi_{r})n^{2}_{c,\mathrm{min}}}).
\end{align*}
\end{itemize}
\end{thm}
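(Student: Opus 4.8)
The plan is to bound the row-node and column-node errors separately, reusing the ONM analysis wherever ODCNA and ONA coincide.

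For the row nodes, the recovery step of ODCNA---SP on the rows of $\hat U_r$, then $\hat Y_r=\hat U_r\hat B_r'(\hat B_r\hat B_r')^{-1}$, truncation by $\mathrm{max}(0,\cdot)$ and $\ell_1$-normalization---is identical to the row step of ONA, and Lemma \ref{RKODCNM} supplies the very same ideal-simplex identity $U_r=\Pi_r B_r$ with $B_r=U_r(\mathcal I_r,:)$ that Lemma \ref{RK} gives under ONM. Hence I would transcribe the row part of Theorem \ref{Main} verbatim: invoke the stability of the SP algorithm to control $\|\hat B_r-B_r\mathcal P_r\|$ by $\varpi$ and the conditioning of $B_r$; push this through the linear map $\hat U_r\hat B_r'(\hat B_r\hat B_r')^{-1}$ using $U_r'U_r=I_{K_r}$ together with $B_r'(\Pi_r'\Pi_r)B_r=I_{K_r}$, so that the singular values of $B_r$ are governed by the eigenvalues of $\Pi_r'\Pi_r$; and finally note that truncation and normalization are nonexpansive on the relevant domain. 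The only model-specific ingredient is the row-wise deviation $\varpi=\|\hat U_r\hat U_r'-U_rU_r'\|_{2\to\infty}$, now supplied by Lemma \ref{rowwiseerrorODCNM} in place of Lemma \ref{rowwiseerror}. This reproduces the stated $O(\varpi\,\kappa(\Pi_r'\Pi_r)K_r\sqrt{\lambda_1(\Pi_r'\Pi_r)})$, identical in form to ONA.

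For the column nodes the two algorithms diverge, since ODCNA clusters the rows of the row-normalized $\hat U_{c,*}$. By Lemma \ref{RKODCNM} the population embedding obeys $U_{c,*}=\Pi_c B_c$ with exactly $K_c$ distinct rows and minimum center separation $\delta_c=\mathrm{min}_{k\ne l}\|B_c(k,:)-B_c(l,:)\|_F$. I would then apply the standard $(1+\epsilon)$-approximate k-means misclustering bound used in \cite{joseph2016impact}: on the high-probability event the number of misclustered column nodes in community $k$ is of order $\|\hat U_{c,*}-U_{c,*}\mathcal O_c\|_F^2/\delta_c^2$ for an orthogonal alignment $\mathcal O_c$, so that $\hat f_c=O(\|\hat U_{c,*}-U_{c,*}\mathcal O_c\|_F^2/(\delta_c^2 n_{c,\mathrm{min}}))$. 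It remains to control the normalized subspace error. First, a Bernstein bound under Assumption \ref{a2} gives $\|A-\Omega\|=O(\sqrt{\theta_{c,\mathrm{max}}\,\mathrm{max}(\theta_{c,\mathrm{max}}n_r,\|\theta_c\|_1)\mathrm{log}(n_r+n_c)})$, and Davis--Kahan yields $\|\hat U_c-U_c\mathcal O_c\|_F=O(\sqrt{K_r}\,\|A-\Omega\|/\sigma_{K_r}(\Omega))$, with $\sigma_{K_r}(\Omega)$ lower bounded (as already used in Lemma \ref{rowwiseerrorODCNM}) by a multiple of $\theta_{c,\mathrm{min}}\sigma_{K_r}(P)\sigma_{K_r}(\Pi_r)\sqrt{n_{c,K_r}}$.

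The crux---and the step I expect to be the main obstacle---is passing from the un-normalized error to the normalized one. Because $\|U_c(i_c,:)\|_F$ scales with the degree parameter $\theta_c(i_c)$, low-degree rows are disproportionately sensitive to perturbation; using the elementary inequality $\|\hat U_{c,*}(i_c,:)-U_{c,*}(i_c,:)\mathcal O_c\|\le 2\|\hat U_c(i_c,:)-U_c(i_c,:)\mathcal O_c\|/\|U_c(i_c,:)\|$ and summing over $i_c$, the amplification is governed by a lower bound on $\mathrm{min}_{i_c}\|U_c(i_c,:)\|$. Establishing this lower bound in terms of $\theta_{c,\mathrm{min}}$ and the simplex geometry is the technical heart: it is what produces the $\theta_{c,\mathrm{min}}^{-4}$ (two normalization factors, squared through the k-means step) against $\theta_{c,\mathrm{max}}^2$ in the numerator, and in the general regime $K_r<K_c$ the auxiliary quantity $m_{V_c}$ is precisely the factor recording this minimum-row-norm geometry, degenerating to $m_{V_c}=1$ when $K_r=K_c$ (where Lemma \ref{RKODCNM} gives the clean $\delta_c=\sqrt2$). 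Assembling the concentration bound, the lower bound on $\sigma_{K_r}(\Omega)$, the normalization amplification and the k-means bound, then dividing by $n_{c,\mathrm{min}}$, yields the stated rate; setting $K_r=K_c=K$ collapses $m_{V_c}\to1$ and $\delta_c\to\sqrt2$ and recovers the displayed special case. Throughout, the row bound coincides exactly with that of ONA, while the extra powers of $\theta_c$ in the column bound reflect precisely the row-normalization and degree heterogeneity that are absent from ONA.
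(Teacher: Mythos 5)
Your overall route is the same as the paper's: the row-node bound is inherited from the ONM/DiMMSB machinery (the paper simply cites Theorem 2 of \cite{qing2021DiMMSB}, with $\varpi$ now supplied by Lemma \ref{rowwiseerrorODCNM}), and the column-node bound is assembled exactly as in the paper's proof, namely a Wedin-type bound $\|\hat{U}_{c}\hat{O}-U_{c}\|_{F}\leq 2\sqrt{2K_{r}}\|A-\Omega\|/\sigma_{K_{r}}(\Omega)$, the lower bound $\sigma_{K_{r}}(\Omega)\geq \theta_{c,\mathrm{min}}\sigma_{K_{r}}(P)\sigma_{K_{r}}(\Pi_{r})\sqrt{n_{c,K_{r}}}$, the row-normalization amplification controlled by $m_{c}=\mathrm{min}_{i_{c}}\|U_{c}(i_{c},:)\|_{F}\geq \frac{\theta_{c,\mathrm{min}}}{\theta_{c,\mathrm{max}}\sqrt{n_{c,\mathrm{max}}}}m_{V_{c}}$, and the k-means bound of Lemma 2 in \cite{joseph2016impact}. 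You also correctly identify the crux: $m_{V_{c}}=\mathrm{min}_{k}\|V_{c}(k,:)\|_{F}$ is exactly the quantity the paper introduces, with $m_{V_{c}}=1$ and $\delta_{c}=\sqrt{2}$ when $K_{r}=K_{c}$ by Lemma \ref{RKODCNM}, and the $\theta^{2}_{c,\mathrm{max}}/\theta^{4}_{c,\mathrm{min}}$ bookkeeping you describe is the paper's.

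There is, however, one concrete flaw: your concentration bound $\|A-\Omega\|=O(\sqrt{\theta_{c,\mathrm{max}}\,\mathrm{max}(\theta_{c,\mathrm{max}}n_{r},\|\theta_{c}\|_{1})\mathrm{log}(n_{r}+n_{c})})$ carries a spurious factor $\sqrt{\theta_{c,\mathrm{max}}}$. Under ODCNM the entrywise variances satisfy $\mathrm{Var}(A(i_{r},i_{c}))\leq \theta_{c}(i_{c})P_{\mathrm{max}}$, so the row/column variance sums are at most $P_{\mathrm{max}}\|\theta_{c}\|_{1}$ and $P_{\mathrm{max}}\theta_{c,\mathrm{max}}n_{r}$; the degree scaling is thus already inside $\mathrm{max}(\theta_{c,\mathrm{max}}n_{r},\|\theta_{c}\|_{1})$, and the remaining prefactor is $P_{\mathrm{max}}$ (treated as bounded), not $\theta_{c,\mathrm{max}}$ — this is the paper's Lemma 4.2 of \cite{qing2021DiDCMM}, which reads $\|A-\Omega\|=O(\sqrt{\mathrm{max}(\theta_{c,\mathrm{max}}n_{r},\|\theta_{c}\|_{1})\mathrm{log}(n_{r}+n_{c})})$. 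If you actually propagate your version through your own assembly, the numerator of $\hat{f}_{c}$ comes out as $\theta^{3}_{c,\mathrm{max}}$ rather than the stated $\theta^{2}_{c,\mathrm{max}}$, so the claim that the assembly ``yields the stated rate'' does not follow as written (and in the regime $\theta_{c,\mathrm{max}}\leq 1$ your inequality is strictly stronger than what Bernstein gives, hence unjustified). A secondary, minor omission: the misclustering bound from \cite{joseph2016impact} should carry the factor $K_{c}$, i.e. $\hat{f}_{c}=O(K_{c}\|\hat{U}_{c,*}\hat{O}-U_{c,*}\|^{2}_{F}/(\delta^{2}_{c}n_{c,\mathrm{min}}))$, which is where the $K_{c}$ in the theorem comes from. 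Both issues are repaired by quoting the correct concentration lemma and the full form of the k-means lemma; the rest of your argument then coincides with the paper's proof.
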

Add some conditions on model parameters, we have the following corollary.
\begin{cor}\label{AddConditionsODCNM}
Under $ODCNM_{n_{r},n_{c}}(K_{r},K_{c}, P, \Pi_{r}, \Pi_{c},\Theta_{c})$, suppose conditions in Lemma \ref{rowwiseerrorODCNM} hold, and further suppose that $\lambda_{K_{r}}(\Pi'_{r}\Pi_{r})=O(\frac{n_{r}}{K_{r}}), n_{c,\mathrm{min}}=O(\frac{n_{c}}{K_{c}})$, with probability at least $1-o((n_{r}+n_{c})^{-\alpha})$,
\begin{itemize}
  \item for row nodes, when $K_{r}=K_{c}=K$,
\begin{align*}
\mathrm{max}_{i_{r}\in[n_{r}]}\|e'_{i_{r}}(\hat{\Pi}_{r}-\Pi_{r}\mathcal{P}_{r})\|_{1}=O(\frac{K^{2}\sqrt{\theta_{c,\mathrm{max}}}(\sqrt{\frac{C\mathrm{max}(n_{r},n_{c})}{\mathrm{min}(n_{r},n_{c})}}+\sqrt{\mathrm{log}(n_{r}+n_{c})})}{\theta_{c,\mathrm{min}}\sigma_{K}(P)\sqrt{n_{c}}}).
\end{align*}
  \item for column nodes,
  \begin{align*}
\hat{f}_{c}=O(\frac{\theta^{2}_{c,\mathrm{max}}K^{2}_{r}K^{2}_{c}\mathrm{max}(
\theta_{c,\mathrm{max}}n_{r},\|\theta_{c}\|_{1})\mathrm{log}(n_{r}+n_{c})}{\sigma^{2}_{K_{r}}(P)\theta^{4}_{c,\mathrm{min}}\delta^{2}_{c}m^{2}_{V_{c}}n_{r}n_{c}}).
  \end{align*}
  When $K_{r}=K_{c}=K$,
\begin{align*}
\hat{f}_{c}=O(\frac{\theta^{2}_{c,\mathrm{max}}K^{4}\mathrm{max}(
\theta_{c,\mathrm{max}}n_{r},\|\theta_{c}\|_{1})\mathrm{log}(n_{r}+n_{c})}{\sigma^{2}_{K}(P)\theta^{4}_{c,\mathrm{min}}n_{r}n_{c}}).
\end{align*}
\end{itemize}
Especially, when $n_{r}=O(n), n_{c}=O(n), K_{r}=O(1)$ and $K_{c}=O(1)$,
\begin{itemize}
  \item for row nodes, when $K_{r}=K_{c}$,
\begin{align*}
\mathrm{max}_{i_{r}\in[n_{r}]}\|e'_{i_{r}}(\hat{\Pi}_{r}-\Pi_{r}\mathcal{P}_{r})\|_{1}=O(\frac{\sqrt{\theta_{c,\mathrm{max}}\mathrm{log}(n)}}{\theta_{c,\mathrm{min}}\sigma_{K}(P)\sqrt{n}}).
\end{align*}
  \item for column nodes,
  \begin{align*}
\hat{f}_{c}=O(\frac{\theta^{2}_{c,\mathrm{max}}\mathrm{max}(
\theta_{c,\mathrm{max}}n_{r},\|\theta_{c}\|_{1})\mathrm{log}(n)}{\sigma^{2}_{K_{r}}(P)\theta^{4}_{c,\mathrm{min}}\delta^{2}_{c}m^{2}_{V_{c}}n^{2}}).
  \end{align*}
  When $K_{r}=K_{c}=K$,
\begin{align*}
\hat{f}_{c}=O(\frac{\theta^{2}_{c,\mathrm{max}}\mathrm{max}(
\theta_{c,\mathrm{max}}n_{r},\|\theta_{c}\|_{1})\mathrm{log}(n)}{\sigma^{2}_{K}(P)\theta^{4}_{c,\mathrm{min}}n^{2}}).
\end{align*}
\end{itemize}
\end{cor}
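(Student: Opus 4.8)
The plan is to obtain Corollary \ref{AddConditionsODCNM} as a direct specialization of Theorem \ref{MainODCNM}: I would take the two error bounds already established there and substitute the extra balance assumptions $\lambda_{K_r}(\Pi'_r\Pi_r)=O(n_r/K_r)$ and $n_{c,\mathrm{min}}=O(n_c/K_c)$, together with the row-wise eigenvector bound of Lemma \ref{rowwiseerrorODCNM}, then simplify. No new probabilistic argument is needed, so the high-probability guarantee $1-o((n_r+n_c)^{-\alpha})$ is inherited verbatim from Theorem \ref{MainODCNM} and Lemma \ref{rowwiseerrorODCNM}.

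First I would translate the balance assumption on $\Pi_r$ into the quantities appearing in the row bound. Using $\mathrm{tr}(\Pi'_r\Pi_r)=\sum_{i_r}\|\Pi_r(i_r,:)\|_2^2\leq\sum_{i_r}\|\Pi_r(i_r,:)\|_1^2=n_r$, the $K_r$ eigenvalues of $\Pi'_r\Pi_r$ sum to at most $n_r$; combined with $\lambda_{K_r}(\Pi'_r\Pi_r)\asymp n_r/K_r$ this forces $\lambda_1(\Pi'_r\Pi_r)\asymp n_r/K_r$ and hence $\kappa(\Pi'_r\Pi_r)=O(1)$ and $\sigma_{K_r}(\Pi_r)=\sqrt{\lambda_{K_r}(\Pi'_r\Pi_r)}\asymp\sqrt{n_r/K_r}$. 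Plugging these into the row bound $O(\varpi\,\kappa(\Pi'_r\Pi_r)K_r\sqrt{\lambda_1(\Pi'_r\Pi_r)})$ of Theorem \ref{MainODCNM} leaves $O(\varpi\sqrt{K_r n_r})$, and then expanding $\varpi$ with Lemma \ref{rowwiseerrorODCNM} (using $\sigma_{K_r}(\Pi_r)\sqrt{n_{c,K_r}}\asymp\sqrt{n_r n_c}/K$ when $K_r=K_c=K$, since then $n_{c,K_r}=n_{c,\mathrm{min}}$) yields the stated row expression; the factors $\kappa(\Omega)$ and $\mu$ are $O(1)$ in the balanced regime and are absorbed into the constant $C$ appearing under the square root.

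For the column bound I would carry out the same substitution into the $\hat f_c$ estimate of Theorem \ref{MainODCNM}. Under balance all column community sizes are comparable, $n_{c,\mathrm{max}}\asymp n_{c,K_r}\asymp n_{c,\mathrm{min}}\asymp n_c/K_c$, and $\sigma_{K_r}^2(\Pi_r)\asymp n_r/K_r$; substituting these and collecting the powers of $K_r,K_c,n_r,n_c$ collapses the combinatorial prefactor to $K_r^2K_c^2/(n_r n_c)$, giving the general column display. Specializing to $K_r=K_c=K$, I would invoke Lemma \ref{RKODCNM} for $\delta_c=\sqrt2$ and the fact recorded in Theorem \ref{MainODCNM} that $m_{V_c}=1$ when $K_r=K_c$, so both drop out; setting finally $n_r,n_c=O(n)$ and $K_r,K_c=O(1)$ reduces every combinatorial factor to $O(1)$ and $\log(n_r+n_c)$ to $\log n$, producing the three successively simpler forms for each of the row and column bounds.

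The step I expect to be the main obstacle is the eigenvalue bookkeeping that upgrades the one-sided balance hypothesis into the two-sided statement $\lambda_1(\Pi'_r\Pi_r)\asymp\lambda_{K_r}(\Pi'_r\Pi_r)\asymp n_r/K_r$ needed for $\kappa(\Pi'_r\Pi_r)=O(1)$: one must use the trace bound to cap $\lambda_1$ from above while the balance hypothesis caps $\lambda_{K_r}$ from below, and these are only simultaneously tight when the row communities are genuinely balanced and most row nodes are (near-)pure, so that $\mathrm{tr}(\Pi'_r\Pi_r)\approx n_r$. A secondary point requiring care is verifying that $\kappa(\Omega)$ and the incoherence parameter $\mu$ are indeed $O(1)$ under these assumptions, since otherwise they cannot legitimately be folded into the constant $C$ that appears in the simplified row bound.
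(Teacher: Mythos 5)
Your proposal is correct and follows essentially the same route as the paper's own proof: substitute the balance conditions into the two bounds of Theorem \ref{MainODCNM}, expand $\varpi$ via Lemma \ref{rowwiseerrorODCNM} (using $\sigma_{K}(\Pi_{r})\sqrt{n_{c,K_{r}}}\asymp\sqrt{n_{r}n_{c}}/K$ when $K_{r}=K_{c}=K$), and invoke $\delta_{c}=\sqrt{2}$ and $m_{V_{c}}=1$ from Lemma \ref{RKODCNM} in the equal-$K$ case. The one item you flag as unverified, namely that $\kappa(\Omega)=O(1)$ and $\mu\leq C$ in the balanced regime, is precisely what the paper disposes of by citing Lemma 2 of \cite{qing2021DiDCMM}, and the unavailability of such a bound on $\mu$ when $K_{r}<K_{c}$ is also the paper's stated reason for restricting the row-node claim to $K_{r}=K_{c}$, consistent with your treatment.
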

When $K_{r}\neq K_{c}$, though it is challenge to obtain the lower bounds of $\delta_{c}$ and $m_{V_{c}}$, we can roughly set $\sqrt{2}$ and $1$ as the lower bounds of $\delta_{c}$ and  $m_{V_{c}}$, respectively, since $\delta_{c}=\sqrt{2}$ and $m_{V_{c}}=1$ when $K_{r}=K_{c}$. Meanwhile, if we further set $\theta_{c,\mathrm{max}}=O(\rho)$ and $\theta_{c,\mathrm{min}}=O(\rho)$, we have below corollary.
\begin{cor}\label{AddRho}
Under $ODCNM_{n_{r},n_{c}}(K_{r},K_{c}, P, \Pi_{r}, \Pi_{c},\Theta_{c})$, suppose conditions in Lemma \ref{rowwiseerrorODCNM} hold, and further suppose that $\lambda_{K_{r}}(\Pi'_{r}\Pi_{r})=O(\frac{n_{r}}{K_{r}}), n_{c,\mathrm{min}}=O(\frac{n_{c}}{K_{c}})$ and $\theta_{c,\mathrm{max}}=O(\rho), \theta_{c,\mathrm{min}}=O(\rho)$, with probability at least $1-o((n_{r}+n_{c})^{-\alpha})$,
\begin{itemize}
  \item for row nodes, when $K_{r}=K_{c}=K$,
\begin{align*}
\mathrm{max}_{i_{r}\in[n_{r}]}\|e'_{i_{r}}(\hat{\Pi}_{r}-\Pi_{r}\mathcal{P}_{r})\|_{1}=O(\frac{K^{2}(\sqrt{\frac{C\mathrm{max}(n_{r},n_{c})}{\mathrm{min}(n_{r},n_{c})}}+\sqrt{\mathrm{log}(n_{r}+n_{c})})}{\sigma_{K}(P)\sqrt{\rho n_{c}}}).
\end{align*}
  \item for column nodes,
  \begin{align*}
\hat{f}_{c}=O(\frac{K^{2}_{r}K^{2}_{c}\mathrm{max}(n_{r},n_{c})\mathrm{log}(n_{r}+n_{c})}{\sigma^{2}_{K_{r}}(P)\rho\delta^{2}_{c}m^{2}_{V_{c}}n_{r}n_{c}}).
  \end{align*}
  When $K_{r}=K_{c}=K$,
\begin{align*}
\hat{f}_{c}=O(\frac{K^{4}\mathrm{max}(n_{r},n_{c})\mathrm{log}(n_{r}+n_{c})}{\sigma^{2}_{K}(P)\rho n_{r}n_{c}}).
\end{align*}
\end{itemize}
Especially, when $n_{r}=O(n), n_{c}=O(n), K_{r}=O(1)$ and $K_{c}=O(1)$,
\begin{itemize}
  \item for row nodes, when $K_{r}=K_{c}$,\
\begin{align*}
\mathrm{max}_{i_{r}\in[n_{r}]}\|e'_{i_{r}}(\hat{\Pi}_{r}-\Pi_{r}\mathcal{P}_{r})\|_{1}=O(\frac{\sqrt{\mathrm{log}(n)}}{\sigma_{K}(P)\sqrt{\rho n}}).
\end{align*}
  \item for column nodes,
  \begin{align*}
\hat{f}_{c}=O(\frac{\mathrm{log}(n)}{\sigma^{2}_{K_{r}}(P)\rho\delta^{2}_{c}m^{2}_{V_{c}}n}).
  \end{align*}
  When $K_{r}=K_{c}=K$,
\begin{align*}
\hat{f}_{c}=O(\frac{\mathrm{log}(n)}{\sigma^{2}_{K}(P)\rho n}).
\end{align*}
\end{itemize}
\end{cor}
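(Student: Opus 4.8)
The plan is to obtain Corollary \ref{AddRho} purely as a specialization of Corollary \ref{AddConditionsODCNM}: the extra hypotheses $\theta_{c,\mathrm{max}}=O(\rho)$ and $\theta_{c,\mathrm{min}}=O(\rho)$ are substituted into the already-established bounds and the powers of $\rho$ are collected. No new probabilistic or spectral argument is needed, because the high-probability event and the singular-value condition are inherited verbatim from Lemma \ref{rowwiseerrorODCNM} and Corollary \ref{AddConditionsODCNM}; all that changes is how the degree parameters are expressed in terms of $\rho$.

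First I would record the one auxiliary estimate that makes the degree term collapse. Since every entry of $\theta_{c}$ is at most $\theta_{c,\mathrm{max}}=O(\rho)$, we have $\|\theta_{c}\|_{1}\leq n_{c}\theta_{c,\mathrm{max}}=O(\rho n_{c})$, while $\theta_{c,\mathrm{max}}n_{r}=O(\rho n_{r})$; hence $\mathrm{max}(\theta_{c,\mathrm{max}}n_{r},\|\theta_{c}\|_{1})=O(\rho\,\mathrm{max}(n_{r},n_{c}))$. This is the only step where the control on $\theta_{c}$ must be applied with care, so as not to inflate the $\mathrm{max}(n_{r},n_{c})$ dependence of the numerator.

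The column bound then follows by bookkeeping of $\rho$-powers. Inserting $\theta^{2}_{c,\mathrm{max}}=O(\rho^{2})$, the auxiliary estimate above, and $\theta^{4}_{c,\mathrm{min}}=O(\rho^{4})$ into the column bound of Corollary \ref{AddConditionsODCNM}, the numerator contributes a factor $\rho^{2}\cdot\rho=\rho^{3}$ and the denominator contributes $\rho^{4}$, leaving a net factor $\rho^{-1}$; this is exactly the $\frac{1}{\rho}$ appearing in the claimed bound for $\hat{f}_{c}$. For the row bound I would substitute $\sqrt{\theta_{c,\mathrm{max}}}/\theta_{c,\mathrm{min}}=\sqrt{\rho}/\rho=1/\sqrt{\rho}$ into the row bound of Corollary \ref{AddConditionsODCNM}, which converts the degree prefactor into the stated $1/\sqrt{\rho n_{c}}$.

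Finally, for the ``especially'' clause I would impose $n_{r}=O(n)$, $n_{c}=O(n)$, $K_{r}=O(1)$, $K_{c}=O(1)$, so that $\mathrm{max}(n_{r},n_{c})=O(n)$, $n_{r}n_{c}=O(n^{2})$, $\mathrm{max}(n_{r},n_{c})/\mathrm{min}(n_{r},n_{c})=O(1)$, $\mathrm{log}(n_{r}+n_{c})=O(\mathrm{log}(n))$, and all $K$-factors are absorbed into the constant; in the balanced case $K_{r}=K_{c}=K$ I would additionally use $\delta_{c}=\sqrt{2}$ and $m_{V_{c}}=1$ from Lemma \ref{RKODCNM} to clear those two quantities. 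The only point deserving explicit verification is consistency: one should check that $\theta_{c,\mathrm{max}}=O(\rho)$ together with Assumption \ref{a2} and the hypothesis $\sigma_{K_{r}}(\Omega)\geq C\sqrt{\theta_{c,\mathrm{max}}(n_{r}+n_{c})\mathrm{log}(n_{r}+n_{c})}$ of Lemma \ref{rowwiseerrorODCNM} remains satisfiable, so that the conditioning event of Corollary \ref{AddConditionsODCNM} genuinely holds under the new scaling. Beyond this algebraic simplification there is no substantive obstacle.
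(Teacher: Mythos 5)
Your proposal is correct and matches the paper's treatment exactly: the paper supplies no separate proof of Corollary \ref{AddRho}, obtaining it precisely as you do by substituting $\theta_{c,\mathrm{max}}=O(\rho)$ and $\theta_{c,\mathrm{min}}=O(\rho)$ (hence $\mathrm{max}(\theta_{c,\mathrm{max}}n_{r},\|\theta_{c}\|_{1})=O(\rho\,\mathrm{max}(n_{r},n_{c}))$) into the bounds of Corollary \ref{AddConditionsODCNM} and collecting powers of $\rho$. Your bookkeeping (net factor $\rho^{3}/\rho^{4}=\rho^{-1}$ for the column bound, $\sqrt{\rho}/\rho=1/\sqrt{\rho}$ for the row bound, with $\delta_{c}=\sqrt{2}$ and $m_{V_{c}}=1$ from Lemma \ref{RKODCNM} when $K_{r}=K_{c}$) reproduces the stated bounds.
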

By setting $\Theta_{c}=\rho I$, $ODCNM_{n_{r},n_{c}}(K_{r}, K_{c}, P,\Pi_{r}, \Pi_{c},\Theta_{c})$ degenerates to $ONM_{n_{r},n_{c}}(K_{r}, K_{c}, P,\Pi_{r}, \Pi_{c})$. By comparing Corollary \ref{AddConditions} and Corollary \ref{AddRho}, we see that theoretical results under ODCNM are consistent with those under ONM when ODCNM degenerates to ONM for the case that $K_{r}=K_{c}=K$.
\section{Simulations}\label{sec5}
In this section,we present some simulations to investigate the performance of the three proposed algorithms. We measure their performances by Mixed-Hamming error rate (MHamm for short) for row nodes and Hamming error rate (Hamm for short) for column nodes defined below
\begin{align*}
\mathrm{MHamm}=\frac{\mathrm{min}_{\pi\in S_{K_{r}}}\|\hat{\Pi}_{r}\pi-\Pi_{r}\|_{1}}{n_{r}}\mathrm{~and~} \mathrm{Hamm}=\frac{\mathrm{min}_{\pi\in S}\|\hat{\Pi}_{c}\pi-\Pi_{c}\|_{1}}{n_{c}},
\end{align*}
where $\hat{\Pi}_{c}\in\mathbb{R}^{n_{c}\times K_{c}}$ is defined as $\hat{\Pi}_{c}(i_{c},k)=1$ if  $\hat{\ell}(i_{c})=k$ and 0 otherwise for $i_{c}\in[n_{c}], k\in[K_{c}]$.

For all simulations in this section, the parameters $(n_{r}, n_{c}, K_{r}, K_{c}, P, \rho, \Pi_{r}, \Pi_{c}, \Theta_{c})$ are set as follows. Unless specified, set $n_{r}=400, n_{c}=300, K_{r}=3, K_{c}=4$. For column nodes, generate $\Pi_{c}$ by setting each column node belonging to one of the column communities with equal probability. Let each row community have $100$ pure nodes, and let all the mixed row nodes have memberships $(0.6,0.3,0.1)$. $P=\rho\tilde{P}$ is set independently under ONM and ODCNM. Under ONM, $\rho$ is 0.5 in Experiment 1 and we study the influence of $\rho$ in Experiment 2; Under ODCNM, for $z_{c}\geq 1$, we generate the degree parameters for column nodes as below: let $\theta_{c}\in\mathbb{R}^{n_{c}\times 1}$ such that $1/\theta_{c}(i_{c})\overset{iid}{\sim}U(1,z_{c})$ for $i_{c}\in[n_{c}]$, where $U(1,z_{c})$ denotes the uniform distribution on $[1, z_{c}]$. We study the influences of $Z_{c}$ and $\rho$ under ODCNM in Experiments 3 and 4, respectively.
For all settings, we report the averaged MHamm and the averaged Hamm over 50 repetitions.

\texttt{Experiment 1: Changing $n_{c}$ under ONM.} Let $n_{c}$ range in $\{50, 100, 150,\ldots, 300\}$. For this experiment, $P$ is set as
\[P=\rho\begin{bmatrix}
    1&0.3&0.2&0.3\\
    0.2&0.9&0.1&0.2\\
    0.3&0.2&0.8&0.3\\
\end{bmatrix}.\] Let $\rho=0.5$ under for this experiment designed under ONM. The numerical results are shown in panels (a) and (b) of Figure \ref{EX}. The results show that as $n_{c}$ increases, ONA and ODCNA perform better. Meanwhile, the total run-time for this experiment is roughly 70 seconds. For row nodes, since both ONA and ODCNA apply SP algorithm on $\hat{U}$ to estimate $\Pi_{r}$, the estimated row membership matrices of ONA and ODCNA are same, and hence MHamm for ONA always equal to that of ODCNA.
\begin{figure}
\centering
\subfigure[Changing $n_{c}$ under ONM: MHamm.]{\includegraphics[width=0.37\textwidth]{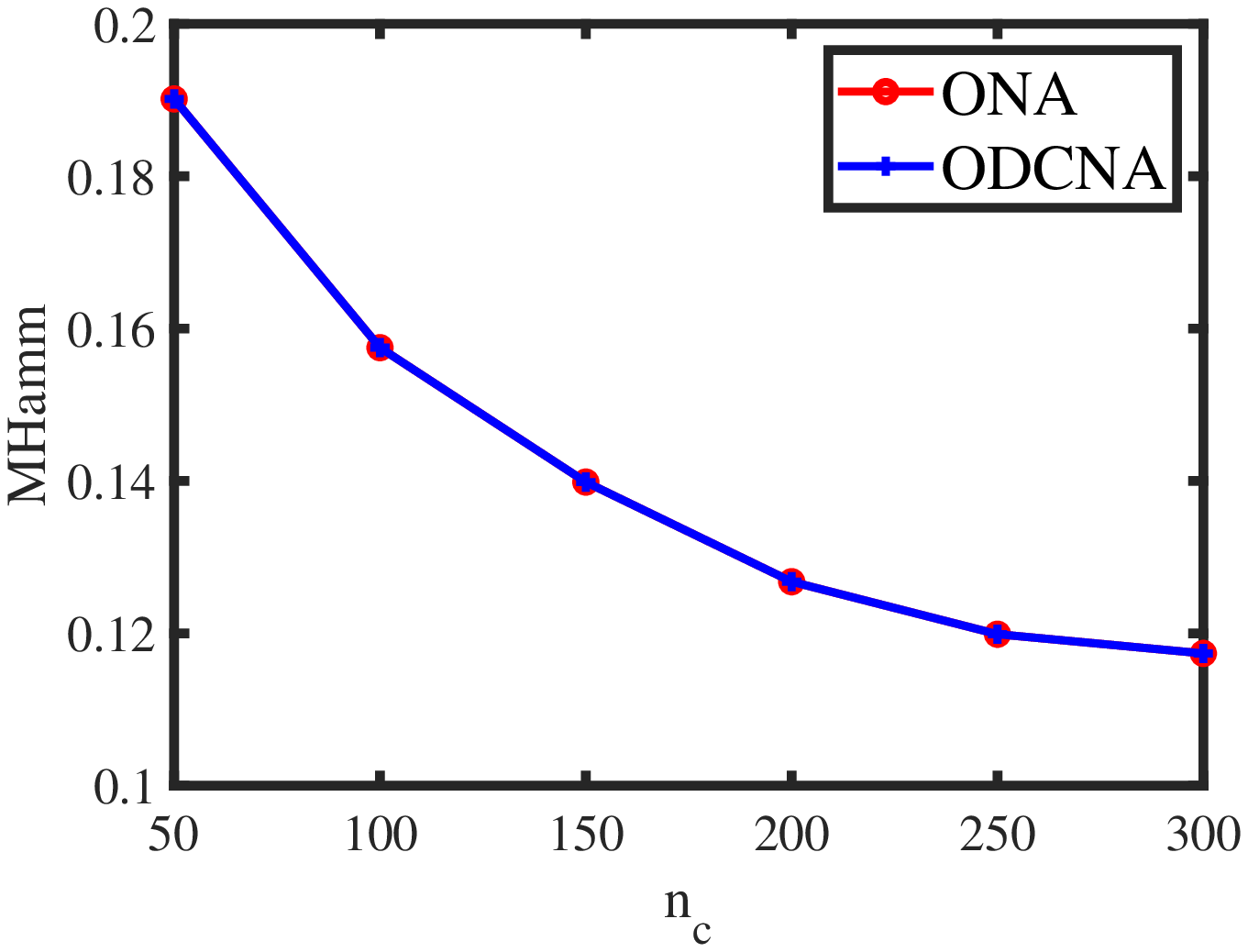}}
\subfigure[Changing $n_{c}$ under ONM: Hamm.]{\includegraphics[width=0.37\textwidth]{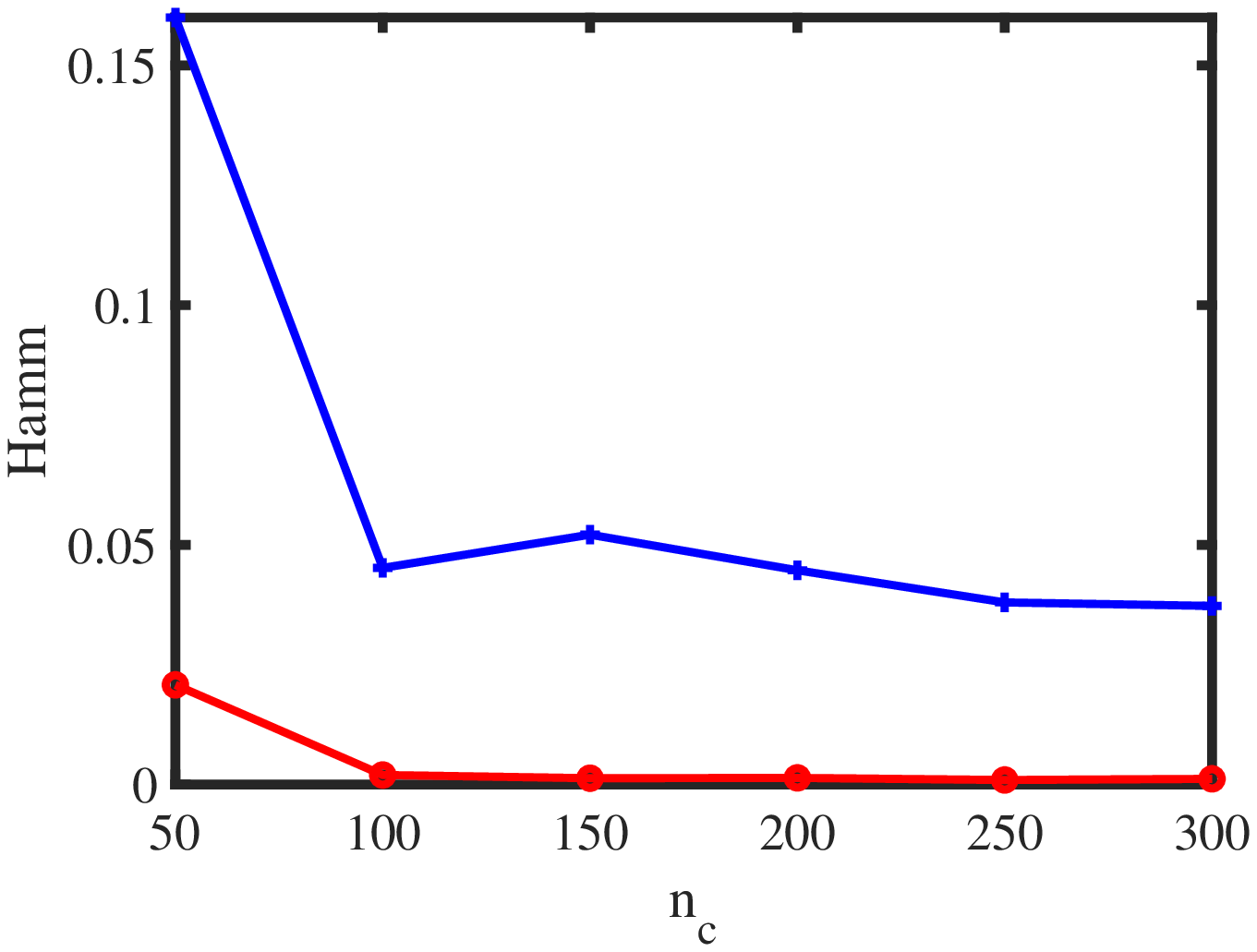}}
\subfigure[Changing $\rho$ under ONM: MHamm.]{\includegraphics[width=0.37\textwidth]{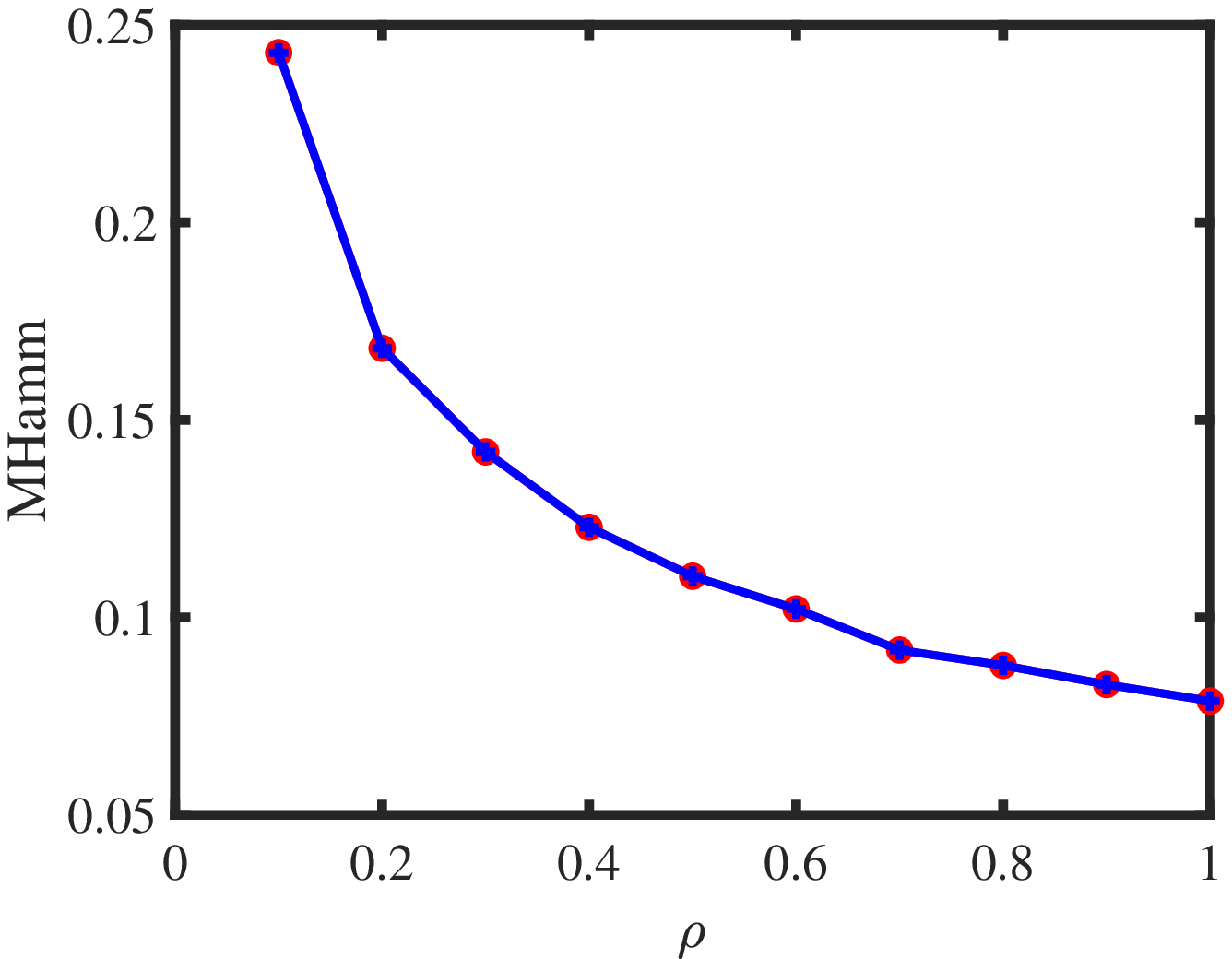}}
\subfigure[Changing $\rho$ under ONM: Hamm.]{\includegraphics[width=0.37\textwidth]{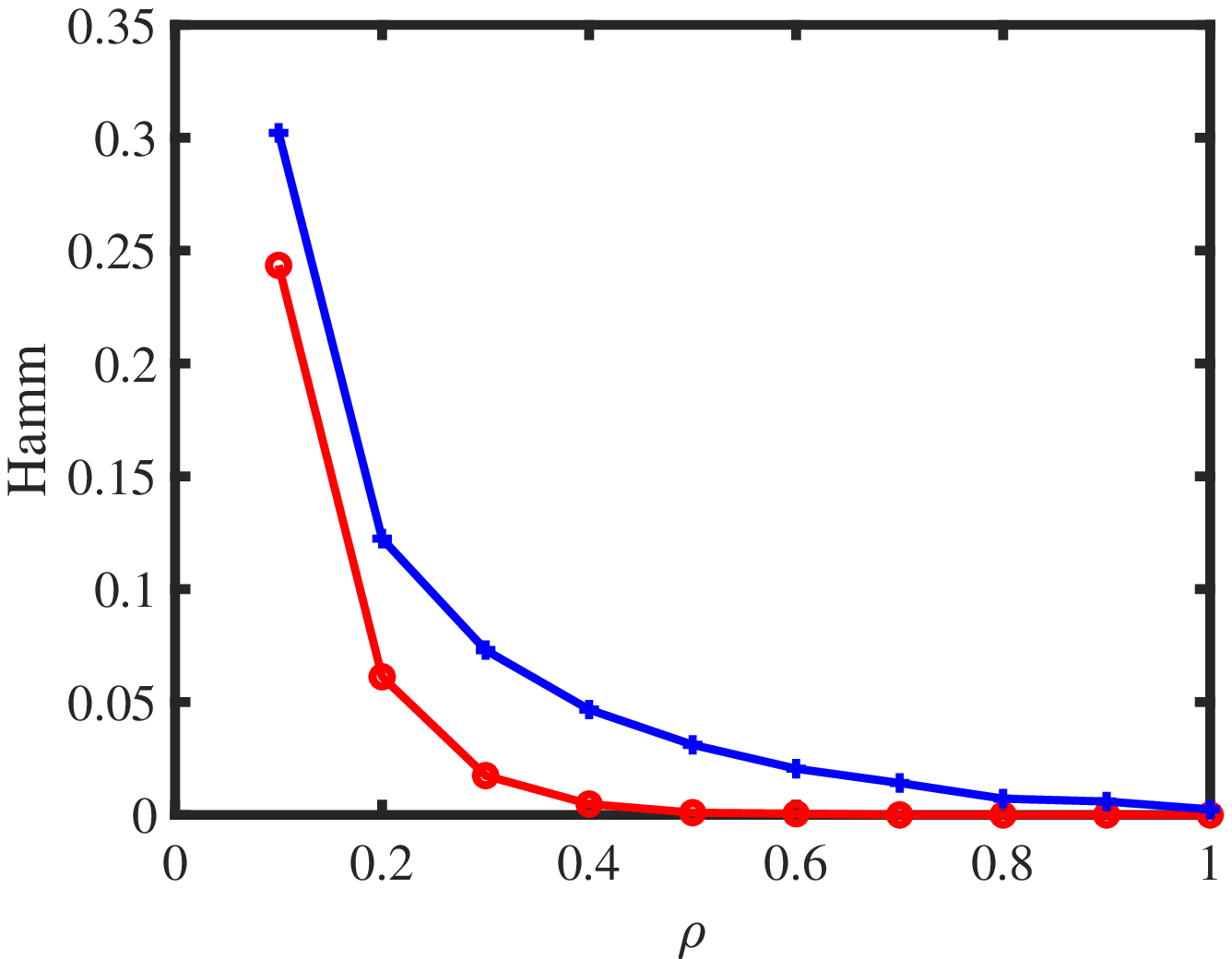}}
\subfigure[Changing $z_{c}$ under ODCNM: MHamm.]{\includegraphics[width=0.37\textwidth]{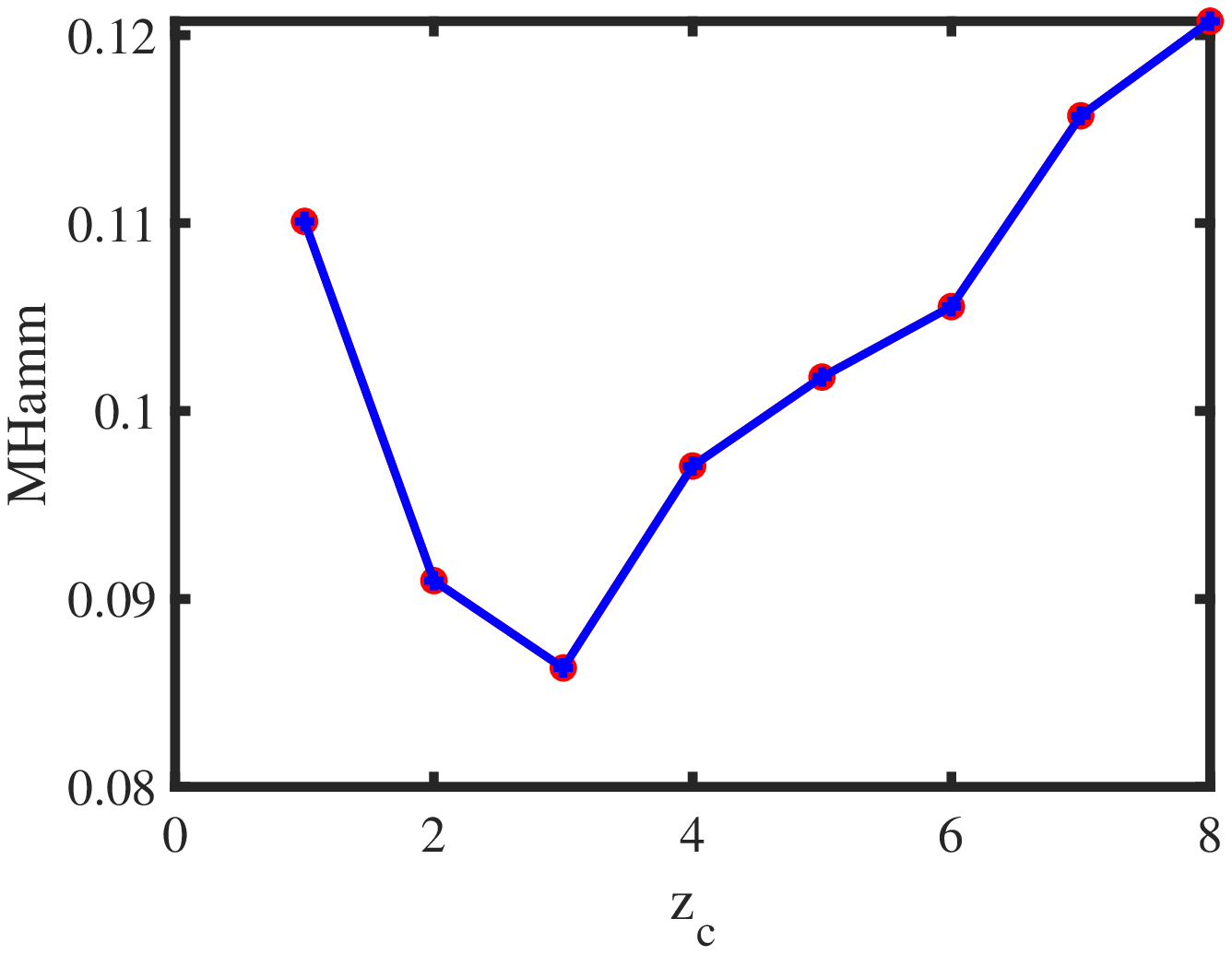}}
\subfigure[Changing $z_{c}$ under ODCNM: Hamm.]{\includegraphics[width=0.37\textwidth]{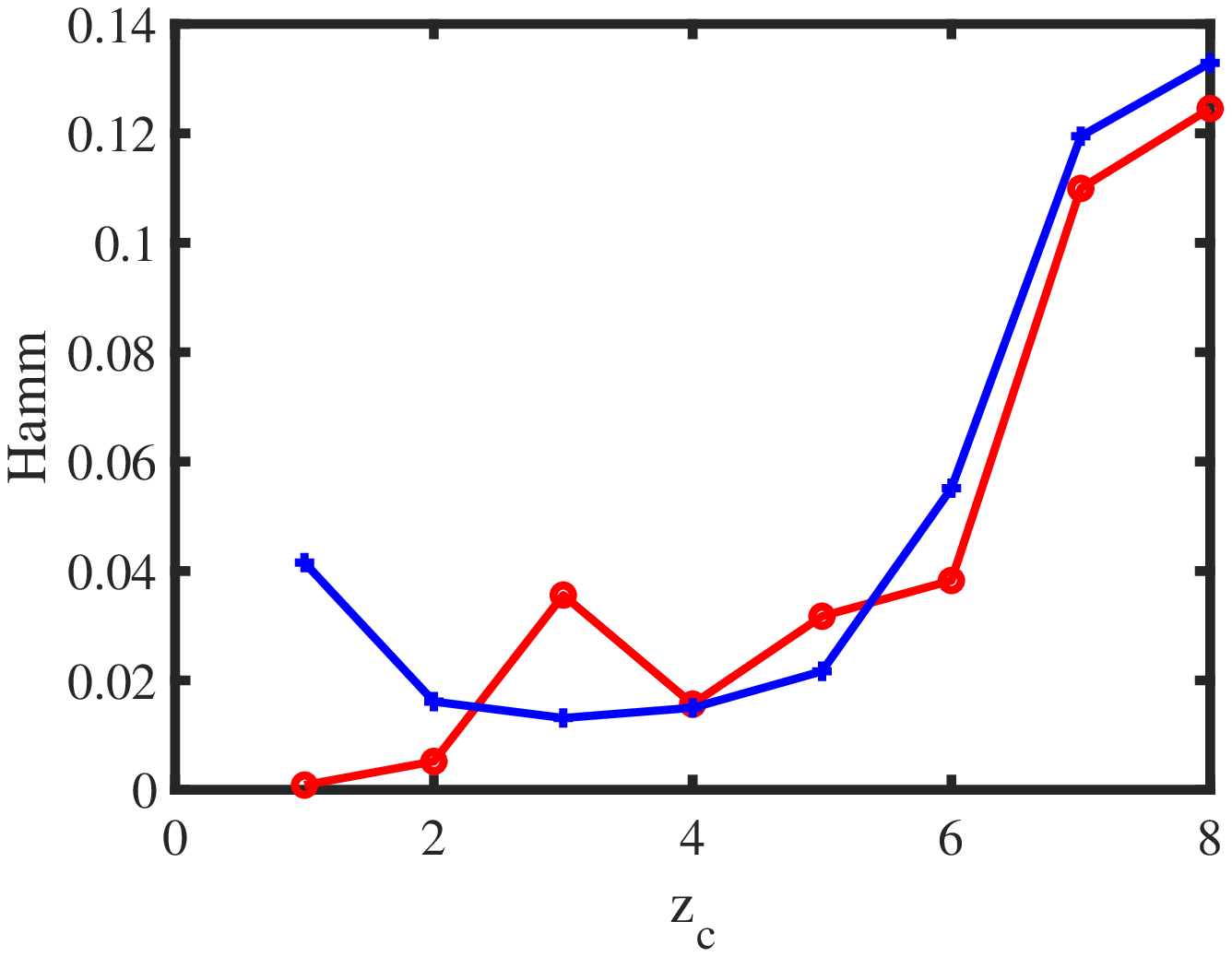}}
\subfigure[Changing $\rho$ under ODCNM: MHamm.]{\includegraphics[width=0.37\textwidth]{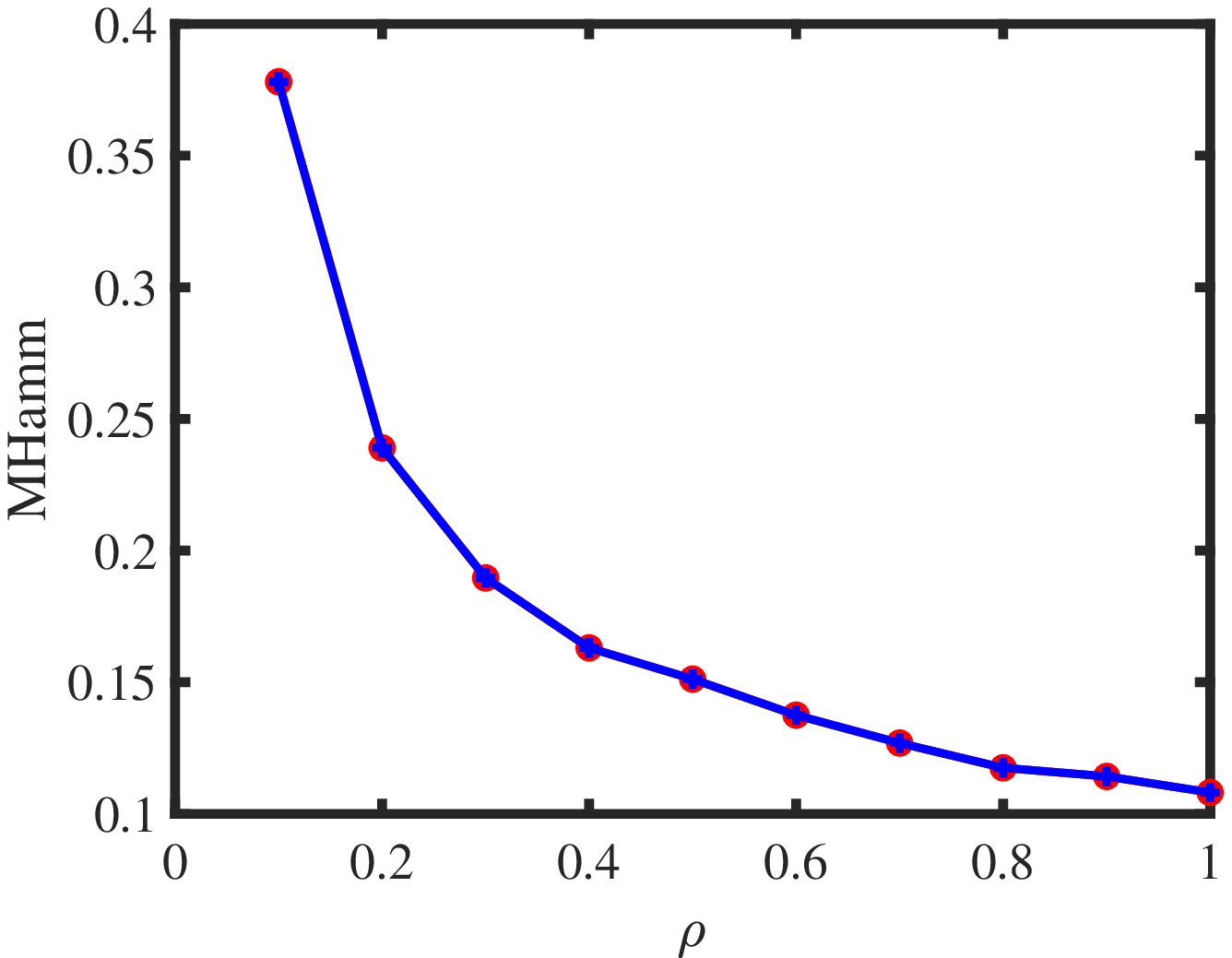}}
\subfigure[Changing $\rho$ under ODCNM: Hamm.]{\includegraphics[width=0.37\textwidth]{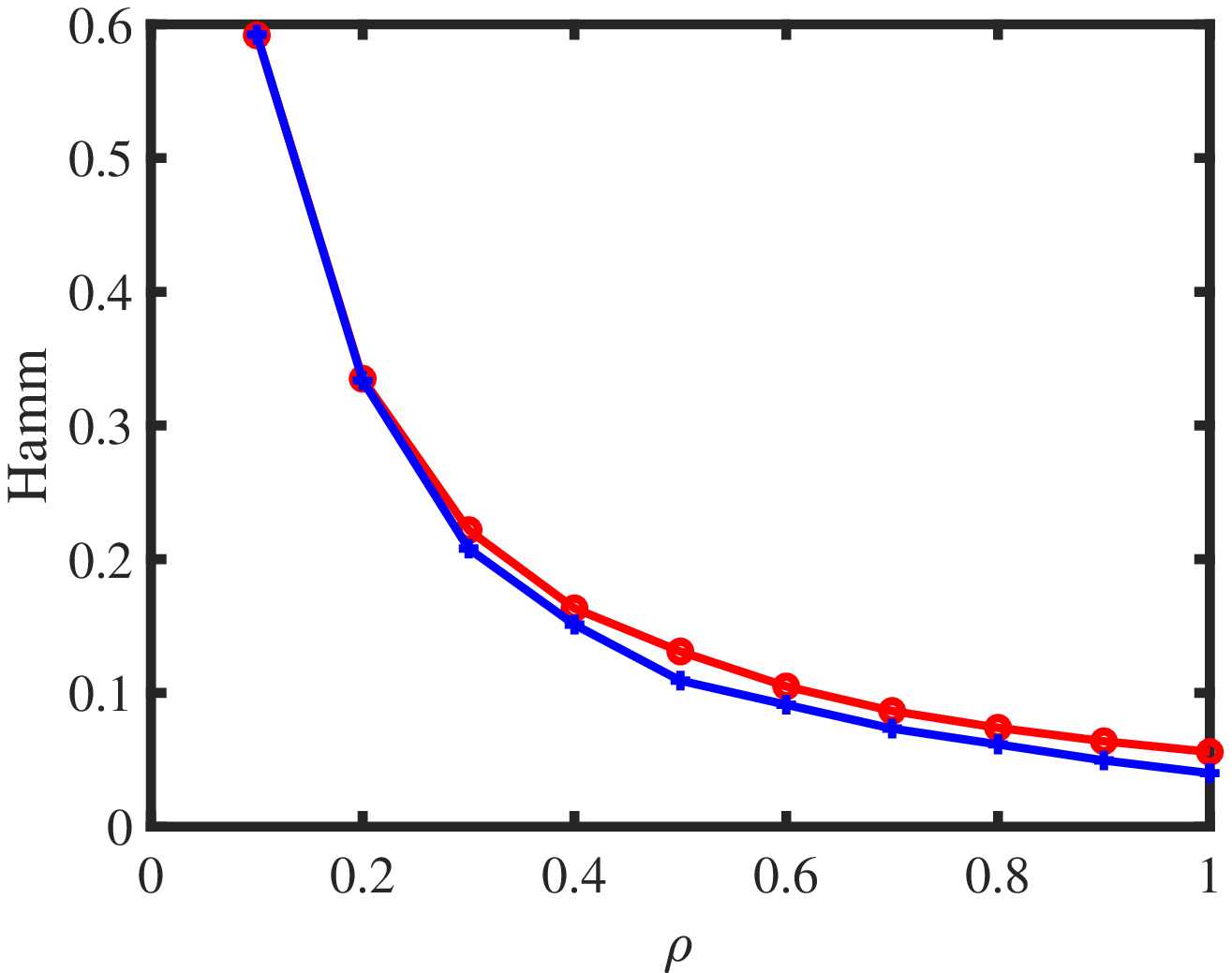}}
\caption{Estimation errors of ONA and ODCNA.}
\label{EX}
\end{figure}

\texttt{Experiment 2: Changing $\rho$ under ONM.} $P$ is set same as Experiment 1, and we let $\rho$ range in $\{0.1,0.2,\ldots,1\}$ to study the influence of $\rho$ on performances of ONA and ODCNA under ONM. The results are displayed in panels (c) and (d) of Figure \ref{EX}. From the results, we see that both methods perform better as $\rho$ increases since a larger $\rho$ gives more edges generated in a directed network. Meanwhile, the total run-time for this experiment is roughly 136 seconds.

\texttt{Experiment 3: Change $z_{c}$ under ODCNM.} $P$ is set same as Experiment 1. Let $z_{c}$ range in $\{1,2,\ldots,8\}$. Increasing $z_{c}$ decreases edges generated under ODCNM.  Panels (e) and (f) in Figure \ref{EX} display simulation results of this experiment. The results show that, generally, increasing the variability of node degrees makes it harder to detect node memberships for both ONA and ODCNA. Though ODCNA is designed under ODCNM, it holds similar performances as ONA for directed networks in which column nodes have various degrees in this experiment, and this is consistent with our theoretical findings in Corollaries \ref{AddConditions} and \ref{AddConditionsODCNM}. Meanwhile, the total run-time for this experiment is around 131 seconds.

\texttt{Experiment 4: Change $\rho$ under ODCNM.} Set $z_{c}=3$, $P$ is set same as Experiment 1, and let $\rho$ range in $\{0.1, 0.2,\ldots, 1\}$ under ODCNM. Panels (g) and (h) in Figure \ref{EX} displays simulation results of this experiment. The performances of the two proposed methods are similar as that of Experiment 2. Meanwhile, the total run-time for this experiment is around 221 seconds.
\section{Discussions}\label{sec7}
In this paper, we introduced overlapping and nonoverlapping models and its extension by considering degree heterogeneity. The models can model directed network with $K_{r}$ row communities and $K_{c}$ column communities, in which row node can belong to multiple row communities while column node only belong to one of the column communities. The proposed models are identifiable when $K_{r}\leq K_{c}$ and some other popular constraints on the connectivity matrix and membership matrices. For comparison, modeling directed network in which row nodes have overlapping property while column nodes do not with $K_{r}>K_{c}$ is unidentifiable. Meanwhile, since previous works found that modeling directed networks in which both row and column nodes have overlapping property with $K_{r}\neq K_{c}$ is unidentifiable, our identifiable ONM and ODCNM as well as the DCONM in Appendix \ref{DCOOOOOONM} supply a gap in modeling overlapping directed networks when $K_{r}\neq K_{c}$.  Theses models provide exploratory tools for studying community structure in directed networks with one side is overlapping while another side is nonoverlapping. Two spectral algorithms are designed to fit ONM and ODCNM. We also showed estimation consistency under mild conditions for our methods. Especially, when ODCNM reduces to ONM, our theoretical results under ODCNM are consistent with those under ONM. But perhaps the main limitation of the models is that the $K_{r}$ and $K_{c}$ in the directed network are assumed given, and such limitation also holds for the ScBM and DCScBM of \cite{DISIM}. In most community problems, the number of row community and the number of column community are unknown, therefore a complete calculation  and theoretical study require not only the algorithms and their theoretically consistent estimations described in this paper but also a method for estimating $K_{r}$ and $K_{c}$. We leave studies of this problem to our future work.
\appendix
\section{Successive Projection algorithm}
Algorithm \ref{alg:SP} is the Successive Projection algorithm.
	\begin{algorithm}
		\caption{\textbf{Successive Projection (SP)} \citep{gillis2015semidefinite}}
		\label{alg:SP}
		\begin{algorithmic}[1]
			\Require Near-separable matrix $Y_{sp}=S_{sp}M_{sp}+Z_{sp}\in\mathbb{R}^{m\times n}_{+}$ , where $S_{sp}, M_{sp}$ should satisfy Assumption 1 \cite{gillis2015semidefinite}, the number $r$ of columns to be extracted.
			\Ensure Set of indices $\mathcal{K}$ such that $Y_{sp}(\mathcal{K},:)\approx S$ (up to permutation)
			\State Let $R=Y_{sp}, \mathcal{K}=\{\}, k=1$.
			\State \textbf{While} $R\neq 0$ and $k\leq r$ \textbf{do}
			\State ~~~~~~~$k_{*}=\mathrm{argmax}_{k}\|R(k,:)\|_{F}$.
			\State ~~~~~~$u_{k}=R(k_{*},:)$.
			\State ~~~~~~$R\leftarrow (I-\frac{u_{k}u'_{k}}{\|u_{k}\|^{2}_{F}})R$.
			\State ~~~~~~$\mathcal{K}=\mathcal{K}\cup \{k_{*}\}$.
			\State ~~~~~~k=k+1.
			\State \textbf{end while}
		\end{algorithmic}
	\end{algorithm}
\section{Proofs under ONM}
\subsection{Proof of Proposition \ref{id}}
\begin{proof}
By Lemma \ref{RK}, let $U_{r}\Lambda U'_{c}$ be the compact SVD of $\Omega$ such that $\Omega=U_{r}\Lambda U'_{c}$, since $\Omega=\Pi_{r}P\Pi'_{c}=\check{\Pi}_{r}\check{P}\check{\Pi}'_{c}$, we have $\Omega(\mathcal{I}_{r}, \mathcal{I}_{c})=P=\check{P}$, which gives $P=\check{P}$. By Lemma \ref{RK}, since $U_{r}=\Pi_{r}U_{r}(\mathcal{I}_{r},:)=\check{\Pi}_{r}U_{r}(\mathcal{I}_{r},:)$, we have $\Pi_{r}=\check{\Pi}_{r}$ where we have used the fact that the inverse of $U_{r}(\mathcal{I}_{r},:)$ exists. Since $\Omega=\Pi_{r}P\Pi'_{c}=\check{\Pi}_{r}\check{P}\check{\Pi}'_{c}=\Pi_{r} P\check{\Pi}'_{c}$, we have $\Pi_{r}P\Pi'_{c}=\Pi_{r} P\check{\Pi}'_{c}$. By Lemma 7 of \cite{qing2021DiMMSB}, we have $P\Pi'_{c}=P\check{\Pi}'_{c}$, i.e., $\Pi_{c}X=\check{\Pi}_{c}X$ where we set $X=P'\in \mathbb{R}^{K_{c}\times K_{r}}$. Let $\check{\ell}$ be the $n_{c}\times 1$ vector of column nodes labels obtained from $\check{\Pi}_{c}$. For $i_{c}\in[n_{c}], k\in[K_{r}]$, from $\Pi_{c}X=\check{\Pi}_{c}X$, we have $(\Pi_{c}X)(i_{c},k)=\Pi_{c}(i_{c},:)X(:,k)=X(\ell(i_{c}),k)=X(\check{\ell}(i_{c}),k)$, which means that we must have $\ell(i_{c})=\check{\ell}(i_{c})$ for all $i_{c}\in[n_{c}]$, i.e., $\ell=\check{\ell}$ and $\Pi_{c}=\check{\Pi}_{c}$. Note that, for the special case $K_{r}=K_{c}=K$, $\Pi_{c}=\check{\Pi}_{c}$ can be obtained easily: since $P\Pi'_{c}=P\check{\Pi}'_{c}$ and $P\in\mathbb{R}^{K\times K}$ is assumed to be full rank, we have $\Pi_{c}=\check{\Pi}_{c}$. Thus the proposition holds.
\end{proof}
\subsection{Proof of Lemma \ref{RK}}
\begin{proof}
For $U_{r}$, since $\Omega=U_{r}\Lambda U'_{c}$ and $U'_{c}U_{c}=I_{K_{r}}$, we have $U_{r}=\Omega U_{c}\Lambda^{-1}$. Recall that $\Omega=\Pi_{r}P\Pi'_{c}$, we have $U_{r}=\Pi_{r}P\Pi'_{c}U_{c}\Lambda^{-1}=\Pi_{r}B_{r}$, where we set $B_{r}=P\Pi'_{c}U_{c}\Lambda^{-1}$. Since $U_{r}(\mathcal{I}_{r},:)=\Pi_{r}(\mathcal{I}_{r},:)B_{r}=B_{r}$, we have $B_{r}=U_{r}(\mathcal{I}_{r},:)$. For $i_{r}\in[n_{r}]$, $U_{r}(i_{r},:)=e'_{i_{r}}\Pi_{r}B_{r}=\Pi_{r}(i_{r},:)B_{r}$, so sure we have $U_{r}(i_{r},:)=U_{r}(\bar{i}_{r},:)$ when $\Pi_{r}(i_{r},:)=\Pi_{r}(\bar{i}_{r},:)$.

For $U_{c}$, follow similar analysis as for $U_{r}$, we have $U_{c}=\Pi_{c}B_{c}$, where $B_{c}=P'\Pi'_{r}U_{r}\Lambda^{-1}$. Note that $B_{c}\in \mathbb{R}^{K_{c}*K_{r}}$. Sure, $U_{c}(i_{c},:)=U_{c}(\bar{i}_{c},:)$ when $\ell(i_{c})=\ell(\bar{i}_{c})$ for $i_{c},\bar{i}_{c}\in[n_{c}]$.

Now, we focus on the case when $K_{r}=K_{c}=K$. For this case, since $B_{c}\in \mathbb{R}^{K_{c}*K_{r}}$, $B_{c}$ is full rank when $K_{r}=K_{c}$. Since $I_{K_{r}}=I_{K}=U'_{c}U_{c}=B'_{c}\Pi'_{c}\Pi_{c}B_{c}$, we have $\Pi'_{c}\Pi_{c}=(B_{c}B'_{c})^{-1}$. Since $\Pi'_{c}\Pi_{c}=\mathrm{diag}(n_{c,1}, n_{c,2}, \ldots, n_{c,K})$, we have $B_{c}B'_{c}=\mathrm{diag}(\frac{1}{n_{c,1}}, \frac{1}{n_{c,2}}, \ldots, \frac{1}{n_{c,K}})$. When $K_{r}=K_{c}=K$, we have $B_{c}(k,:)B'_{c}(l,:)=0$ for any $k\neq l$ and $k,l\in [K]$. Then ,we have $B_{c}B'_{c}=\mathrm{diag}(\|B_{c}(1,:)\|^{2}_{F},\|B_{c}(2,:)\|^{2}_{F},\ldots, \|B_{c}(K,:)\|^{2}_{F})=\mathrm{diag}(\frac{1}{n_{c,1}}, \frac{1}{n_{c,2}}, \ldots, \frac{1}{n_{c,K}})$ and the lemma follows.

Note that when $K_{r}<K_{c}$, since $B_{c}$ is not full rank now, we can not obtain $\Pi'_{c}\Pi_{c}=(B_{c}B'_{c})^{-1}$ from $I_{K_{r}}=B'_{c}\Pi'_{c}\Pi_{c}B_{c}$. Therefore, when $K_{r}<K_{c}$, the equality $\|B_{c}(k,:)-B_{c}(l,:)\|_{F}=\sqrt{\frac{1}{n_{c,k}}+\frac{1}{n_{c,l}}}$ does not hold for any $k\neq l$. And we can only know that $U_{c}$ has $K_{c}$ distinct rows when $K_{r}<K_{c}$, but have no knowledge about the minimum distance between any two distinct rows of $U_{c}$.
\end{proof}
\subsection{Proof of Theorem \ref{Main}}
\begin{proof}
For row nodes, when conditions in Lemma \ref{rowwiseerror} hold,  by Theorem 2 of \cite{qing2021DiMMSB}, with probability at least $1-o((n_{r}+n_{c})^{-\alpha})$ for  any $\alpha>0$, there exists a permutation matrix  $\mathcal{P}_{r}$ such that, for $i_{r}\in[n_{r}]$, we have
\begin{align*}	\|e'_{i_{r}}(\hat{\Pi}_{r}-\Pi_{r}\mathcal{P}_{r})\|_{1}=O(\varpi\kappa(\Pi'_{r}\Pi_{r})K_{r}\sqrt{\lambda_{1}(\Pi'_{r}\Pi_{r})}).
\end{align*}
Next, we focus on column nodes. By the proof of Lemma 2.3 of \cite{qing2021consistency}, there exists an orthogonal matrix $\hat{O}$ such that
\begin{align}\label{UC1}
\|\hat{U}_{c}\hat{O}-U_{c}\|_{F}\leq \frac{2\sqrt{2K_{r}}\|A-\Omega\|}{\sqrt{\lambda_{K_{r}}(\Omega'\Omega)}}.
\end{align}
Under $ONM_{n_{r},n_{c}}(K_{r}, K_{c}, P,\Pi_{r}, \Pi_{c})$, by Lemma 10 of \cite{qing2021DiMMSB}, we have
\begin{align}\label{UC2}
\sqrt{\lambda_{K_{r}}(\Omega'\Omega)}\geq \rho \sigma_{K_{r}}(\tilde{P})\sigma_{K_{r}}(\Pi_{r})\sigma_{K_{r}}(\Pi_{c}).
\end{align}
Since all column nodes are pure, $\sigma_{K_{r}}(\Pi_{c})=\sqrt{n_{c,K_{r}}}$. By Lemma 3 of \cite{qing2021DiMMSB}, when Assumption (\ref{a1}) holds, with probability at least $1-o((n_{r}+n_{c})^{-\alpha})$, we have
\begin{align}\label{UC3}
\|A-\Omega\|=O(\sqrt{\rho \mathrm{max}(n_{r},n_{c})\mathrm{log}(n_{r}+n_{c})}).
\end{align}
Substitute the two bounds in Eqs (\ref{UC2}) and (\ref{UC3}) into Eq (\ref{UC1}), we have
\begin{align}\label{UC4}
\|\hat{U}_{c}\hat{O}-U_{c}\|_{F}\leq C\frac{\sqrt{K_{r}\mathrm{max}(n_{r},n_{c})\mathrm{log}(n_{r}+n_{c})}}{\sigma_{K_{r}}(\tilde{P})\sqrt{\rho}\sigma_{K_{r}}(\Pi_{r})\sqrt{n_{c,K_{r}}}}.
\end{align}
Let $\varsigma>0$ be a small quantity, by Lemma 2 in \cite{joseph2016impact}, if
\begin{align}\label{holdcONM}
\frac{\sqrt{K_{c}}}{\varsigma}\|U_{c}-\hat{U}_{c}\hat{O}\|_{F}(\frac{1}{\sqrt{n_{c,k}}}+\frac{1}{\sqrt{n_{c,l}}})\leq \|B_{c}(k,:)-B_{c}(l,:)\|_{F}, \mathrm{~for~each~}1\leq k\neq l\leq K_{c},
\end{align}
then the clustering error $\hat{f}_{c}=O(\varsigma^{2})$. Recall that we set $\delta_{c}=\mathrm{min}_{k\neq l}\|B_{c}(k,:)-B_{c}(l,:)\|_{F}$ to measure the minimum center separation of $B_{c}$. Setting $\varsigma=\frac{2}{\delta_{c}}\sqrt{\frac{K_{c}}{n_{c,\mathrm{min}}}}\|U_{c}-\hat{U}_{c}\hat{O}\|_{F}$ makes Eq (\ref{holdcONM}) hold for all $1\leq k\neq l\leq K_{c}$. Then we have $\hat{f}_{c}=O(\varsigma^{2})=O(\frac{K_{c}\|U_{c}-\hat{U}_{c}\hat{O}\|^{2}_{F}}{\delta^{2}_{c}n_{c,\mathrm{min}}})$. By Eq (\ref{UC4}), we have
\begin{align*}
\hat{f}_{c}=O(\frac{K_{r}K_{c}\mathrm{max}(n_{r},n_{c})\mathrm{log}(n_{r}+n_{c})}{\sigma^{2}_{K_{r}}(\tilde{P})\rho \delta^{2}_{c}\sigma^{2}_{K_{r}}(\Pi_{r})n_{c,K_{r}}n_{c,\mathrm{min}}}).
\end{align*}
Especially, when $K_{r}=K_{c}=K$, $\delta_{c}\geq \sqrt{\frac{2}{n_{c,\mathrm{max}}}}$ under $ONM_{n_{r},n_{c}}(K_{r},K_{c}, P, \Pi_{r}, \Pi_{c})$ by Lemma \ref{RK}. When $K_{r}=K_{c}=K$, we have
\begin{align*}
\hat{f}_{c}=O(\frac{K^{2}\mathrm{max}(n_{r},n_{c})n_{c,\mathrm{max}}\mathrm{log}(n_{r}+n_{c})}{\sigma^{2}_{K}(\tilde{P})\rho\sigma^{2}_{K}(\Pi_{r})n^{2}_{c,\mathrm{min}}}).
\end{align*}
\end{proof}
\subsection{Proof of Corollary \ref{AddConditions}}
\begin{proof}
For row nodes, under conditions of Corollary \ref{AddConditions}, we have
\begin{align*}
\mathrm{max}_{i_{r}\in[n_{r}]}\|e'_{i_{r}}(\hat{\Pi}_{r}-\Pi_{r}\mathcal{P}_{r})\|_{1}=O(\varpi K_{r}\sqrt{\frac{n_{r}}{K_{r}}})=O(\varpi\sqrt{Kn_{r}}).
\end{align*}
Under  conditions of Corollary \ref{AddConditions}, $\kappa(\Omega)=O(1)$ and $\mu\leq C$ for some $C>0$ by the proof of Corollary 1 \cite{qing2021DiMMSB}. Then, by Lemma \ref{rowwiseerror}, we have
\begin{align*}
\varpi&=O(\frac{\sqrt{K}(\kappa(\Omega)\sqrt{\frac{\mathrm{max}(n_{r},n_{c})\mu}{\mathrm{min}(n_{r},n_{c})}}+\sqrt{\mathrm{log}(n_{r}+n_{c})})}{\sqrt{\rho}\sigma_{K}(\tilde{P})\sigma_{K}(\Pi_{r})\sqrt{n_{c,K_{r}}}})=O(\frac{\sqrt{K}(\sqrt{\frac{C\mathrm{max}(n_{r},n_{c})}{\mathrm{min}(n_{r},n_{c})}}+\sqrt{\mathrm{log}(n_{r}+n_{c})})}{\sqrt{\rho}\sigma_{K}(\tilde{P})\sigma_{K}(\Pi_{r})\sqrt{n_{c,\mathrm{min}}}})\\
&=O(\frac{K^{1.5}(\sqrt{\frac{C\mathrm{max}(n_{r},n_{c})}{\mathrm{min}(n_{r},n_{c})}}+\sqrt{\mathrm{log}(n_{r}+n_{c})})}{\sigma_{K}(\tilde{P})\sqrt{\rho n_{r}n_{c}}}),
\end{align*}
which gives that
\begin{align*}
\mathrm{max}_{i_{r}\in[n_{r}]}\|e'_{i_{r}}(\hat{\Pi}_{r}-\Pi_{r}\mathcal{P}_{r})\|_{1}=O(\frac{K^{2}(\sqrt{\frac{C\mathrm{max}(n_{r},n_{c})}{\mathrm{min}(n_{r},n_{c})}}+\sqrt{\mathrm{log}(n_{r}+n_{c})})}{\sigma_{K}(\tilde{P})\sqrt{\rho
n_{c}}}).
\end{align*}
Note that, when $K_{r}<K_{c}$, we can not draw a conclusion that $\mu\leq C$. Because, when $K_{r}<K_{c}$, the inverse of $B_{c}B'_{c}$ does not exist since $B_{c}\in\mathbb{R}^{K_{c}\times K_{r}}$. Therefore, Lemma 8 of \cite{qing2021DiMMSB} does not hold, and we can not obtain  the upper bound of $\|U_{c}\|_{2\rightarrow\infty}$, causing the impossibility of obtaining the upper bound of $\mu$, and this is the reason that we only consider the case when $K_{r}=K_{c}$ for row nodes here.

For column nodes, under conditions of Corollary \ref{AddConditions}, we have
  \begin{align*}
    \hat{f}_{c}&=O(\frac{K_{r}K_{c}\mathrm{max}(n_{r},n_{c})\mathrm{log}(n_{r}+n_{c})}{\sigma^{2}_{K_{r}}(\tilde{P})\rho \delta^{2}_{c}\sigma^{2}_{K_{r}}(\Pi_{r})n_{c,K_{r}}n_{c,\mathrm{min}}})=O(\frac{K_{r}K_{c}\mathrm{max}(n_{r},n_{c})\mathrm{log}(n_{r}+n_{c})}{\sigma^{2}_{K_{r}}(\tilde{P})\rho \delta^{2}_{c}(n_{r}/K_{r})(n_{c}/K_{c})(n_{c}/K_{c})})\\
    &=O(\frac{K^{2}_{r}K^{3}_{c}\mathrm{max}(n_{r},n_{c})\mathrm{log}(n_{r}+n_{c})}{\sigma^{2}_{K_{r}}(\tilde{P})\rho \delta^{2}_{c}n_{r}n^{2}_{c}}).
  \end{align*}
For the special case $K_{r}=K_{c}=K$, since $\frac{n_{c,\mathrm{max}}}{n_{c,\mathrm{min}}}=O(1)$ when $n_{c,\mathrm{min}}=O(\frac{n_{c}}{K})$, we have
\begin{align*}
\hat{f}_{c}=O(\frac{K^{4}\mathrm{max}(n_{r},n_{c})\mathrm{log}(n_{r}+n_{c})}{\sigma^{2}_{K}(\tilde{P})\rho n_{r}n_{c}}).
\end{align*}
When $n_{r}=O(n), n_{c}=O(n), K_{r}=O(1)$ and $K_{c}=O(1)$, the corollary follows immediately by basic algebra.
\end{proof}
\section{Proofs under ODCNM}
\subsection{Proof of Proposition \ref{idODCNM}}
\begin{proof}
Since $\Omega=\Pi_{r}P\Pi'_{c}\Theta_{c}=\check{\Pi}_{r}\check{P}\check{\Pi}'_{c}\check{\Theta}_{c}=U_{r}\Lambda U'_{c}$, we have $U_{r}=\Pi_{r}U_{r}(\mathcal{I}_{r},:)=\check{\Pi}_{r}U_{r}(\mathcal{I}_{r},:)$ by Lemma \ref{RKODCNM}, which gives that $\Pi_{r}=\check{\Pi}_{r}$. Since $U_{c,*}=\Pi_{c}B_{c}=\Pi_{c}U_{c,*}(\mathcal{I}_{c},:)=\check{\Pi}_{c}U_{c,*}(\mathcal{I}_{c},:)$ by Lemma \ref{RKODCNM}, we have $\Pi_{c}=\check{\Pi}_{c}$.
\end{proof}
\subsection{Proof of Lemma \ref{RKODCNM}}
\begin{proof}
\begin{itemize}
  \item For $U_{r}$: since $\Omega=U_{r}\Lambda U'_{c}$ and $U'_{c}U_{c}=I_{K_{r}}$, we have $U_{r}=\Omega U_{c}\Lambda^{-1}$. Recall that $\Omega=\Pi_{r}P\Pi'_{c}\Theta_{c}$ under ODCNM, we have $U_{r}=\Pi_{r}P\Pi'_{c}\Theta_{c}U_{c}\Lambda^{-1}=\Pi_{r}B_{r}$, where $B_{r}=P\Pi'_{c}\Theta_{c}U_{c}\Lambda^{-1}$. Sure, $U_{r}(i_{r},:)=U_{r}(\bar{i}_{r},:)$ holds when $\Pi_{r}(i_{r},:)=\Pi_{r}(\bar{i}_{r},:)$ for $i_{r},\bar{i}_{r}\in[n_{r}]$.
  \item For $U_{c}$: let $D_{c}$ be a $K_{c}\times K_{c}$ diagonal matrix such that $D_{c}(k,k)=\frac{\|\Theta_{c}\Pi_{c}(:,k)\|_{F}}{\|\theta_{c}\|_{F}}$ for $k\in[K_{c}]$. Let $\Gamma_{c}$ be an $n_{c}\times K_{c}$ matrix such that $\Gamma_{c}(:,k)=\frac{\Theta_{c}\Pi_{c}(:,k)}{\|\Theta_{c}\Pi_{c}(:,k)\|_{F}}$ for $k\in[K_{c}]$. For such $D_{c}$ and $\Gamma_{c}$, we have $\Gamma'_{c}\Gamma_{c}=I_{K_{c}}$ and $\Omega=\Pi_{r}P\|\theta_{c}\|_{F}D_{c}\Gamma'_{c}$, i.e., $\Theta_{c}\Pi_{c}=\|\theta_{c}\|_{F}\Gamma_{c}D_{c}$.

      Since $\Omega=U_{r}\Lambda U'_{c}$ and $U'_{r}U_{r}=I_{K_{r}}$, we have $U_{c}=\Theta_{c}\Pi_{c}P'\Pi'_{r}U_{r}\Lambda^{-1}$. Since $\Theta_{c}\Pi_{c}=\|\theta_{c}\|_{F}\Gamma_{c}D_{c}$, we have $U_{c}=\Gamma_{c}\|\theta_{c}\|_{F}D_{c}P'\Pi'_{r}U_{r}\Lambda^{-1}=\Gamma_{c}V_{c}$, where we set $V_{c}=\|\theta_{c}\|_{F}D_{c}P'\Pi'_{r}U_{r}\Lambda^{-1}\in \mathbb{R}^{K_{c}\times K_{r}}$. Note that since $U'_{c}U_{c}=I_{K_{r}}=V'_{c}\Gamma'_{c}\Gamma_{c}V_{c}=V'_{c}V_{c}$, we have $V'_{c}V_{c}=I_{K_{r}}$. Now, for $i_{c}\in[n_{c}], k\in[K_{r}]$, we have
\begin{align*}
U_{c}(i_{c},k)&=e'_{i_{c}}U_{c}e_{k}=e'_{i_{c}}\Gamma_{c}V_{c}e_{k}=\Gamma_{c}(i_{c},:)V_{c}e_{k}\\
&=\theta_{c}(i_{c})[\frac{\Pi_{c}(i_{c},1)}{\|\Theta_{c}\Pi_{c}(:,1)\|_{F}}~~\frac{\Pi_{c}(i_{c},2)}{\|\Theta_{c}\Pi_{c}(:,2)\|_{F}}~~\ldots~~\frac{\Pi_{c}(i_{c},K_{c})}{\|\Theta_{c}\Pi_{c}(:,K_{c})\|_{F}}]V_{c}e_{k}\\
&=\frac{\theta_{c}(i_{c})}{\|\Theta_{c}\Pi_{c}(:,\ell(i_{c}))\|_{F}}V_{c}(\ell(i_{c}),k),
\end{align*}
which gives that
\begin{align*}
U_{c}(i_{c},:)=\frac{\theta_{c}(i_{c})}{\|\Theta_{c}\Pi_{c}(:,\ell(i_{c})\|_{F}}[V_{c}(\ell(i_{c}),1)~~V_{c}(\ell(i_{c}),2)~~\ldots~~V_{c}(\ell(i_{c}),K_{r})]=\frac{\theta_{c}(i_{c})}{\|\Theta_{c}\Pi_{c}(:,\ell(i_{c})\|_{F}}V_{c}(\ell(i_{c}),:).
\end{align*}
Then we have
\begin{align}\label{Uex}
U_{c,*}(i_{c},:)=\frac{V_{c}(\ell(i_{c}),:)}{\|V_{c}(\ell(i_{c}),:)\|_{F}}.
\end{align}
Sure, we have $U_{c,*}(i_{c},:)=U_{c,*}(\bar{i}_{c},:)$ when $\ell(i_{c})=\ell(\bar{i}_{c})$ for $i_{c},\bar{i}_{c}\in[n_{c}]$. Let $B_{c}\in\mathbb{R}^{K_{c}\times K_{r}}$ such that $B_{c}(l,:)=\frac{V_{c}(l,:)}{\|V_{c}(l,:)\|_{F}}$ for $l\in[K_{c}]$. Eq (\ref{Uex}) gives $U_{c,*}=\Pi_{c}B_{c}$, which guarantees the existence of $B_{c}$.

Now we consider the case when $K_{r}=K_{c}=K$. Since $V_{c}\in\mathbb{R}^{K_{c}\times K_{r}}$ and $U_{c}=\Gamma_{c}V_{c}\in\mathbb{R}^{n_{c}\times K_{r}}$, we have $V_{c}\in\mathbb{R}^{K\times K}$ and $\mathrm{rank}(V_{c})=K$. Since $V'_{c}V_{c}=I_{K_{r}}$, we have $V'_{c}V_{c}=I_{K}$ when $K_{r}=K_{c}=K$. Then we have
\begin{align}\label{EqV}
V'_{c}V_{c}=I_{K}\Rightarrow V'_{c}V_{c}V'_{c}=V'_{c}\Rightarrow V'_{c}(V_{c}V'_{c}-I_{K})=0\overset{\mathrm{rank}(V_{c})=K}{\Rightarrow}V_{c}V'_{c}=I_{K}.
\end{align}
Since  $V_{c}V'_{c}=V'_{c}V_{c}=I_{K}$, we have $U_{c,*}(i_{c},:)=V_{c}(\ell(i_{c}),:)$ by Eq (\ref{Uex}), and $\|U_{c,*}(i_{c},:)-U_{c,*}(\bar{i}_{c},:)\|_{F}=\|V_{c}(\ell(i_{c}),:)-V_{c}(\ell(\bar{i}_{c}),:)\|_{F}=\sqrt{2}$ when $\ell(i_{c})\neq \ell(\bar{i}_{c})$ for $i_{c},\bar{i}_{c}\in[n_{c}]$, i.e., $\|B_{c}(k,:)-B_{c}(l,:)\|_{F}=\sqrt{2}$ for $k\neq l\in[K]$.

Note that, when $K_{r}<K_{c}$, since $\mathrm{rank}(V_{c})=K_{r}$ and $V_{c}\in\mathbb{R}^{K_{c}\times K_{r}}$, the inverse of $V_{c}$ does not exist, which causes that the last equality in Eq (\ref{EqV}) does not hold and $\|B_{c}(k,:)-B_{c}(\ell,:)\|\neq \sqrt{2}$ for all $k\neq l\in[K_{c}]$.
\end{itemize}
\end{proof}
\subsection{Proof of Theorem \ref{MainODCNM}}
\begin{proof}
For row nodes, when conditions in Lemma \ref{rowwiseerrorODCNM} hold, by Theorem 2 of \cite{qing2021DiMMSB}, we have
\begin{align*}	\mathrm{max}_{i_{r}\in[n_{r}]}\|e'_{i_{r}}(\hat{\Pi}_{r}-\Pi_{r}\mathcal{P}_{r})\|_{1}=O(\varpi\kappa(\Pi'_{r}\Pi_{r})K_{r}\sqrt{\lambda_{1}(\Pi'_{r}\Pi_{r})}).
\end{align*}
Next, we focus on column nodes. By the proof of Lemma 2.3 of \cite{qing2021consistency}, there exists an orthogonal matrix $\hat{O}$ such that
\begin{align}\label{UC1ODCNM}
\|\hat{U}_{c}\hat{O}-U_{c}\|_{F}\leq \frac{2\sqrt{2K_{r}}\|A-\Omega\|}{\sqrt{\lambda_{K_{r}}(\Omega'\Omega)}}.
\end{align}
Under $ODCNM_{n_{r},n_{c}}(K_{r}, K_{c}, P,\Pi_{r}, \Pi_{c},\Theta_{c})$, by Lemma 4 of \cite{qing2021DiDCMM}, we have
\begin{align}\label{UC2ODCNM}
\sqrt{\lambda_{K_{r}}(\Omega'\Omega)}\geq \theta_{c,\mathrm{min}}\sigma_{K_{r}}(P)\sigma_{K_{r}}(\Pi_{r})\sqrt{n_{c,K_{r}}}.
\end{align}
By Lemma 4.2 of \cite{qing2021DiDCMM}, when Assumption (\ref{a2}) holds, with probability at least $1-o((n_{r}+n_{c})^{-\alpha})$, we have
\begin{align}\label{UC3ODCNM}
\|A-\Omega\|=O(\sqrt{\mathrm{max}(\theta_{c,\mathrm{max}}n_{r},\|\theta_{c}\|_{1})\mathrm{log}(n_{r}+n_{c})}).
\end{align}
Substitute the two bounds in Eqs (\ref{UC2ODCNM}) and (\ref{UC3ODCNM}) into Eq (\ref{UC1ODCNM}), we have
\begin{align}\label{UC4ODCNM}
\|\hat{U}_{c}\hat{O}-U_{c}\|_{F}\leq C\frac{\sqrt{K_{r}\mathrm{max}(
\theta_{c,\mathrm{max}}n_{r},\|\theta_{c}\|_{1})\mathrm{log}(n_{r}+n_{c})}}{\sigma_{K_{r}}(P)\theta_{c,\mathrm{min}}\sigma_{K_{r}}(\Pi_{r})\sqrt{n_{c,K_{r}}}}.
\end{align}
For $i_{c}\in[n_{c}]$, by basic algebra, we have
\begin{align*}
\|\hat{U}_{c,*}(i_{c},:)\hat{O}-U_{c,*}(i_{c},:)\|_{F}\leq \frac{2\|\hat{U}_{c}(i_{c},:)\hat{O}-U_{c}(i_{c},:)\|_{F}}{\|U_{c}(i_{c},:)\|_{F}}.
\end{align*}
Set $m_{c}=\mathrm{min}_{1\leq i_{c}\leq n_{c}}\|U_{c}(i_{c},:)\|_{F}$, we have
\begin{align*}
\|\hat{U}_{c,*}\hat{O}-U_{c,*}\|_{F}=\sqrt{\sum_{i_{c}=1}^{n_{c}}\|\hat{U}_{c,*}(i_{c},:)\hat{O}-U_{c,*}(i_{c},:)\|^{2}_{F}}\leq \frac{2\|\hat{U}_{c}\hat{O}-U_{c}\|_{F}}{m_{c}}.
\end{align*}
Next, we provide lower bounds of $m_{c}$. By the proof of Lemma \ref{RKODCNM}, we have
\begin{align*}
\|U_{c}(i_{c},:)\|_{F}&=\|\frac{\theta_{c}(i_{c})}{\|\Theta_{c}\Pi_{c}(:,\ell(i_{c}))\|_{F}}V_{c}(\ell(i_{c}),:)\|_{F}=\frac{\theta_{c}(i_{c})}{\|\Theta_{c}\Pi_{c}(:,\ell(i_{c}))\|_{F}}\|V_{c}(\ell(i_{c}),:)\|_{F}\\
&=\frac{\theta_{c}(i_{c})}{\|\Theta_{c}\Pi_{c}(:,\ell(i_{c}))\|_{F}}\geq\frac{\theta_{c,\mathrm{min}}}{\theta_{c,\mathrm{max}}\sqrt{n_{c,\mathrm{max}}}}m_{V_{c}},
\end{align*}
where we set $m_{V_{c}}=\mathrm{min}_{k\in[K_{c}]}\|V_{c}(k,:)\|_{F}$. Note that when $K_{r}=K_{c}=K$, by the proof of Lemma \ref{RKODCNM}, we know that $V_{c}V'_{c}=I_{K}$, which gives that $\|V_{c}(k,:)\|_{F}=1$ for $k\in[K]$, i.e., $m_{V_{c}}=1$ when $K_{r}=K_{c}=K$. However, when $K_{r}<K_{c}$, it is challenge to obtain a positive lower bound of $m_{V_{c}}$.
Hence, we have $\frac{1}{m_{c}}\leq \frac{\theta_{c,\mathrm{max}}\sqrt{n_{c,\mathrm{max}}}}{\theta_{c,\mathrm{min}}m_{V_{c}}}$.
Then, by Eq (\ref{UC4ODCNM}), we have
\begin{align*}
\|\hat{U}_{c,*}\hat{O}-U_{c,*}\|_{F}=
O(\frac{\theta_{c,\mathrm{max}}\sqrt{K_{r}\mathrm{max}(
\theta_{c,\mathrm{max}}n_{r},\|\theta_{c}\|_{1})n_{c,\mathrm{max}}\mathrm{log}(n_{r}+n_{c})}}{\sigma_{K_{r}}(P)\theta^{2}_{c,\mathrm{min}}m_{V_{c}}\sigma_{K_{r}}(\Pi_{r})\sqrt{n_{c,K_{r}}}}).
\end{align*}

Let $\varsigma>0$ be a small quantity, by Lemma 2 in \cite{joseph2016impact}, if
\begin{align}\label{holdcODCNM}
\frac{\sqrt{K_{c}}}{\varsigma}\|U_{c,*}-\hat{U}_{c,*}\hat{O}\|_{F}(\frac{1}{\sqrt{n_{c,k}}}+\frac{1}{\sqrt{n_{c,l}}})\leq \|B_{c}(k,:)-B_{c}(l,:)\|_{F}, \mathrm{~for~each~}1\leq k\neq l\leq K_{c},
\end{align}
then the clustering error $\hat{f}_{c}=O(\varsigma^{2})$. Setting $\varsigma=\frac{2}{\delta_{c}}\sqrt{\frac{K_{c}}{n_{c,\mathrm{min}}}}\|U_{c,*}-\hat{U}_{c,*}\hat{O}\|_{F}$ makes Eq (\ref{holdcODCNM}) hold for all $1\leq k\neq l\leq K_{c}$. Then we have $\hat{f}_{c}=O(\varsigma^{2})=O(\frac{K_{c}\|U_{c,*}-\hat{U}_{c,*}\hat{O}\|^{2}_{F}}{\delta^{2}_{c}n_{c,\mathrm{min}}})$. By Eq (\ref{UC4ODCNM}), we have
\begin{align*}
\hat{f}_{c}=
O(\frac{\theta^{2}_{c,\mathrm{max}}K_{r}K_{c}\mathrm{max}(
\theta_{c,\mathrm{max}}n_{r},\|\theta_{c}\|_{1})n_{c,\mathrm{max}}\mathrm{log}(n_{r}+n_{c})}{\sigma^{2}_{K_{r}}(P)\theta^{4}_{c,\mathrm{min}}\delta^{2}_{c}m^{2}_{V_{c}}\sigma^{2}_{K_{r}}(\Pi_{r})n_{c,K_{r}}n_{c,\mathrm{min}}}).
\end{align*}
Especially, when $K_{r}=K_{c}=K$, $\delta_{c}=\sqrt{2}$ under $ODCNM_{n_{r},n_{c}}(K_{r},K_{c}, P, \Pi_{r}, \Pi_{c},\Theta_{c})$ by Lemma \ref{RKODCNM}, and $m_{V_{c}}=1$. When $K_{r}=K_{c}=K$, we have
\begin{align*}
\hat{f}_{c}=O(\frac{\theta^{2}_{c,\mathrm{max}}K^{2}\mathrm{max}(
\theta_{c,\mathrm{max}}n_{r},\|\theta_{c}\|_{1})n_{c,\mathrm{max}}\mathrm{log}(n_{r}+n_{c})}{\sigma^{2}_{K}(P)\theta^{4}_{c,\mathrm{min}}\sigma^{2}_{K}(\Pi_{r})n^{2}_{c,\mathrm{min}}}).
\end{align*}
\end{proof}
\subsection{Proof of Corollary \ref{AddConditionsODCNM}}
\begin{proof}
For row nodes, under conditions of Corollary \ref{AddConditionsODCNM}, we have
\begin{align*}
\mathrm{max}_{i_{r}\in[n_{r}]}\|e'_{i_{r}}(\hat{\Pi}_{r}-\Pi_{r}\mathcal{P}_{r})\|_{1}=O(\varpi K_{r}\sqrt{\frac{n_{r}}{K_{r}}})=O(\varpi\sqrt{Kn_{r}}).
\end{align*}
Under  conditions of Corollary \ref{AddConditionsODCNM}, $\kappa(\Omega)=O(1)$ and $\mu\leq C\frac{\theta^{2}_{c,\mathrm{max}}}{\theta^{2}_{c,\mathrm{min}}}\leq C$ for some $C>0$ by Lemma 2 of \cite{qing2021DiDCMM}. Then, by Lemma \ref{rowwiseerrorODCNM}, we have
\begin{align*}
\varpi&=O(\frac{\sqrt{\theta_{c,\mathrm{max}}K_{r}}(\kappa(\Omega)\sqrt{\frac{\mathrm{max}(n_{r},n_{c})\mu}{\mathrm{min}(n_{r},n_{c})}}+\sqrt{\mathrm{log}(n_{r}+n_{c})})}{\theta_{c,\mathrm{min}}\sigma_{K_{r}}(P)\sigma_{K_{r}}(\Pi_{r})\sqrt{n_{c,K_{r}}}})\\
&=O(\frac{\sqrt{\theta_{c,\mathrm{max}}K}(\kappa(\Omega)\sqrt{\frac{\mathrm{max}(n_{r},n_{c})\mu}{\mathrm{min}(n_{r},n_{c})}}+\sqrt{\mathrm{log}(n_{r}+n_{c})})}{\theta_{c,\mathrm{min}}\sigma_{K}(P)\sigma_{K}(\Pi_{r})\sqrt{n_{c,\mathrm{min}}}})\\
&=O(\frac{K^{1.5}\sqrt{\theta_{c,\mathrm{max}}}(\sqrt{\frac{C\mathrm{max}(n_{r},n_{c})}{\mathrm{min}(n_{r},n_{c})}}+\sqrt{\mathrm{log}(n_{r}+n_{c})})}{\theta_{c,\mathrm{min}}\sigma_{K}(P)\sqrt{n_{r}n_{c}}}),
\end{align*}
which gives that
\begin{align*}
\mathrm{max}_{i_{r}\in[n_{r}]}\|e'_{i_{r}}(\hat{\Pi}_{r}-\Pi_{r}\mathcal{P}_{r})\|_{1}=O(\frac{K^{2}\sqrt{\theta_{c,\mathrm{max}}}(\sqrt{\frac{C\mathrm{max}(n_{r},n_{c})}{\mathrm{min}(n_{r},n_{c})}}+\sqrt{\mathrm{log}(n_{r}+n_{c})})}{\theta_{c,\mathrm{min}}\sigma_{K}(P)\sqrt{n_{c}}}).
\end{align*}
The reason that we do not consider the case when $K_{r}<K_{c}$ for row nodes is similar as that of Corollary \ref{AddConditions}, and we omit it here.

For column nodes, under conditions of Corollary \ref{AddConditionsODCNM}, we have
  \begin{align*}
\hat{f}_{c}&=O(\frac{\theta^{2}_{c,\mathrm{max}}K_{r}K_{c}\mathrm{max}(
\theta_{c,\mathrm{max}}n_{r},\|\theta_{c}\|_{1})n_{c,\mathrm{max}}\mathrm{log}(n_{r}+n_{c})}{\sigma^{2}_{K_{r}}(P)\theta^{4}_{c,\mathrm{min}}\delta^{2}_{c}m^{2}_{V_{c}}\sigma^{2}_{K_{r}}(\Pi_{r})n_{c,K_{r}}n_{c,\mathrm{min}}})\\
&=O(\frac{\theta^{2}_{c,\mathrm{max}}K^{2}_{r}K^{2}_{c}\mathrm{max}(
\theta_{c,\mathrm{max}}n_{r},\|\theta_{c}\|_{1})\mathrm{log}(n_{r}+n_{c})}{\sigma^{2}_{K_{r}}(P)\theta^{4}_{c,\mathrm{min}}\delta^{2}_{c}m^{2}_{V_{c}}n_{r}n_{c}}).
  \end{align*}
For the case $K_{r}=K_{c}=K$, we have
\begin{align*}
\hat{f}_{c}&=O(\frac{\theta^{2}_{c,\mathrm{max}}K^{2}\mathrm{max}(
\theta_{c,\mathrm{max}}n_{r},\|\theta_{c}\|_{1})n_{c,\mathrm{max}}\mathrm{log}(n_{r}+n_{c})}{\sigma^{2}_{K}(P)\theta^{4}_{c,\mathrm{min}}\sigma^{2}_{K}(\Pi_{r})n^{2}_{c,\mathrm{min}}})\\
&=O(\frac{\theta^{2}_{c,\mathrm{max}}K^{4}\mathrm{max}(
\theta_{c,\mathrm{max}}n_{r},\|\theta_{c}\|_{1})\mathrm{log}(n_{r}+n_{c})}{\sigma^{2}_{K}(P)\theta^{4}_{c,\mathrm{min}}n_{r}n_{c}}).
\end{align*}
When $n_{r}=O(n), n_{c}=O(n), K_{r}=O(1)$ and $K_{c}=O(1)$, the corollary follows immediately by basic algebra.
\end{proof}
\section{The degee-corrected overlapping and nonoverlapping model}\label{DCOOOOOONM}
Here, we extend ONM by introducing degree heterogeneities for row nodes with overlapping property in the directed network $\mathcal{N}$. Let $\theta_{r}$ be an $n_{r}\times 1$ vector whose $i_{r}$-th entry is the degree heterogeneity of row node $i_{r}$, for $i_{r}\in[n_{r}]$. Let $\Theta_{r}$ be an $n_{r}\times n_{r}$ diagonal matrix whose $i_{r}$-th diagonal element is $\theta_{r}(i_{r})$. The extended model for generating $A$ is as follows:
\begin{align}\label{ONMDCONM}
\Omega:=\Theta_{r}\Pi_{r}P\Pi'_{c},~~~A(i_{r},i_{c})\sim\mathrm{Bernoulli}(\Omega(i_{r},i_{c}))\qquad \mathrm{for~}i_{r}\in[n_{r}],i_{c}\in[n_{c}].
\end{align}
\begin{defin}
Call model (\ref{Pir}), (\ref{Pic}), (\ref{krkc}),(\ref{definP}), (\ref{ONMDCONM}) the Degree-Corrected Overlapping and Nonoverlapping model (DCONM) and denote it by $DCONM_{n_{r},n_{c}}(K_{r},K_{c}, P, \Pi_{r}, \Pi_{c},\Theta_{r})$.
\end{defin}
The following conditions are sufficient for the identifiability of DCONM:
\begin{itemize}
  \item (II1) $\mathrm{rank}(P)=K_{r}, \mathrm{rank}(\Pi_{r})=K_{r}, \mathrm{rank}(\Pi_{c})=K_{c}$, and $P(k,k)=1$ for $k\in[K_{r}]$.
  \item (II2) There is at least one pure row node for each of the $K_{r}$ row communities.
\end{itemize}
For degree-corrected overlapping models, it is popular to require that $P$ has unit-``diagonal'' elements for model identifiability, see the model identifiability requirements on the DCMM model of \cite{MixedSCORE} and the OCCAM model of \cite{OCCAM}.
Follow similar proof as that of Lemma \ref{RK}, we have the following lemma.
\begin{lem}\label{RKDCONM}
Under $DCONM_{n_{r},n_{c}}(K_{r},K_{c}, P, \Pi_{r}, \Pi_{c},\Theta_{r})$, there exist an unique $K_{r}\times K_{r}$ matrix $B_{r}$ and an unique $K_{c}\times K_{r}$ matrix $B_{c}$ such that
\begin{itemize}
\item $U_{r}=\Theta_{r}\Pi_{r}B_{r}$ where $B_{r}=\Theta^{-1}_{r}(\mathcal{I}_{r},\mathcal{I}_{r})U_{r}(\mathcal{I}_{r},:)$.
  \item $U_{c}=\Pi_{c}B_{c}$. Meanwhile, $U_{c}(i_{c},:)=U_{c}(\bar{i}_{c},:)$ when $\ell(i_{c})=\ell(\bar{i}_{c})$ for $i_{c},\bar{i}_{c}\in[n_{c}]$, i.e., $U_{c}$ has $K_{c}$ distinct rows. Furthermore, when $K_{r}=K_{c}=K$, we have $\|B_{c}(k,:)-B_{c}(l,:)\|_{F}=\sqrt{\frac{1}{n_{c,k}}+\frac{1}{n_{c,l}}}$ for all $1\leq k<l\leq K$.
\end{itemize}
\end{lem}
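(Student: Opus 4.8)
The plan is to mirror the proof of Lemma \ref{RK}, the only new ingredient being the degree matrix $\Theta_{r}$ that now sits on the row side of $\Omega$. Starting from the compact SVD $\Omega=U_{r}\Lambda U'_{c}$ together with $U'_{c}U_{c}=I_{K_{r}}$, I would first write $U_{r}=\Omega U_{c}\Lambda^{-1}$ and substitute $\Omega=\Theta_{r}\Pi_{r}P\Pi'_{c}$ to obtain $U_{r}=\Theta_{r}\Pi_{r}(P\Pi'_{c}U_{c}\Lambda^{-1})=\Theta_{r}\Pi_{r}B_{r}$ with $B_{r}:=P\Pi'_{c}U_{c}\Lambda^{-1}\in\mathbb{R}^{K_{r}\times K_{r}}$. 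Because $\mathrm{rank}(\Pi_{r})=K_{r}$ under (II1) and $\Theta_{r}$ is diagonal with positive entries, the matrix $\Theta_{r}\Pi_{r}$ has full column rank, so $B_{r}$ is uniquely determined by this identity.

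Next I would pin down the explicit form $B_{r}=\Theta^{-1}_{r}(\mathcal{I}_{r},\mathcal{I}_{r})U_{r}(\mathcal{I}_{r},:)$. Restricting the identity $U_{r}=\Theta_{r}\Pi_{r}B_{r}$ to the pure-node index set $\mathcal{I}_{r}$ and using the normalization $\Pi_{r}(\mathcal{I}_{r},:)=I_{K_{r}}$ gives $U_{r}(\mathcal{I}_{r},:)=\Theta_{r}(\mathcal{I}_{r},\mathcal{I}_{r})B_{r}$; since the diagonal block $\Theta_{r}(\mathcal{I}_{r},\mathcal{I}_{r})$ is invertible, solving for $B_{r}$ yields the claimed expression. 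In contrast to Lemma \ref{RK}, here $U_{r}(i_{r},:)=\theta_{r}(i_{r})\,\Pi_{r}(i_{r},:)B_{r}$, so two row nodes with equal memberships need not share the same row of $U_{r}$ unless their degrees agree; accordingly the lemma only asserts the factorization, not row equality, on the row side.

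For the column side I would argue symmetrically: from $\Omega'=U_{c}\Lambda U'_{r}$ and $U'_{r}U_{r}=I_{K_{r}}$ I get $U_{c}=\Omega'U_{r}\Lambda^{-1}=\Pi_{c}(P'\Pi'_{r}\Theta_{r}U_{r}\Lambda^{-1})=\Pi_{c}B_{c}$ with $B_{c}:=P'\Pi'_{r}\Theta_{r}U_{r}\Lambda^{-1}\in\mathbb{R}^{K_{c}\times K_{r}}$, unique because $\mathrm{rank}(\Pi_{c})=K_{c}$. Since $\Pi_{c}(i_{c},:)=e'_{\ell(i_{c})}$, we have $U_{c}(i_{c},:)=B_{c}(\ell(i_{c}),:)$, which immediately gives $U_{c}(i_{c},:)=U_{c}(\bar{i}_{c},:)$ whenever $\ell(i_{c})=\ell(\bar{i}_{c})$, and hence $U_{c}$ has exactly $K_{c}$ distinct rows. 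Observe that $\Theta_{r}$ never enters the column structure, so this part is identical to the ONM case; in particular no row-normalization is needed (unlike the $\Theta_{c}$ analysis in Lemma \ref{RKODCNM}).

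Finally, for the refinement when $K_{r}=K_{c}=K$, I would exploit that $B_{c}$ is now square. Orthonormality $U'_{c}U_{c}=I_{K}$ reads $B'_{c}(\Pi'_{c}\Pi_{c})B_{c}=I_{K}$, and since $\Pi'_{c}\Pi_{c}=\mathrm{diag}(n_{c,1},\ldots,n_{c,K})$, the matrix $\mathrm{diag}(\sqrt{n_{c,1}},\ldots,\sqrt{n_{c,K}})B_{c}$ is a square orthogonal matrix; taking the product in the reverse order then forces $B_{c}B'_{c}=\mathrm{diag}(1/n_{c,1},\ldots,1/n_{c,K})$. This shows the rows of $B_{c}$ are mutually orthogonal with $\|B_{c}(k,:)\|^{2}_{F}=1/n_{c,k}$, whence $\|B_{c}(k,:)-B_{c}(l,:)\|_{F}=\sqrt{1/n_{c,k}+1/n_{c,l}}$. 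The main obstacle is precisely this squareness step: the passage from $B'_{c}DB_{c}=I$ to $B_{c}B'_{c}=D^{-1}$ requires $B_{c}$ to be invertible, which holds only when $K_{r}=K_{c}$; when $K_{r}<K_{c}$ the matrix $B_{c}$ is rectangular and the pairwise-distance formula, together with any positive lower bound on the center separation, is lost, exactly as in Lemma \ref{RK}.
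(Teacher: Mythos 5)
Your proposal is correct and follows essentially the same route as the paper: the paper does not write out a separate proof for Lemma \ref{RKDCONM} but simply states that it follows the proof of Lemma \ref{RK}, and your argument is exactly that adaptation — absorbing $\Theta_{r}$ into the row factorization $U_{r}=\Theta_{r}\Pi_{r}B_{r}$ with $B_{r}=\Theta_{r}^{-1}(\mathcal{I}_{r},\mathcal{I}_{r})U_{r}(\mathcal{I}_{r},:)$, noting that $\Theta_{r}$ merely enters $B_{c}=P'\Pi'_{r}\Theta_{r}U_{r}\Lambda^{-1}$ on the column side, and recovering $B_{c}B'_{c}=\mathrm{diag}(1/n_{c,1},\ldots,1/n_{c,K})$ from $B'_{c}(\Pi'_{c}\Pi_{c})B_{c}=I_{K}$ when $K_{r}=K_{c}=K$. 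Your observation that squareness of $B_{c}$ (equivalently of $D^{1/2}B_{c}$) is what licenses passing to $B_{c}B'_{c}=D^{-1}$, and that this breaks when $K_{r}<K_{c}$, matches the paper's own remarks.
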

The following proposition guarantees the identifiability of DCONM.
\begin{prop}\label{idDCONM}
	If conditions (II1) and (II2) hold, DCONM is identifiable: For eligible $(P,\Pi_{r}, \Pi_{c},\Theta_{r})$ and $(\check{P},\check{\Pi}_{r}, \check{\Pi}_{c},\check{\Theta}_{r})$, if $\Theta_{r}\Pi_{r}P\Pi'_{c}=\check{\Theta}_{r}\check{\Pi}_{r}\check{P}\check{\Pi}'_{c}$, then  $P=\check{P}, \Pi_{r}=\check{\Pi}_{r}, \Pi_{c}=\check{\Pi}_{c}, \Theta_{r}=\check{\Theta}_{r}$.
\end{prop}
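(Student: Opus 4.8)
The plan is to mirror the proof of Proposition \ref{id}, isolating the one genuinely new difficulty — the coupling between $\Theta_{r}$, $\Pi_{r}$ and the scaling of $P$ — and showing that it is resolved exactly by condition (II1) (unit diagonal of $P$) together with (II2) (existence of pure row nodes). Since $\Theta_{r}\Pi_{r}P\Pi'_{c}=\check{\Theta}_{r}\check{\Pi}_{r}\check{P}\check{\Pi}'_{c}=\Omega$, both eligible tuples produce the \emph{same} $\Omega$, and (by (II1)) $\mathrm{rank}(\Omega)=K_{r}$, so they share a common compact SVD $\Omega=U_{r}\Lambda U'_{c}$. Fixing this SVD, Lemma \ref{RKDCONM} applies to both parametrizations, giving $U_{r}=\Theta_{r}\Pi_{r}B_{r}=\check{\Theta}_{r}\check{\Pi}_{r}\check{B}_{r}$ and $U_{c}=\Pi_{c}B_{c}=\check{\Pi}_{c}\check{B}_{c}$, where $U_{c}$ has exactly $K_{c}$ distinct rows. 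The column side is then routine: two column nodes $i_{c},\bar{i}_{c}$ satisfy $U_{c}(i_{c},:)=U_{c}(\bar{i}_{c},:)$ iff $\ell(i_{c})=\ell(\bar{i}_{c})$, so $\Pi_{c}$ and $\check{\Pi}_{c}$ induce the same partition of $[n_{c}]$; using the anchoring convention $\Pi_{c}(\mathcal{I}_{c},:)=I_{K_{c}}$ exactly as in Proposition \ref{id}, this yields $\Pi_{c}=\check{\Pi}_{c}$ (and $B_{c}=\check{B}_{c}$).

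Next I would treat the row side. By Lemma \ref{RKDCONM}, for a pure row node $i_{r}$ of community $k$ one has $U_{r}(i_{r},:)=\theta_{r}(i_{r})B_{r}(k,:)$, while a mixed node maps to a nonnegative combination of the $B_{r}(k,:)$; hence the rows of $U_{r}$ lie in the conical hull of $\{B_{r}(k,:)\}_{k=1}^{K_{r}}$, whose $K_{r}$ extreme rays are attained precisely at pure nodes (which exist by (II2)). This geometric structure is determined by $\Omega$, so it identifies the pure-row index set $\mathcal{I}_{r}$ with $\Pi_{r}(\mathcal{I}_{r},:)=I_{K_{r}}$. Now I recover $P$ and the pure-node degrees: since $\Pi_{r}(\mathcal{I}_{r},:)=I_{K_{r}}$ and $\Pi_{c}(\mathcal{I}_{c},:)=I_{K_{c}}$, a direct computation gives $\Omega(\mathcal{I}_{r},\mathcal{I}_{c})=\mathrm{diag}(\theta_{r}(\mathcal{I}_{r}))\,P$, so its diagonal entries equal $\theta_{r}(\mathcal{I}_{r}(k))P(k,k)=\theta_{r}(\mathcal{I}_{r}(k))$ by $P(k,k)=1$. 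This reads off $\theta_{r}(\mathcal{I}_{r})$ and then $P=\mathrm{diag}(\theta_{r}(\mathcal{I}_{r}))^{-1}\Omega(\mathcal{I}_{r},\mathcal{I}_{c})$, whence $P=\check{P}$.

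With $P=\check{P}$ and $\Pi_{c}=\check{\Pi}_{c}$ in hand, I would finish by a right-inversion. By (II1), $P$ has full row rank $K_{r}$ and $\Pi'_{c}$ has full row rank $K_{c}\geq K_{r}$, so $P\Pi'_{c}\in\mathbb{R}^{K_{r}\times n_{c}}$ has rank $K_{r}$ and admits the right inverse $(P\Pi'_{c})^{+}=(P\Pi'_{c})'\big((P\Pi'_{c})(P\Pi'_{c})'\big)^{-1}$. Then $\Theta_{r}\Pi_{r}=\Omega(P\Pi'_{c})^{+}$ is uniquely determined by $\Omega$, i.e. $\Theta_{r}\Pi_{r}=\check{\Theta}_{r}\check{\Pi}_{r}$. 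Since $\Pi_{r}$ has nonnegative rows summing to $1$ and $\theta_{r}(i_{r})>0$, the $i_{r}$-th row of $\Theta_{r}\Pi_{r}$ has $\ell_{1}$-norm $\theta_{r}(i_{r})$, so $\theta_{r}(i_{r})=\|(\Theta_{r}\Pi_{r})(i_{r},:)\|_{1}$ and $\Pi_{r}(i_{r},:)=(\Theta_{r}\Pi_{r})(i_{r},:)/\theta_{r}(i_{r})$; applying this normalization to the common matrix $\Theta_{r}\Pi_{r}$ gives $\Theta_{r}=\check{\Theta}_{r}$ and $\Pi_{r}=\check{\Pi}_{r}$, completing the proof.

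\textbf{Main obstacle.} The crux is exactly the separation of $\Theta_{r}$, $\Pi_{r}$ and the overall scale of $P$ on the row side, since a priori one may trade a diagonal rescaling of $\Theta_{r}$ against rescaling the rows of $P$. The decisive ingredients are the normalization $P(k,k)=1$ and the pure row nodes of (II2): together they force $\Omega(\mathcal{I}_{r},\mathcal{I}_{c})=\mathrm{diag}(\theta_{r}(\mathcal{I}_{r}))P$ to have diagonal equal to $\theta_{r}(\mathcal{I}_{r})$, pinning the scale. The column-side argument and the final $\ell_{1}$-normalization are the same mild steps as in Proposition \ref{id} and in the identifiability of DCMM/OCCAM.
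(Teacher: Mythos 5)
Your proposal is correct and follows essentially the paper's own route: both recover $\Pi_{c}$ from Lemma \ref{RKDCONM}, both evaluate $\Omega(\mathcal{I}_{r},\mathcal{I}_{c})=\Theta_{r}(\mathcal{I}_{r},\mathcal{I}_{r})P$ and use the unit-diagonal condition $P(k,k)=1$ to read off $\theta_{r}(\mathcal{I}_{r})$ and conclude $P=\check{P}$, and both finish by splitting the common matrix $\Theta_{r}\Pi_{r}=\check{\Theta}_{r}\check{\Pi}_{r}$ through row-wise $\ell_{1}$ normalization. The only deviations are tactical: you obtain $\Theta_{r}\Pi_{r}=\check{\Theta}_{r}\check{\Pi}_{r}$ by right-inverting $P\Pi'_{c}$, whereas the paper cancels the nonsingular matrix $U_{r}(\mathcal{I}_{r},:)$ in $U_{r}=\Theta_{r}\Pi_{r}\Theta^{-1}_{r}(\mathcal{I}_{r},\mathcal{I}_{r})U_{r}(\mathcal{I}_{r},:)$ (equivalent in substance, since $B_{r}=P\Pi'_{c}U_{c}\Lambda^{-1}$), and your conical-hull step makes explicit a point the paper leaves tacit, namely that the pure-row index set is determined by $\Omega$ alone, which is what licenses taking $\check{\Pi}_{r}(\mathcal{I}_{r},:)=I_{K_{r}}$ for both parametrizations.
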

\begin{proof}
By Lemma \ref{RKDCONM}, since $U_{c}=\Pi_{c}B_{c}=\Pi_{c}U_{c}(\mathcal{I}_{c},:)=\check{\Pi}_{c}U_{c}(\mathcal{I}_{c},:)$, we have $\Pi_{c}=\check{\Pi}_{c}$. Since $\Omega(\mathcal{I}_{r},\mathcal{I}_{c})=\Theta_{r}(\mathcal{I}_{r},\mathcal{I}_{r})\Pi_{r}(\mathcal{I}_{r},:)P\Pi'_{c}(\mathcal{I}_{c},:)=\Theta_{r}(\mathcal{I}_{r},\mathcal{I}_{r})P=U_{r}(\mathcal{I}_{r},:)\Lambda U'_{c}(\mathcal{I}_{c},:)$, we have $\Theta_{r}(\mathcal{I}_{r},\mathcal{I}_{r})=\mathrm{diag}(U_{r}(\mathcal{I}_{r},:)\Lambda U'_{c}(\mathcal{I}_{c},:))$ by the condition that $P(k,k)=1$ for $k\in[K_{r}]$. Therefore, we also have $\check{\Theta}_{r}(\mathcal{I}_{r},\mathcal{I}_{r})=\mathrm{diag}(U_{r}(\mathcal{I}_{r},:)\Lambda U'_{c}(\mathcal{I}_{c},:))$, which gives that $\Theta_{r}(\mathcal{I}_{r},\mathcal{I}_{r})=\check{\Theta}_{r}(\mathcal{I}_{r},\mathcal{I}_{r})$. Since $\check{\Theta}_{r}(\mathcal{I}_{r},\mathcal{I}_{r})\check{P}=U_{r}(\mathcal{I}_{r},:)\Lambda U'_{c}(\mathcal{I}_{c},:)=\Theta_{r}(\mathcal{I}_{r},\mathcal{I}_{r})P$, we have $P=\check{P}$. By Lemma \ref{RKDCONM}, since $U_{r}=\Theta_{r}\Pi_{r}\Theta^{-1}_{r}(\mathcal{I}_{r},\mathcal{I}_{r})U_{r}(\mathcal{I}_{r},:)=\check{\Theta}_{r}\check{\Pi}_{r}\check{\Theta}^{-1}_{r}(\mathcal{I}_{r},\mathcal{I}_{r})U_{r}(\mathcal{I}_{r},:)=\check{\Theta}_{r}\check{\Pi}_{r}\Theta^{-1}_{r}(\mathcal{I}_{r},\mathcal{I}_{r})U_{r}(\mathcal{I}_{r},:)$ and $U_{r}(\mathcal{I}_{r},:)$ is an nonsingular matrix, we have $\Theta_{r}\Pi_{r}=\check{\Theta}_{r}\check{\Pi}_{r}$. Since $\|\Pi_{r}(i_{r},:)\|_{1}=\|\check{\Pi}_{r}(i_{r},:)\|_{1}=1$ for $i_{r}\in[n_{r}]$, we have $\Pi_{r}=\check{\Pi}_{r}$ and $\Theta_{r}=\check{\Theta}_{r}$.
\begin{rem}\label{Why}
(The reason that we do not introduce a model as an extension of ONM by considering degree heterogeneities for both row and column nodes) Suppose we propose an extension model (call it nontrivial-extension-of-ONM, and ne-ONM for short) of ONM such that $\mathbb{E}[A]=\Omega=\Theta_{r}\Pi_{r}P\Pi'_{c}\Theta_{c}$. For model identifiability, we see that if ne-ONM is identifiable, the following should holds: when $\Omega=\Theta_{r}\Pi_{r}P\Pi'_{c}\Theta_{c}=\check{\Theta}_{r}\check{\Pi}_{r}\check{P}\check{\Pi}'_{c}\check{\Theta}_{c}$, we have $\Theta_{r}=\check{\Theta}_{r}, \Pi_{r}=\check{\Pi}_{r},P=\check{P}, \Pi_{c}=\check{\Pi}_{c}$ and $\Theta_{c}=\check{\Theta}_{c}$. Now we check the identifiability of ne-ONM. Follow proof of Lemma \ref{idDCONM}, since $\Omega(\mathcal{I}_{r},\mathcal{I}_{c})=\Theta_{r}(\mathcal{I}_{r},\mathcal{I}_{r})\Pi_{r}(\mathcal{I}_{r},:)P\Pi'_{c}(\mathcal{I}_{c},:)\Theta_{c}(\mathcal{I}_{c},\mathcal{I}_{c})=\Theta_{r}(\mathcal{I}_{r},\mathcal{I}_{r})P\Theta_{c}(\mathcal{I}_{c},\mathcal{I}_{c})=U_{r}(\mathcal{I}_{r},:)\Lambda U'_{c}(\mathcal{I}_{c},:)$, we have $\Theta_{r}(\mathcal{I}_{r},\mathcal{I}_{r})P=U_{r}(\mathcal{I}_{r},:)\Lambda U'_{c}(\mathcal{I}_{c},:)\Theta^{-1}_{c}(\mathcal{I}_{c},\mathcal{I}_{c})$. If we assume that $P(k,k)=1$ for $k\in[K_{r}]$, we have $\Theta_{r}(\mathcal{I}_{r},\mathcal{I}_{r})=\mathrm{diag}(U_{r}(\mathcal{I}_{r},:)\Lambda U'_{c}(\mathcal{I}_{c},:)\Theta^{-1}_{c}(\mathcal{I}_{c},\mathcal{I}_{c}))$. Similarly, we have $\check{\Theta}_{r}(\mathcal{I}_{r},\mathcal{I}_{r})=\mathrm{diag}(U_{r}(\mathcal{I}_{r},:)\Lambda U'_{c}(\mathcal{I}_{c},:)\check{\Theta}^{-1}_{c}(\mathcal{I}_{c},\mathcal{I}_{c}))$, and it is impossible to guarantee the uniqueness of $\Theta_{r}(\mathcal{I}_{r},\mathcal{I}_{r})$ such that $\check{\Theta}_{r}(\mathcal{I}_{r},\mathcal{I}_{r})$ unless we further assume that $\Theta_{c}(\mathcal{I}_{c},\mathcal{I}_{c})$ is a fixed matrix. However, when we fix $\Theta_{c}(\mathcal{I}_{c},\mathcal{I}_{c})$ such that ne-ONM is identifiable, ne-ONM is nontrivial due to the fact $\Theta_{c}(\mathcal{I}_{c},\mathcal{I}_{c})$ is fixed. And ne-ONM is trivial only when we set $\Theta_{c}(\mathcal{I}_{c},\mathcal{I}_{c})=I_{K_{c}}$, however, for such ne-ONM when $\Theta_{c}(\mathcal{I}_{c},\mathcal{I}_{c})=I_{K_{c}}$, ne-ONM is DCONM actually. The above analysis proposes the reason that why we do not extend ONM by considering $\Theta_{r}$ and $\Theta_{c}$ simultaneously.
\end{rem}
Follow similar idea as \cite{qing2021DiDCMM}, we can design spectral algorithm with consistent estimation to fit DCONM. Compared with ONM and ODCNM, the identifiability requirement of DCONM on $P$ is too strict such that DCONM only model directed network generated from $P$ with diagonal ``unit'' elements, and this is the reason we do not provide DCONM in the main text and propose further algorithmic study as well as theoretical study for it. .
\end{proof}
\vskip 0.2in
\bibliography{refonm}
\end{document}